\documentclass[reqno]{amsart}

\usepackage{amssymb,latexsym,amsfonts,amsmath}
\usepackage{graphicx}
\usepackage{tikz}
\usetikzlibrary{automata}
\usepackage{dsfont}
\usepackage{diagrams}

\newtheorem{theorem}{Theorem}[section]
\newtheorem{lemma}[theorem]{Lemma}

\newtheorem{corollary}[theorem]{Corollary}

\newtheorem{definition}[theorem]{Definition}
\newtheorem{example}[theorem]{Example}

\newtheorem{remark}[theorem]{Remark}
\numberwithin{equation}{section}


\topmargin  = 0.0 in
\leftmargin = 0.9 in
\rightmargin = 1.0 in
\evensidemargin = -0.10 in
\oddsidemargin =  0.10 in
\textheight = 8.5 in
\textwidth  = 6.6 in
\setlength{\parskip}{2mm}
\setlength{\parindent}{0mm}

\newcommand{\ve}{\varepsilon}

\newcommand{\q}{\mathsf q}

\newcommand{\boxspan}{\mathit{span}}
\newcommand{\params}{{\mathsf{q}}}

\newcommand{\R}{{\mathbb{R}}}

\newcommand{\Ze}{{\mathbb Z}}

\newcommand{\N}{{\mathbb{N}}}

\newcommand{\argmax}{\textrm{arg}\max}
\newcommand{\argmin}{\textrm{arg}\min}

\DeclareMathOperator{\diff}{d}
\newcommand{\ra}{\rightarrow}
\newcommand{\sigalg}{\mathcal{F}}
\newcommand{\filtration}{\mathds{F}}

\newcommand{\ul}{\underline}
\newcommand{\ol}{\overline}
\newcommand{\Let}{:=}
\newcommand{\EE}{\mathds{E}}
\newcommand{\PP}{\mathds{P}}
\newcommand{\traj}[3]{#1_{#2#3}}


\begin{document}

\begin{abstract}
Formal control synthesis approaches over stochastic systems have received significant attention in the past few years, in view of their ability to provide \emph{provably correct} controllers for complex \emph{logical} specifications in an \emph{automated} fashion. 
Examples of complex specifications of interest include properties expressed as formulae in linear temporal logic (LTL) or as automata on infinite strings. 
A general methodology to synthesize controllers for such properties resorts to \emph{symbolic abstractions} of the given stochastic systems. 
Symbolic models are discrete abstractions of the given concrete systems with the property that a controller designed on the abstraction can be refined (or implemented) into a controller on the original system. 
Although the recent development of techniques for the construction of symbolic models has been quite encouraging, 
the general goal of formal synthesis over stochastic control systems is by no means solved. 
A fundamental issue with the existing techniques is the known ``curse of dimensionality,'' 
which is due to the need to discretize state and input sets and that results in an exponential complexity over the number of state and input variables in the concrete system.  
In this work we propose a novel abstraction technique for incrementally stable stochastic control systems, 
which does not require state-space discretization but only input set discretization, 
and that can be potentially more efficient (and thus scalable) than existing approaches. 
We elucidate the effectiveness of the proposed approach by synthesizing a schedule for the coordination of two traffic lights under some safety and fairness requirements for a road traffic model. 
Further we argue that this 5-dimensional linear stochastic control system cannot be studied with existing approaches based on state-space discretization due to the very large number of generated discrete states.
\end{abstract}

\title[Towards Scalable Synthesis of Stochastic Control Systems]{Towards Scalable Synthesis of Stochastic Control Systems}

\author[M. Zamani]{Majid Zamani$^1$} 
\author[I. Tkachev]{Ilya Tkachev$^2$}
\author[A. Abate]{Alessandro Abate$^3$} 
\address{$^1$Department of Electrical and Computer Engineering, Technische Universit\"at M\"unchen, D-80290 Munich, Germany.}
\email{zamani@tum.de}
\urladdr{http://www.hcs.ei.tum.de}
\address{$^2$Delft Center for Systems and Control, Delft University of Technology, Mekelweg 2, 2628 CD, Delft, The Netherlands.}
\email{tkachev.ilya@gmail.com}
\urladdr{http://www.dcsc.tudelft.nl/$\sim$itkachev}
\address{$^3$Department of Computer Science, University of Oxford, Wolfson Building, Parks Road, Oxford, OX1 3QD, UK.}
\email{alessandro.abate@cs.ox.ac.uk}
\urladdr{https://www.cs.ox.ac.uk/people/alessandro.abate}

\maketitle

\section{Introduction}
\label{intro}
In the last decade many techniques have been developed providing controllers for control systems (both deterministic and, more recently, stochastic) in a \emph{formal} and \emph{automated} fashion against some complex \emph{logical} specifications. Examples of such specifications include properties expressed as formulae in linear temporal logic (LTL) or as automata on infinite strings \cite{katoen08}, and as such they are not tractable by classical techniques for control systems. A general scheme for providing such controllers is by leveraging \emph{symbolic models} of original concrete systems. Symbolic models are discrete abstractions of the original systems in which each symbol represents an aggregate of continuous variables. When such symbolic models exist for the concrete systems, one can leverage the algorithmic machinery for automated synthesis of discrete models \cite{InfGames,ComputeGames} to automatically synthesize discrete controllers which can be refined to hybrid controllers for the original systems.

The construction of symbolic models for \emph{continuous-time} non-probabilistic systems has been thoroughly investigated in the past few years. This includes results on the construction of approximately bisimilar symbolic
models for incrementally stable control systems \cite{majid4,pola}, switched systems \cite{girard2}, and control systems with disturbances \cite{pola1}, non-uniform abstractions of nonlinear systems over a finite-time horizon \cite{tazaki},
as well as the construction of sound abstractions based on the convexity of reachable sets \cite{gunther1}, feedback refinement relations \cite{ReissigWeberRungger15}, robustness margins \cite{liu1}, and for unstable control systems \cite{majid}. Recently, there have been some results on the construction of symbolic models for \emph{continuous-time} stochastic systems,
including the construction of finite Markov decision process approximations of linear stochastic control system, however without providing a quantitative relationship between
abstract and concrete model \cite{LAB09}, approximately bisimilar symbolic models for incrementally stable stochastic control systems \cite{majid8}, stochastic switched systems \cite{majid16}, and randomly switched stochastic systems \cite{majid11},
as well as sound abstractions for unstable stochastic control systems \cite{majid7}.  

Note that all the techniques provided in \cite{majid4,pola,girard2,pola1,tazaki,gunther1,ReissigWeberRungger15,liu1,majid,LAB09,majid8,majid11,majid7} are fundamentally based on the discretization of continuous states. Therefore, they suffer severely from the
curse of dimensionality due to gridding those sets, which is especially
irritating for models with high-dimensional state sets. In this work we propose a novel approach for the construction of approximately bisimilar symbolic models for incrementally stable stochastic control systems not requiring any state set discretization but only input set discretization. Therefore, it can be potentially more efficient than the proposed approaches in \cite{majid8} when dealing with higher dimensional stochastic control systems. We provide a theoretical comparison with the
approach in \cite{majid8} and a simple criterion that helps choosing the most suitable among two approaches (in terms of the sizes of the symbolic models) for a given stochastic control system. Another advantage of the technique proposed here is that it allows us to construct symbolic models with probabilistic output values, resulting in less conservative symbolic abstractions 
than those proposed in \cite{majid8,majid11,majid7} that allow for 
non-probabilistic output values exclusively.
We then explain how the proposed symbolic models with probabilistic output values can be used for synthesizing hybrid controllers enforcing logic specifications. The proposed approaches in \cite{majid16} also provide symbolic models with probabilistic output values and without any state set discretization. However, the results in \cite{majid16} are for stochastic switched systems rather than stochastic control systems as in this work and they do not provide any intuition behind the control synthesis over symbolic models with probabilistic output values. The effectiveness of the proposed results is illustrated by synthesizing a schedule for the coordination of two traffic lights
under some safety and fairness requirements for a model of road traffic which is a 5-dimensional linear stochastic control system. We also show that this example is not amenable to be dealt with the approaches proposed in \cite{majid8}. Although the main proposed results in this work are for incrementally stable stochastic control systems, the similar results
for incrementally stable non-probabilistic control systems can be recovered in the same
framework by simply setting the diffusion term to zero.

Alongside the relationship with and extension of \cite{majid16,majid8}, 
this paper provides a detailed and extended elaboration of the results first announced in \cite{majid10}, 
including the proofs of the main results, 
a detailed discussion on how to deal with probabilistic output values and a generalization of the corresponding result with no requirement on compactness, 
and finally discussing a new case study on road traffic control.

\section{Stochastic Control Systems}\label{sec2}
\subsection{Notation}\label{II.A}
The identity map on a set $A$ is denoted by $1_{A}$. The symbols $\N$, $\N_0$, $\Ze$, $\R$, $\R^+$, and $\R_0^+$ denote the set of natural, nonnegative integer, integer, real, positive, and nonnegative real numbers, respectively. The symbols $I_n$, $0_n$, and $0_{n\times{m}}$ denote the identity matrix, the zero vector, and the zero matrix in $\R^{n\times{n}}$, $\R^n$, and $\R^{n\times{m}}$, respectively. Given a vector \mbox{$x\in\mathbb{R}^{n}$}, we denote by $x_{i}$ the $i$--th element of $x$, and by $\Vert x\Vert$ the infinity norm of $x$, namely, \mbox{$\Vert x\Vert=\max\{|x_1|,|x_2|,...,|x_n|\}$}, where $|x_i|$ denotes the absolute value of $x_i$. Given a matrix $P=\{p_{ij}\}\in\R^{n\times{n}}$, we denote by $\text{Tr}({P})=\sum_{i=1}^np_{ii}$ the trace of $P$. We denote by $\lambda_{\min}(A)$ and $\lambda_{\max}(A)$ the minimum and maximum eigenvalues of a symmetric matrix $A$, respectively. The diagonal set $\Delta\subset\R^{n}\times\R^n$ is defined as: $\Delta=\left\{(x,x) \mid x\in \R^n\right\}$.

The closed ball centered at $x\in{\mathbb{R}}^{m}$ with radius $\lambda$ is defined by $\mathcal{B}_{\lambda}(x)=\{y\in{\mathbb{R}}^{m}\,|\,\Vert x-y\Vert\leq\lambda\}$. A set $B\subseteq \R^m$ is called a
{\em box} if $B = \prod_{i=1}^m [c_i, d_i]$, where $c_i,d_i\in \R$ with $c_i < d_i$ for each $i\in\{1,\ldots,m\}$.
The {\em span} of a box $B$ is defined as $\boxspan(B) = \min\{ | d_i - c_i| \mid i=1,\ldots,m\}$.
For a box $B\subseteq\R^m$ and $\mu \leq \boxspan(B)$,
define the $\mu$-approximation $[B]_\mu = [\R^m]_{\mu}\cap{B}$, where $[\R^m]_{\mu}=\{a\in \R^m\mid a_{i}=k_{i}\mu,k_{i}\in\mathbb{Z},i=1,\ldots,m\}$.
Note that $[B]_{\mu}\neq\varnothing$ for any $\mu\leq\boxspan(B)$.
Geometrically, for any $\mu\in{\mathbb{R}^+}$ with $\mu\leq\boxspan(B)$ and $\lambda\geq\mu$, the collection of sets
\mbox{$\{\mathcal{B}_{\lambda}(p)\}_{p\in [B]_{\mu}}$}
is a finite covering of $B$, i.e. \mbox{$B\subseteq\bigcup_{p\in[B]_{\mu}}\mathcal{B}_{\lambda}(p)$}.
We extend the notions of $\boxspan$ and {\em approximation} to finite unions of boxes as follows.
Let $A = \bigcup_{j=1}^M A_j$, where each $A_j$ is a box.
Define $\boxspan(A) = \min\{\boxspan(A_j)\mid j=1,\ldots,M\}$,
and for any $\mu \leq \boxspan(A)$, define $[A]_\mu = \bigcup_{j=1}^M [A_j]_\mu$.

Given a measurable function \mbox{$f:\mathbb{R}_{0}^{+}\rightarrow\mathbb{R}^n$}, the supremum of $f$ is denoted by
$\Vert f\Vert_{\infty}:=\text{(ess)sup}\{\Vert f(t)\Vert, t\geq0\}$. A continuous function \mbox{$\gamma:\mathbb{R}_{0}^{+}\rightarrow\mathbb{R}_{0}^{+}$} is said to belong to class $\mathcal{K}$ if it is strictly increasing and \mbox{$\gamma(0)=0$}; $\gamma$ is said to belong to class $\mathcal{K}_{\infty}$ if \mbox{$\gamma\in\mathcal{K}$} and $\gamma(r)\rightarrow\infty$ as $r\rightarrow\infty$. A continuous function \mbox{$\beta:\mathbb{R}_{0}^{+}\times\mathbb{R}_{0}^{+}\rightarrow\mathbb{R}_{0}^{+}$} is said to belong to class $\mathcal{KL}$ if, for each fixed $s$, the map $\beta(r,s)$ belongs to class $\mathcal{K}$ with respect to $r$ and, for each fixed nonzero $r$, the map $\beta(r,s)$ is decreasing with respect to $s$ and $\beta(r,s)\rightarrow 0$ as \mbox{$s\rightarrow\infty$}. We identify a relation \mbox{$R\subseteq A\times B$} with the map \mbox{$R:A \rightarrow 2^{B}$} defined by $b\in R(a)$ iff \mbox{$(a,b)\in R$}. Given a relation \mbox{$R\subseteq A\times B$}, $R^{-1}$ denotes the inverse relation defined by \mbox{$R^{-1}=\{(b,a)\in B\times A:(a,b)\in R\}$}. Given a finite sequence $S$, we denote by $\sigma:=(S)^\omega$ the infinite sequence generated by repeating $S$ infinitely, i.e. $\sigma:=SSSSS\ldots$.

\subsection{Stochastic control systems\label{II.B}}

Let $(\Omega, \sigalg, \PP)$ be a probability space endowed with a filtration $\filtration = (\sigalg_s)_{s\geq 0}$ satisfying the usual conditions of completeness and right continuity \cite[p.\ 48]{ref:KarShr-91}. Let $(W_s)_{s \ge 0}$ be a $p$-dimensional $\filtration$-adapted Brownian motion.

\begin{definition}
\label{Def_control_sys}A stochastic control system $\Sigma$ is a tuple $\Sigma=(\mathbb{R}^{n},\mathsf{U},\mathcal{U},f,\sigma)$, where
\begin{itemize}
\item $\mathbb{R}^{n}$ is the state space;
\item $\mathsf{U}\subseteq\R^m$ is a bounded input set;
\item $\mathcal{U}$ is a subset of the set of all measurable functions of time from $\R_0^+$ to $\mathsf{U}$;
\item $f:\R^n\times\mathsf{U}\rightarrow\R^n$ satisfies the following Lipschitz assumption: there exist constants $L_x,L_u\in\R^+$ such that: $\Vert f(x,u)-f(x',u')\Vert\leq L_x\Vert x-x'\Vert +  L_u\Vert u-u'\Vert$ for all $x,x'\in\R^n$ and all $u,u'\in\mathsf{U}$;
\item $\sigma:\R^n\rightarrow\R^{n\times{p}}$ satisfies the following Lipschitz assumption: there exists a constant $Z\in\R^+$ such that: $\Vert\sigma(x)-\sigma(x')\Vert\leq Z\Vert{x}-x'\Vert$ for all $x,x'\in\R^n$.\qed
\end{itemize}
\end{definition}


A continuous-time stochastic process \mbox{$\xi:\Omega \times\R_0^+\rightarrow \mathbb{R}^{n}$} is said to be a \textit{solution process} of $\Sigma$ if there exists $\upsilon\in\mathcal{U}$ satisfying the following stochastic differential equation (SDE) $\PP$-almost surely ($\PP$-a.s.)
\begin{equation}
\label{eq0}
	\diff \xi= f(\xi,\upsilon)\diff t+\sigma(\xi)\diff W_t,
\end{equation}
where $f$ is known as the \emph{drift} and $\sigma$ as the \emph{diffusion}. We also write $\xi_{a \upsilon}(t)$ to denote the value of the solution process at time $t\in\R_0^+$ under the input curve $\upsilon$ from initial condition $\xi_{a \upsilon}(0) = a$ $\PP$-a.s., in which $a$ is a random variable that is measurable in $\sigalg_0$. Let us emphasize that the solution process is unambiguously determined,
since the assumptions on $f$ and $\sigma$ ensure its existence and uniqueness \cite[Theorem 5.2.1, p.\ 68]{oksendal}.

\section{Incremental Stability}\label{sec3}
We recall a stability notion for stochastic control systems, introduced in \cite{majid8}, on which the main results presented in this work rely.

\begin{definition}
\label{dISS}
A stochastic control system $\Sigma$ is incrementally input-to-state stable in the $q\textsf{th}$ moment ($\delta$-ISS-M$_q$), where $q\geq1$, if there exist a $\mathcal{KL}$ function $\beta$ and a $\mathcal{K}_{\infty}$ function $\gamma$ such that for any $t\in{\mathbb{R}_0^+}$, any $\R^n$-valued random variables $a$ and $a'$ that are measurable in $\sigalg_0$, and any $\upsilon$, ${\upsilon}'\in\mathcal{U}$, the following condition is satisfied:
\begin{equation}
\EE \left[\left\Vert \xi_{a\upsilon}(t)-\xi_{a'{\upsilon}'}(t)\right\Vert^q\right] \leq\beta\left( \EE\left[ \left\Vert a-a' \right\Vert^q\right], t \right) + \gamma \left(\left\Vert{\upsilon} - {\upsilon}'\right\Vert_{\infty}\right). \label{delta_PISS}
\end{equation}
\end{definition}

It can be easily verified that a $\delta$-ISS-M$_q$ stochastic control system $\Sigma$ is $\delta$-ISS \cite{angeli} in the absence of any noise as in the following: 
\begin{equation}
\left\Vert \xi_{a\upsilon}(t)-\xi_{a'{\upsilon}'}(t)\right\Vert\leq\beta\left(\left\Vert a-a' \right\Vert, t \right) + \gamma \left(\left\Vert{\upsilon} - {\upsilon}'\right\Vert_{\infty}\right), \label{delta_ISS}
\end{equation}
for $a,a'\in\R^n$, some $\beta\in\mathcal{KL}$, and some $\gamma\in\mathcal{K}_\infty$.

Similar to the characterization of $\delta$-ISS in terms of the existence of so-called $\delta$-ISS Lyapunov functions in \cite{angeli}, one can describe $\delta$-ISS-M$_q$ in terms of the existence of so-called $\delta$-ISS-M$_q$ Lyapunov functions, as shown in \cite{majid8} and defined next.

\begin{definition}
\label{delta_PISS_Lya}
Consider a stochastic control system $\Sigma$
and a continuous function $V:\mathbb{R}^n\times\mathbb{R}^n\rightarrow\mathbb{R}_0^+$ that is twice continuously differentiable on
$\{\R^n\times\R^n\}\backslash\Delta$.
The function $V$ is called a $\delta$-ISS-M$_q$ Lyapunov function for $\Sigma$,
where $q\geq1$, if there exist $\mathcal{K}_{\infty}$ functions
$\underline{\alpha}$, $\overline{\alpha}$, $\rho$, and a constant $\kappa\in\mathbb{R}^+$, such that
\begin{itemize}
\item[(i)] $\ul{\alpha}$ (resp. $\ol \alpha$) is a convex (resp. concave) function;
\item[(ii)] for any $x,x'\in\mathbb{R}^n$,
$\underline{\alpha}\left(\Vert x-x'\Vert^q\right)\leq{V}(x,x')\leq\overline{\alpha}\left(\Vert x-x'\Vert^q\right)$;
\item[(iii)] for any $x,x'\in\mathbb{R}^n$, $x\neq x'$, and for any $u,u'\in\mathsf{U}$,
\begin{align*}
	\mathcal{L}^{u,u'} V(x, x')\Let& \left[\partial_xV~~\partial_{x'}V\right] \begin{bmatrix} f(x,u)\\f(x',u')\end{bmatrix}+\frac{1}{2} \text{Tr} \left(\begin{bmatrix} \sigma(x) \\ \sigma(x') \end{bmatrix}\left[\sigma^T(x)~~\sigma^T(x')\right] \begin{bmatrix}
\partial_{x,x} V & \partial_{x,x'} V \\ \partial_{x',x} V & \partial_{x',x'} V
\end{bmatrix}	\right)\\\notag \leq&-\kappa V(x,x')+\rho(\| u-u'\|),
\end{align*}
\end{itemize}
where $\mathcal{L}^{u,u'}$ is the infinitesimal generator associated to the process $V(\xi,\xi')$, 
and where $\xi$ and $\xi'$ are solution processes of the SDE \eqref{eq0} \cite[Section 7.3]{oksendal}. 
The symbols $\partial_x$ and $\partial_{x,x'}$ denote first- and second-order partial derivatives with respect to $x$ and $(x,x')$, respectively.
\qed
\end{definition}

Although condition $(ii)$ in the above definition implies that the growth rate of functions $\ol{\alpha}$ and $\ul{\alpha}$ is linear,
this requirement does not restrict the behavior of $\ol\alpha$ and $\ul\alpha$ to be linear on a compact subset of $\R^n$. 
Note that condition $(i)$ is not required in the context of non-probabilistic control systems for the corresponding $\delta$-ISS Lyapunov functions \cite{angeli}.  
The following theorem, borrowed from \cite{majid8}, describes $\delta$-ISS-M$_q$ in terms of the existence of $\delta$-ISS-M$_q$ Lyapunov functions.

\begin{theorem}
\label{the_Lya}
A stochastic control system $\Sigma$ is $\delta$-ISS-M$_q$ if it admits a $\delta$-ISS-M$_q$ Lyapunov function.
\qed
\end{theorem}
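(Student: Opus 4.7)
The plan is to derive a deterministic scalar differential inequality for $t \mapsto \EE[V(\xi_{a\upsilon}(t), \xi_{a'\upsilon'}(t))]$, solve it by a standard comparison argument, and then translate the resulting bound on $\EE[V]$ into a bound on $\EE[\|\xi_{a\upsilon}(t) - \xi_{a'\upsilon'}(t)\|^q]$ of the form \eqref{delta_PISS}, using the sandwich condition $(ii)$ together with Jensen's inequality (this is where condition $(i)$ earns its keep). Set $\xi \Let \xi_{a\upsilon}$, $\xi' \Let \xi_{a'\upsilon'}$, and observe that the joint process $(\xi, \xi')$ satisfies an SDE on $\R^{2n}$ with generator $\mathcal{L}^{\upsilon(t), \upsilon'(t)}$ acting on functions of $(x, x')$.

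First, I would apply Dynkin's formula to $V(\xi, \xi')$, which yields
\[
\EE[V(\xi(t), \xi'(t))] = \EE[V(a, a')] + \EE\!\left[\int_0^t \mathcal{L}^{\upsilon(s), \upsilon'(s)} V(\xi(s), \xi'(s)) \, \diff s\right].
\]
Substituting the bound from condition $(iii)$ into the integrand, $\mathcal{L}^{u,u'} V \le -\kappa V + \rho(\|u-u'\|)$, and bounding $\rho(\|\upsilon(s) - \upsilon'(s)\|) \le \rho(\|\upsilon - \upsilon'\|_\infty)$, gives the differential inequality
\[
\frac{\diff}{\diff t}\EE[V(\xi(t), \xi'(t))] \le -\kappa \, \EE[V(\xi(t), \xi'(t))] + \rho(\|\upsilon - \upsilon'\|_\infty),
\]
whose solution is
\[
\EE[V(\xi(t), \xi'(t))] \le e^{-\kappa t} \EE[V(a, a')] + \tfrac{1}{\kappa}(1-e^{-\kappa t})\, \rho(\|\upsilon - \upsilon'\|_\infty).
\]

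Next, I would invoke condition $(ii)$ on both sides. Since $\ul{\alpha}$ is convex, Jensen's inequality gives $\ul{\alpha}(\EE[\|\xi(t) - \xi'(t)\|^q]) \le \EE[\ul{\alpha}(\|\xi(t) - \xi'(t)\|^q)] \le \EE[V(\xi(t), \xi'(t))]$; since $\ol{\alpha}$ is concave, Jensen the other way yields $\EE[V(a, a')] \le \EE[\ol{\alpha}(\|a-a'\|^q)] \le \ol{\alpha}(\EE[\|a-a'\|^q])$. Combining, inverting via $\ul{\alpha}^{-1} \in \mathcal{K}_\infty$, and splitting the sum through the standard estimate $\ul{\alpha}^{-1}(r+s) \le \ul{\alpha}^{-1}(2r) + \ul{\alpha}^{-1}(2s)$ (which holds for any $\mathcal{K}_\infty$ function), I would set
\[
\beta(r, t) \Let \ul{\alpha}^{-1}\!\bigl(2 e^{-\kappa t}\, \ol{\alpha}(r)\bigr), \qquad \gamma(r) \Let \ul{\alpha}^{-1}\!\bigl(\tfrac{2}{\kappa}\, \rho(r)\bigr),
\]
verify $\beta \in \mathcal{KL}$ and $\gamma \in \mathcal{K}_\infty$, and read off \eqref{delta_PISS}.

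The main technical obstacle I expect is the first step: justifying Dynkin's formula when $V$ is only $C^2$ on $\{\R^n \times \R^n\} \setminus \Delta$ and may fail to be twice differentiable on the diagonal. The standard remedy is a localization argument—stopping the process at the exit time from a set of the form $\{\|x - x'\| \ge 1/k\}$, applying Itô on the smooth region, passing to $k \to \infty$ using the continuity of $V$ and the upper bound $V \le \ol{\alpha}(\|x-x'\|^q)$ to control the limit—or alternatively a mollification of $V$ near $\Delta$. Everything else is routine manipulation of class-$\mathcal{K}$ functions, but care must be taken that the Jensen steps are applied to the correct side (convex for the lower envelope, concave for the upper), as both inequalities are used in the \emph{right} direction only because of condition $(i)$.
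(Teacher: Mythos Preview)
The paper does not actually prove this theorem; it is stated as borrowed from \cite{majid8}. What the paper does reveal about the cited proof---through the intermediate inequality $\EE[V(\xi(t),\xi'(t))]\le \mathsf{e}^{-\kappa t}\EE[V(a,a')] + \tfrac{1}{\mathsf{e}\kappa}\rho(\|\upsilon-\upsilon'\|_\infty)$ invoked in the proof of Theorem~\ref{main_theorem}, and the explicit $\beta(r,s)=\ul\alpha^{-1}(\ol\alpha(r)\mathsf{e}^{-\kappa s})$, $\gamma(r)=\ul\alpha^{-1}(\tfrac{1}{\mathsf{e}\kappa}\rho(r))$ quoted in the example section---matches your outline exactly: Dynkin/It\^o on the coupled process, the dissipation inequality from condition~(iii), a Gr\"onwall-type comparison, and then the sandwich~(ii) plus Jensen via condition~(i). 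So your approach is the intended one.

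One refinement worth noting: you split $\ul\alpha^{-1}(r+s)\le\ul\alpha^{-1}(2r)+\ul\alpha^{-1}(2s)$ using only that $\ul\alpha^{-1}\in\mathcal{K}_\infty$, which costs you the factors of $2$. The cited proof evidently avoids this, and the reason is already in condition~(i): since $\ul\alpha$ is convex, $\ul\alpha^{-1}$ is concave with $\ul\alpha^{-1}(0)=0$, hence subadditive, giving $\ul\alpha^{-1}(r+s)\le\ul\alpha^{-1}(r)+\ul\alpha^{-1}(s)$ directly. This is why the paper's $\beta,\gamma$ carry no extra factor. Your version is still correct---just slightly less sharp---and your handling of the $C^2$-off-diagonal issue via stopping times is the right way to make the It\^o step rigorous.
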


One can resort to available software tools, such as \textsf{SOSTOOLS} \cite{antonis},
to search for appropriate $\delta$-ISS-M$_q$ Lyapunov functions for systems $\Sigma$ of polynomial type. 
We refer the interested readers to the results in \cite{majid8} for the discussion of special instances where these functions can be easily computed, 
and limit ourselves to mention that, as an example, 
for linear stochastic control systems $\Sigma$ (with linear drift and diffusion terms), 
one can search for appropriate $\delta$-ISS-M$_q$ Lyapunov functions by solving a linear matrix inequality (LMI). 

\subsection{Noisy and noise-free trajectories}
In order to introduce the symbolic models in Subsection \ref{ssec:main.res} (Theorems \ref{main_theorem} and \ref{main_theorem2}) for a stochastic control system,
we need the following technical result, borrowed from \cite{majid8},
which provides an upper bound on the distance (in the $q\textsf{th}$ moment) between the solution process of $\Sigma$ and the solution of a derived non-probabilistic control system $\ol\Sigma$ obtained by disregarding the diffusion term $\sigma$. From now on, we use the notation $\ol\xi_{x\upsilon}$ to denote the solution of
$\ol\Sigma=(\mathbb{R}^{n},\mathsf{U},\mathcal{U},f,0_{n\times p})$\footnote{Here, we have abused notation by identifying $0_{n\times p}$ with the map $\sigma:x\rightarrow 0_p$ $\forall x\in\R^n$.}, starting from the non-probabilistic initial condition $x$ and under the input curve $\upsilon$, 
which satisfies the ordinary differential equation (ODE) $\dot{\ol\xi}_{x\upsilon}=f(\ol\xi_{x\upsilon},\upsilon)$. 
\begin{lemma}\label{lemma3}
Consider a stochastic control system $\Sigma$ such that $f(0_n,0_m) =0_n$ and $\sigma(0_n) = 0_{n\times{p}}$.
	Suppose that $q\geq2$ and that there exists a $\delta$-ISS-M$_q$ Lyapunov function $V$ for $\Sigma$ such that its Hessian is a positive semidefinite matrix in $\R^{2n\times2n}$ and $\partial_{x,x}{V}(x,x')\leq P$, for any $x,x'\in\R^n$, and some positive semidefinite matrix $P\in\R^{n\times n}$. Then for any $x\in\R^n$ and any $\upsilon\in\mathcal{U}$, we have
	\begin{align}\label{mismatch1}
	 	\EE \left[\left\Vert\traj{\xi}{x}{\upsilon}(t)-\traj{\ol \xi}{x}{\upsilon}(t)\right\Vert^q\right] \le h_x(t),	
	\end{align}
	where
	\begin{align}\notag
	h_x(t)&=\ul\alpha^{-1}\Bigg(\frac{1}{2}\left\Vert{\sqrt{P}}\right\Vert^2n\min\{n,p\}Z^2\mathsf{e}^{-\kappa t}\int_0^t\left(\beta\left(\left\Vert{x}\right\Vert^q,s\right)+\gamma\left(\sup_{u\in{\mathsf{U}}}\left\{\Vert{u}\Vert\right\}\right)\right)^{\frac{2}{q}}\mathsf{d}s\Bigg),
	\end{align}
	and where $Z$ is the Lipschitz constant, introduced in Definition \ref{Def_control_sys}, and $\beta$ is the $\mathcal{KL}$ function appearing in \eqref{delta_PISS}.\qed
\end{lemma}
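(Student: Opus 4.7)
My plan is to bound $\EE[V(\xi_{x\upsilon}(t),\ol\xi_{x\upsilon}(t))]$ via Dynkin's formula, convert it into a bound on $\EE\Vert\xi-\ol\xi\Vert^q$ using convexity of $\ul\alpha$, and control the growth by the Lyapunov dissipation inequality together with the ambient $\delta$-ISS-M$_q$ property to estimate the size of $\xi_{x\upsilon}$.

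First I would note $V(x,x)=0$, since condition (ii) together with $\ul\alpha(0)=\ol\alpha(0)=0$ pin $V$ on the diagonal. Applying Dynkin to the joint process $(\xi_{x\upsilon},\ol\xi_{x\upsilon})$ yields
\[
\EE[V(\xi_{x\upsilon}(t),\ol\xi_{x\upsilon}(t))]=\int_0^t \EE[\mathcal{A}V(\xi_{x\upsilon}(s),\ol\xi_{x\upsilon}(s))]\diff s,
\]
where $\mathcal{A}V = \partial_xV\,f(\xi,\upsilon)+\partial_{x'}V\,f(\ol\xi,\upsilon)+\tfrac12\Tr(\sigma(\xi)\sigma^T(\xi)\partial_{x,x}V)$ because only $\xi$ carries diffusion. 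The crucial comparison is with $\mathcal{L}^{\upsilon,\upsilon}V$ from Definition \ref{delta_PISS_Lya}(iii), which equals $\mathcal{A}V$ plus $\tfrac12\Tr(\Sigma\Sigma^T H)-\tfrac12\Tr(\sigma(\xi)\sigma^T(\xi)\partial_{x,x}V)$, with $\Sigma=[\sigma(\xi)^T,\sigma(\ol\xi)^T]^T$ and $H$ the full Hessian. Because $H\succeq 0$ by hypothesis and $\Sigma\Sigma^T\succeq 0$, the correction $\tfrac12\Tr(\Sigma\Sigma^T H)\geq 0$ can be dropped, and since $\rho(0)=0$ the Lyapunov inequality gives
\[
\mathcal{A}V(\xi,\ol\xi)\le -\kappa V(\xi,\ol\xi)+\tfrac12\Tr(\sigma(\xi)\sigma^T(\xi)\partial_{x,x}V).
\]

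Next I would estimate the residual noise term. The bound $\partial_{x,x}V\le P$ gives $\Tr(\sigma\sigma^T\partial_{x,x}V)\le \Tr(\sigma\sigma^T P)\le \Vert\sqrt P\Vert^2\,\Tr(\sigma(\xi)\sigma^T(\xi))$, and the Lipschitz condition on $\sigma$ with $\sigma(0_n)=0$ controls $\Tr(\sigma(\xi)\sigma^T(\xi))$ by $n\min\{n,p\}Z^2\Vert\xi\Vert^2$ through standard inequalities comparing the Frobenius trace to the infinity-induced matrix norm. Taking expectations and invoking Theorem \ref{the_Lya} with reference trajectory $\xi_{0_n 0}\equiv 0$ (legitimate since $f(0_n,0_m)=0$, $\sigma(0_n)=0$) yields $\EE\Vert\xi_{x\upsilon}(s)\Vert^q\le \beta(\Vert x\Vert^q,s)+\gamma(\sup_{u\in\mathsf U}\Vert u\Vert)$, and Jensen (valid since $q\ge 2$) turns this into the corresponding bound on $\EE\Vert\xi\Vert^2$ with exponent $2/q$.

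Combining these, $\tfrac{\diff}{\diff t}\EE[V(\xi_{x\upsilon}(t),\ol\xi_{x\upsilon}(t))]\le -\kappa \EE[V]+g(t)$ for an explicit $g$; an integrating factor (Gr\"onwall) applied with the initial value $V(x,x)=0$ produces exactly the bracketed expression inside $\ul\alpha^{-1}$ in $h_x(t)$. Finally, convexity of $\ul\alpha$ (condition (i)) permits Jensen in the other direction: $\ul\alpha(\EE\Vert\xi-\ol\xi\Vert^q)\le \EE[\ul\alpha(\Vert\xi-\ol\xi\Vert^q)]\le \EE[V]$, so inverting $\ul\alpha$ delivers \eqref{mismatch1}. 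The main obstacle is Step 2: carefully certifying that the PSD Hessian assumption lets us discard the ``missing'' cross-diffusion block $\tfrac12\Tr(\Sigma\Sigma^T H)$ while preserving the clean $-\kappa V$ decay; the matrix-norm bookkeeping to surface the dimensional constant $n\min\{n,p\}$ is the other delicate but routine piece.
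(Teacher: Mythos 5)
The paper itself gives no proof of Lemma \ref{lemma3}: it is imported from \cite{majid8}, so your argument can only be judged on its own terms. Its architecture is the natural one and is sound everywhere except one step: the diagonal identity $V(x,x)=0$; the Dynkin/It\^o step for the half-degenerate pair $(\xi_{x\upsilon},\ol\xi_{x\upsilon})$; the comparison with $\mathcal{L}^{\upsilon,\upsilon}V$, in which positive semidefiniteness of the full Hessian lets you discard the nonnegative trace term built from the stacked diffusion; the residual bound via $\partial_{x,x}V\leq P$ and $\Vert\sigma(\xi)\Vert\leq Z\Vert\xi\Vert$; the comparison of $\xi_{x\upsilon}$ with the equilibrium solution $\xi_{0_n\upsilon'}\equiv 0_n$ (for $\upsilon'\equiv 0_m$) through \eqref{delta_PISS}; Jensen with the concave power $2/q$; and the final Jensen/inversion with the convex $\ul\alpha$ are all correct in substance (modulo the routine bookkeeping for the constant $n\min\{n,p\}$, the technicality that $V$ is only $C^2$ off $\Delta$ while the pair starts on $\Delta$ at $t=0$, and the implicit assumption that the zero input curve is admissible).

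The genuine gap is the Gr\"onwall step. From $\tfrac{\diff}{\diff t}\EE[V]\leq-\kappa\,\EE[V]+g(t)$ with $\EE[V](0)=0$ and $g(s)=\tfrac12\Vert\sqrt P\Vert^2 n\min\{n,p\}Z^2\bigl(\beta(\Vert x\Vert^q,s)+\gamma(\sup_{u\in\mathsf U}\Vert u\Vert)\bigr)^{2/q}$, variation of constants yields $\EE[V](t)\leq e^{-\kappa t}\int_0^t e^{\kappa s}g(s)\,\diff s$, \emph{not} $e^{-\kappa t}\int_0^t g(s)\,\diff s$. Since $e^{\kappa s}\geq 1$, what you actually obtain strictly dominates the bracketed expression in the displayed $h_x(t)$ for $t>0$, so your argument establishes \eqref{mismatch1} only with $h_x$ replaced by the larger quantity $\ul\alpha^{-1}\bigl(e^{-\kappa t}\int_0^t e^{\kappa s}g(s)\,\diff s\bigr)$; the assertion that the integrating factor ``produces exactly the bracketed expression'' is false. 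Either you must supply an additional argument extracting the missing factor $e^{-\kappa s}$ (I do not see one --- the two integrals genuinely differ), or the displayed $h_x$ omits the weight $e^{\kappa s}$ relative to what this method delivers; in either case the discrepancy must be reconciled explicitly rather than asserted away.
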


It can be readily seen that the nonnegative-valued function $h_x$ tends to zero as $t \ra 0$, $t\ra+\infty$, or as $Z\ra0$, 
and is identically zero if the diffusion term is identically zero (i.e. $Z=0$, which is the case for $\ol\Sigma$). 
The interested readers are referred to \cite{majid8}, which provides results in line with that of Lemma \ref{lemma3} for (linear) stochastic control systems $\Sigma$ admitting a specific type of $\delta$-ISS-M$_q$ Lyapunov functions.

\section{Systems and Approximate Equivalence Relations}\label{symbolic}

\subsection{Systems}
We employ the abstract and general notion of ``system,'' as introduced in \cite{paulo}, to describe both stochastic control systems and their symbolic models. 
\begin{definition}
\label{system}
A system $S$ is a tuple
$S=(X,X_0,U,\rTo,Y,H),$
where
$X$ is a set of states (possibly infinite),
$X_0\subseteq X$ is a set of initial states (possibly infinite),
$U$ is a set of inputs (possibly infinite),
$\rTo\subseteq X\times U\times X$ is a transition relation,
$Y$ is a set of outputs, and
$H:X\rightarrow Y$ is an output map.\qed
\end{definition}

A transition \mbox{$(x,u,x')\in\rTo$} is also denoted by $x\rTo^ux'$. For a transition $x\rTo^ux'$, state $x'$ is called a \mbox{$u$-successor}, or simply a successor, of state $x$.
We denote by $\mathbf{Post}_{u}(x)$ the set of all \mbox{$u$-successors} of a state $x$.
For technical reasons, we assume that for any $x\in X$, there exists some $u$-successor of $x$, for some $u\in U$ ---
let us remark that this is always the case for the considered systems later in this paper.

A system $S$ is said to be
\begin{itemize}
\item \textit{metric}, if the output set $Y$ is equipped with a metric
$\mathbf{d}:Y\times Y\rightarrow\mathbb{R}_{0}^{+}$;
\item \textit{finite} (or \textit{symbolic}), if $X$ and $U$ are finite sets;
\item \textit{deterministic}, if for any state $x\in{X}$ and any input $u\in{U}$, $\left\vert\mathbf{Post}_{u}(x)\right\vert\leq1$.
\end{itemize}

For a system $S=(X,X_0,U,\rTo,Y,H)$ and given any initial state $x_0\in X_0$, a finite state run generated from $x_0$ is a finite sequence of transitions:
\begin{align}\label{run}
x_0\rTo^{u_0}x_1\rTo^{u_1}\cdots\rTo^{u_{n-2}}x_{n-1}\rTo^{u_{n-1}}x_n,
\end{align}
such that $x_i\rTo^{u_i}x_{i+1}$ for all $0\leq i<n$. A finite state run can be directly extended to an infinite state run as well. A finite output run is a sequence $\left\{y_0,y_1,\ldots,y_n\right\}$ such that there exists a finite state run of the form \eqref{run} with $y_i=H(x_i)$, for $i=0,\ldots,n$. A finite output run can also be directly extended to an infinite output run as well.

\subsection{Relations among systems}
\label{ssec:sys.rel}
We recall the notion of approximate (bi)simulation relation, introduced in \cite{girard},
which is cruicial when analyzing or synthesizing controllers for deterministic systems.

\begin{definition}\label{APSR}
Let \mbox{$S_{a}=(X_{a},X_{a0},U_{a},\rTo_{a},Y_a,H_{a})$} and
\mbox{$S_{b}=(X_{b},X_{b0},U_{b},\rTo_{b},Y_b,H_{b})$} be metric systems with the
same output sets $Y_a=Y_b$ and metric $\mathbf{d}$.
For $\varepsilon\in\mathbb{R}_0^{+}$,
a relation
\mbox{$R\subseteq X_{a}\times X_{b}$} is said to be an $\varepsilon$-approximate simulation relation from $S_{a}$ to $S_{b}$
if, for all $(x_a,x_b)\in R$, the following two conditions are satisfied:
\begin{itemize}
\item[(i)] \mbox{$\mathbf{d}(H_{a}(x_{a}),H_{b}(x_{b}))\leq\varepsilon$};
\item[(ii)] \mbox{$x_{a}\rTo_{a}^{u_a}x'_{a}$ in $S_a$} implies the existence of \mbox{$x_{b}\rTo_{b}^{u_b}x'_{b}$} in $S_b$ satisfying $(x'_{a},x'_{b})\in R$.
\end{itemize}
A relation $R\subseteq X_a\times X_b$ is said to be an $\varepsilon$-approximate bisimulation relation between $S_a$ and $S_b$
if $R$ is an $\varepsilon$-approximate simulation relation from $S_a$ to $S_b$ and
$R^{-1}$ is an $\varepsilon$-approximate simulation relation from $S_b$ to $S_a$.

System $S_{a}$ is $\varepsilon$-approximately simulated by $S_{b}$, or $S_b$ $\varepsilon$-approximately simulates $S_a$,
denoted by \mbox{$S_{a}\preceq_{\mathcal{S}}^{\varepsilon}S_{b}$}, if there exists an $\varepsilon$-approximate simulation relation $R$ from $S_a$ to $S_b$ such that:
\begin{itemize}
\item for every $x_{a0}\in{X_{a0}}$, there exists $x_{b0}\in{X_{b0}}$ with $(x_{a0},x_{b0})\in{R}$.
\end{itemize}
System $S_{a}$ is $\varepsilon$-approximately bisimilar to $S_{b}$, denoted by \mbox{$S_{a}\cong_{\mathcal{S}}^{\varepsilon}S_{b}$}, if there exists an $\varepsilon$-approximate bisimulation relation $R$ between $S_a$ and $S_b$ such that:
\begin{itemize}
\item for every $x_{a0}\in{X_{a0}}$, there exists $x_{b0}\in{X_{b0}}$ with $(x_{a0},x_{b0})\in{R}$;
\item for every $x_{b0}\in{X_{b0}}$, there exists $x_{a0}\in{X_{a0}}$ with $(x_{a0},x_{b0})\in{R}$.\qed
\end{itemize}
\end{definition}

\section{Symbolic Models for Stochastic Control Systems}\label{existence}

\subsection{Describing stochastic control systems as metric systems}
\label{ssec:fin.abstr}
In order to show the main results of the paper, we use the notion of system introduced above to abstractly represent a stochastic control system.
More precisely, given a stochastic control system $\Sigma$, 
we define an associated metric system $S(\Sigma)=(X,X_0,U,\rTo,Y,H),$
where:
\begin{itemize}
\item $X$ is the set of all $\R^n$-valued random variables defined on the probability space
$(\Omega,\sigalg,\PP)$;
\item $X_{0}$ is a subset of the set of $\R^n$-valued random variables that are measurable over $\sigalg_0$;
\item $U=\mathcal{U}$;
\item $x\rTo^{\upsilon} x'$ if $x$ and $x'$ are measurable in $\sigalg_{t}$ and $\sigalg_{t+\tau}$, respectively, for some $t \in \R^+_0$ and $\tau\in\R^+$, and
there exists a solution process $\xi:\Omega\times\R_0^+\rightarrow\R^n$ of $\Sigma$ satisfying $\xi(t) = x$ and $\xi_{x \upsilon}(\tau) = x'$ $\PP$-a.s.;
\item $Y=X$;
\item $H=1_{X}$.
\end{itemize}
We assume that the output set $Y$ is equipped with the metric $\mathbf{d}(y,y')=\left(\EE\left[\left\Vert y-y'\right\Vert^q\right]\right)^{\frac{1}{q}}$, for any $y,y'\in{Y}$ and some $q\geq1$. Let us remark that the set of states and inputs of $S(\Sigma)$ are uncountable and that $S(\Sigma)$ is a deterministic system in the sense of Definition \ref{system}, since (cf. Subsection \ref{II.B})
the solution process of $\Sigma$
is uniquely determined. Note that for the case of non-probabilistic control system $\ol\Sigma$, one obtains $S(\ol\Sigma)=(X,X_0,U,\rTo,Y,H)$, where $X=\R^n$, $X_0$ is a subset of $\R^n$, $U=\mathcal{U}$, $x\rTo^{\upsilon} x'$ iff $x'=\ol\xi_{x\upsilon}(\tau)$ for some $\tau\in\R^+$, $Y=X$, $H=1_{X}$, and the metric on the output set reduces to the natural Euclidean one: $\mathbf{d}(y,y')=\left\Vert y-y'\right\Vert$, for any $y,y'\in{Y}$.

Notice that, since the concrete system $S(\Sigma)$ is uncountably infinite, 
it does not allow for a straightforward discrete controller synthesis with the techniques in the literature \cite{InfGames,ComputeGames}.  
We are thus interested in finding a finite abstract system that is (bi)similar to the concrete system $S(\Sigma)$. 
In order to discuss approximate (bi)simulation relations between two metric systems,
they have to share the output space (cf. Definition \ref{APSR}).
System $S(\Sigma)$ inherits a classical trace-based semantics (cf. definition of output run after \eqref{run}) \cite{katoen08},
however 
the outputs of $S(\Sigma)$ (and necessarily those of any approximately (bi)similar one) are random variables. 
This fact is especially important due to the metric $\mathbf{d}$ that the output set is endowed with:
for any non-probabilistic point one can always find a non-degenerate random variable that is as close as desired to the original point in the metric $\mathbf{d}$. 

To further elaborate the discussion in the previous paragraph,
let us consider the following example.
Let $A\subset \R^n$ be a set (of non-probabilistic points). 
Consider a safety problem, 
formulated as the satisfaction of the LTL formula\footnote{We refer the interested readers to \cite{katoen08} for the formal semantic of the temporal formula $\square \varphi_A$ expressing the safety property over set $A$.} $\square \varphi_A$, where $\varphi_A$ is a label (or proposition) characterising the set $A$. 
Suppose that over the abstract system we are able to synthesize a control strategy that makes an output run of the abstraction satisfy $\square \varphi_A$.
Although the run would in general be consisting of random variables $y$,
the fact that $y\in A$ means that $y$ has a Dirac probability distribution centered at $y$,
that is $y\in Y$ is a degenerate random variable that can be identified with a point in $A\subset\R^n\subset Y$.
Note that since any non-probabilistic point can be regarded as a random variable with a Dirac probability distribution centered at that point,
$\R^n$ can be embedded in $Y$, which we denote as $\R^n\subset Y$ with a slight abuse of notation.
As a result, satisfying $\square \varphi_A$ precisely means that the output run of the abstraction indeed stays in the set $A\subset \R^n$ forever.
On the other hand,
suppose that the original system is $\ve$-approximate bisimilar to the abstraction.
If we want to interpret the result $\square \varphi_A$ obtained over the abstraction,
we can guarantee that the corresponding output run of the original system satisfies $\square \varphi_{A_\ve}$,
that is any output $y$ of the run of the original system is within $\ve$ $\mathbf{d}$-distance from the set $A$: $\mathbf{d}(y,A) = \inf_{a\in A}\mathbf{d}(y,a) \leq \ve$.
Note that although the original set $A\subset Y$ is a subset of $\R^n\subset Y$,
its $\ve$-inflation $A_\ve = \{y\in Y:\mathbf{d}(y,A)\leq \ve\}$ is not a subset of $\R^n$ anymore and hence contains non-degenerate random variables.
In particular, $A_\ve\neq \{y\in \R^n:\inf_{a\in A}\|y - a\|\leq \ve\}$ and is in fact bigger than the latter set of non-probabilistic points.
As a result, although satisfying $\square \varphi_{A_\ve}$ does not necessarily mean that a trajectory of $\Sigma$ always stays within some non-probabilistic set,
it means that the associated random variables always belong to $A_\ve$ and, hence, are close to the non-probabilistic set $A$ with respect to the $q$\textsf{th} moment metric.

We are now able to provide two versions of finite abstractions:
one whose outputs are always non-probabilistic points --
that is degenerate random variables,
elements of $\R^n\subset Y$,
and one whose outputs can be non-degenerate random variables.
Recall, however,
that in both cases the output set is still the whole $Y$ and the semantics is the same as for the original system $S(\Sigma)$.

\subsection{Main results}
\label{ssec:main.res}
This subsection contains the main contributions of the paper.
We show that for any $\delta$-ISS-M$_q$ (resp. $\delta$-ISS) stochastic control system $\Sigma$ (resp. non-probabilistic control system $\ol\Sigma$),
and for any precision level $\varepsilon\in\R^+$, we can construct a finite system that is $\varepsilon$-approximate bisimilar to $\Sigma$ (resp. $\ol\Sigma$) without any state set discretization.
The results in this subsection rely on additional assumptions on the model $\Sigma$ that are described next.
We restrict our attention to stochastic control systems $\Sigma$ with input sets $\mathsf{U}$ that are assumed to be finite unions of boxes (cf. Subsection \ref{II.A}).
We further restrict our attention to sampled-data stochastic control systems, where input curves belong to set $\mathcal{U}_\tau$, 
which contains exclusively curves that are constant over intervals of length $\tau\in\R^+$, i.e.
$$ \mathcal{U}_\tau=\Big\{\upsilon\in\mathcal{U}\,\,\vert\,\,\upsilon(t)=\upsilon((k-1)\tau), t\in[(k-1)\tau,k\tau[, k\in\N\Big\}.$$
Let us denote by $S_\tau(\Sigma)$ a sub-system of $S(\Sigma)$ obtained by selecting those transitions of $S(\Sigma)$ corresponding to solution processes of duration $\tau$ and to control inputs in $\mathcal{U}_\tau$.
This can be seen as the time discretization of $\Sigma$.
More precisely,
given a stochastic control system $\Sigma$ and the corresponding metric system $S(\Sigma)$,
we define a new associated metric system $$S_\tau(\Sigma)=\left(X_\tau,X_{\tau0},U_\tau,\rTo_\tau,Y_\tau,H_\tau\right),$$
where $X_\tau=X$, $X_{\tau0}=X_0$, $U_\tau=\mathcal{U}_\tau$, $Y_\tau=Y$, $H_\tau=H$, and
\begin{itemize}
\item $x_\tau\rTo^{\upsilon_\tau}_\tau{x'_\tau}$ if $x_\tau$ and $x'_\tau$ are measurable,
respectively, in $\sigalg_{k\tau}$ and $\sigalg_{(k+1)\tau}$ for some $k \in \N_0$, and
there exists a solution process $\xi:\Omega\times\R_0^+\rightarrow\R^n$ of $\Sigma$ satisfying $\xi(k\tau) = x_\tau$ and $\xi_{x_\tau\upsilon_\tau}(\tau) = x'_\tau$ $\PP$-a.s..
\end{itemize}

Similarly, one can define $S_\tau(\ol\Sigma)$ as the time discretization of $\ol\Sigma$. Notice that a finite state run
$$x_{0}\rTo^{\upsilon_{0}}_{\tau}x_{1}\rTo^{\upsilon_1}_{\tau}\cdots\rTo^{\upsilon_{N-1}}_{\tau} x_{N}$$ of $S_{\tau}(\Sigma)$, where $\upsilon_{i-1}\in\mathcal{U}_\tau$ and $x_i=\xi_{x_{i-1}\upsilon_{i-1}}(\tau)$ $\PP$-a.s. for $i=1,\ldots,N$, captures the solution process of $\Sigma$ at times $t=0,\tau,\ldots,N\tau$,
started from the initial condition $x_{0}$ and resulting from a control input $\upsilon$ obtained by the concatenation of the input curves
$\upsilon_{i-1}$ \big(i.e. $\upsilon(t)=\upsilon_{i-1}(t)$ for any $t\in [(i-1)\tau,i\,\tau[$\big), for $i=1,\ldots,N$.

\medskip

Let us proceed introducing two fully symbolic systems for the concrete model $\Sigma$.
Consider a stochastic control system $\Sigma$ and a tuple $\mathsf{q}=\left(\tau,\mu,N,x_s\right)$ of parameters, where $\tau$ is the sampling time, $\mu$ is the input set quantization, $N\in\N$ is a \emph{temporal horizon}, and $x_s\in\R^n$ is a \emph{source state}.
Given $\Sigma$ and $\mathsf{q}$, let us introduce the following two symbolic systems:
\begin{align}\notag
S_{\mathsf{q}}(\Sigma)&=(X_{\mathsf{q}},X_{\mathsf{q}0},U_{\mathsf{q}},\rTo_{\mathsf{q}},Y_{\mathsf{q}},H_{\mathsf{q}}),\\\notag
\ol{S}_{\mathsf{q}}(\Sigma)&=(X_{\mathsf{q}},X_{\mathsf{q}0},U_{\mathsf{q}},\rTo_{\mathsf{q}},Y_{\mathsf{q}},\ol{H}_{\mathsf{q}}),
\end{align}
consisting of:
\begin{itemize}
\item $X_{\mathsf{q}}=\big\{\left(u_1,\ldots,u_N\right)\in\overbrace{[\mathsf{U}]_\mu\times\cdots\times[\mathsf{U}]_\mu}^{N~\text{times}}\big\}$;
\item $X_{\params0}=X_{\mathsf{q}}$;
\item $U_{\mathsf{q}}=[\mathsf{U}]_\mu$;
\item $x_\params\rTo_{\mathsf{q}}^{u_{\mathsf{q}}}x'_{\mathsf{q}}$, where $x_\params=(u_1,u_2,\ldots,u_N)$, if and only if $x'_\params=(u_2,\ldots,u_N,u_\params)$;
\item $Y_{\mathsf{q}}$ is the set of all $\R^n$-valued random variables defined on the probability space $(\Omega,\sigalg,\PP)$;
\item $H_{\mathsf{q}}(x_\params)=\xi_{x_sx_\params}(N\tau)$ $\left(\ol{H}_{\mathsf{q}}(x_\params)=\ol\xi_{x_sx_\params}(N\tau)\right)$.
\end{itemize}

Note that the transition relation in $S_\params(\Sigma)$ admits a compact representation in the form of a \emph{shift operator}. 
We have abused notation by identifying $u_{\mathsf{q}}\in[\mathsf{U}]_\mu$ with the constant input curve with domain $[0,\tau[$ and value $u_{\mathsf{q}}$, 
and by identifying $x_\params\in [\mathsf{U}]_\mu^N$ with the concatenation of $N$ control inputs $u_i\in [\mathsf{U}]_\mu$ \big(i.e. $x_\params(t)=u_i$ for any $t\in[(i-1)\tau,i\tau[$\big) for $i=1,\ldots,N$. Notice that the proposed abstraction $S_\params(\Sigma)$ $\left(\text{resp.}~\ol{S}_{\params}(\Sigma)\right)$ is a deterministic system in the sense of Definition \ref{system}. Note that $H_\params$ and $\ol{H}_\params$ are mappings from a non-probabilistic point $x_\params$ to the random variable $\xi_{x_sx_\params}(N\tau)$ and to the one with a Dirac probability distribution centered at $\ol\xi_{x_sx_\params}(N\tau)$, respectively. 
Finally, note that in the case of a non-probabilistic control system $\ol\Sigma$, one obtains the symbolic system $\ol S_\params(\ol\Sigma)=(X_{\mathsf{q}},X_{\mathsf{q}0},U_{\mathsf{q}},\rTo_{\mathsf{q}},Y_{\mathsf{q}},\ol{H}_{\mathsf{q}})$, where $X_{\mathsf{q}}$, $X_{\params0}$, $U_{\mathsf{q}}$, $\rTo_{\mathsf{q}}$, and $\ol{H}_{\mathsf{q}}$ are the same as before, but where the output set reduces to $Y_\params=\R^n$.  

Note that the idea behind the definitions of symbolic models $S_\params(\Sigma)$ and $\ol{S}_\params(\Sigma)$ hinges on the $\delta$-ISS-M$_q$ property. Given an input $\upsilon\in\mathcal{U}$, all solution processes of $\Sigma$ under the input $\upsilon$ forget the mismatch between their initial conditions and converge to each other with respect to the $q\textsf{th}$ moment metric. Therefore, the longer the applied inputs are, the less relevant is the mismatch between initial conditions. Then, the fundamental idea of the introduced abstractions consists in taking the $N$ applied inputs as the state of the symbolic model. 

The control synthesis over $\ol{S}_\params(\Sigma)$ (resp. $\ol{S}_\params(\ol\Sigma)$) is simple as the outputs are non-probabilistic points, 
whereas for $S_\params(\Sigma)$ it is perhaps less intuitive. Hence, we discuss it in more details later in Subsection \ref{ssec:ltl.control}. 


\begin{example}
An example of an abstraction $S_\params(\Sigma)$ with $N=3$ and $U_{\mathsf{q}}=\left\{0,1\right\}$ is depicted in Figure \ref{fig3}, where the initial states are shown as targets of sourceless arrows. 
Note that, regardless of the size of the state set and of its dimension, 
$S_\params(\Sigma)$ only has eight possible states, namely: 
$$X_\params=\{(0,0,0),(0,0,1),(0,1,0),(0,1,1),(1,0,0),(1,0,1),(1,1,0),(1,1,1)\}.$$ 

\end{example}
\begin{figure}
\begin{center}
\scalebox{1}{
\begin{tikzpicture}[shorten >=1pt,node distance=3.50cm,auto]
\tikzstyle{state}=[state with output]
\tikzstyle{every state}=[draw=brown!100,very thick,fill=brown!50,minimum size=0.5cm]
\node[state,initial,initial text=,initial where=left] (x1) {$(0,0,0)$ \nodepart{lower} {$\xi_{x_s(0,0,0)}(3\tau)$}};
\node[state,initial,initial text=,initial where=above] (x2) [right of=x1] {$(0,0,1)$ \nodepart{lower} {$\xi_{x_s(0,0,1)}(3\tau)$}};
\node[state,initial,initial text=,initial where=above] (x3) [right of=x2] {$(0,1,1)$ \nodepart{lower} {$\xi_{x_s(0,1,1)}(3\tau)$}};
\node[state,initial,initial text=,initial where=right] (x4) [right of=x3] {$(1,1,1)$ \nodepart{lower} {$\xi_{x_s(1,1,1)}(3\tau)$}};
\node[state,initial,initial text=,initial where=below] (x5) [below of=x4] {$(1,1,0)$ \nodepart{lower} {$\xi_{x_s(1,1,0)}(3\tau)$}};
\node[state,initial,initial text=,initial where=below] (x6) [below of=x3] {$(1,0,1)$ \nodepart{lower} {$\xi_{x_s(1,0,1)}(3\tau)$}};
\node[state,initial,initial text=,initial where=below] (x7) [below of=x2] {$(0,1,0)$ \nodepart{lower} {$\xi_{x_s(0,1,0)}(3\tau)$}};
\node[state,initial,initial text=,initial where=below] (x8) [below of=x1] {$(1,0,0)$ \nodepart{lower} {$\xi_{x_s(1,0,0)}(3\tau)$}};
\path[->] (x1) edge node {1} (x2)
(x1) edge [loop above] node {0} (x1)
(x2) edge node {1} (x3)
(x2) edge node {0} (x7)
(x3) edge node {1} (x4)
(x3) edge node {0} (x5)
(x4) edge node {0} (x5)
(x4) edge [loop above] node {1} (x4)
(x5) edge node {1} (x6)
(x5) edge [bend right=-40] node {0} (x8)
(x6) edge node {0} (x7)
(x6) edge node {1} (x3)
(x7) edge node {0} (x8)
(x7) edge [bend right=-35] node {1} (x6)
(x8) edge node {0} (x1)
(x8) edge node {1} (x2);
\end{tikzpicture}
}
\end{center}
\caption {Example of abstraction $S_\params(\Sigma)$ with $N=3$ and $U_{\mathsf{q}}=\left\{0,1\right\}$. 
The lower part of the states are intended as labels, corresponding to their output values. 
Initial states are targets of sourceless arrows. }\label{fig3} 
\end{figure}
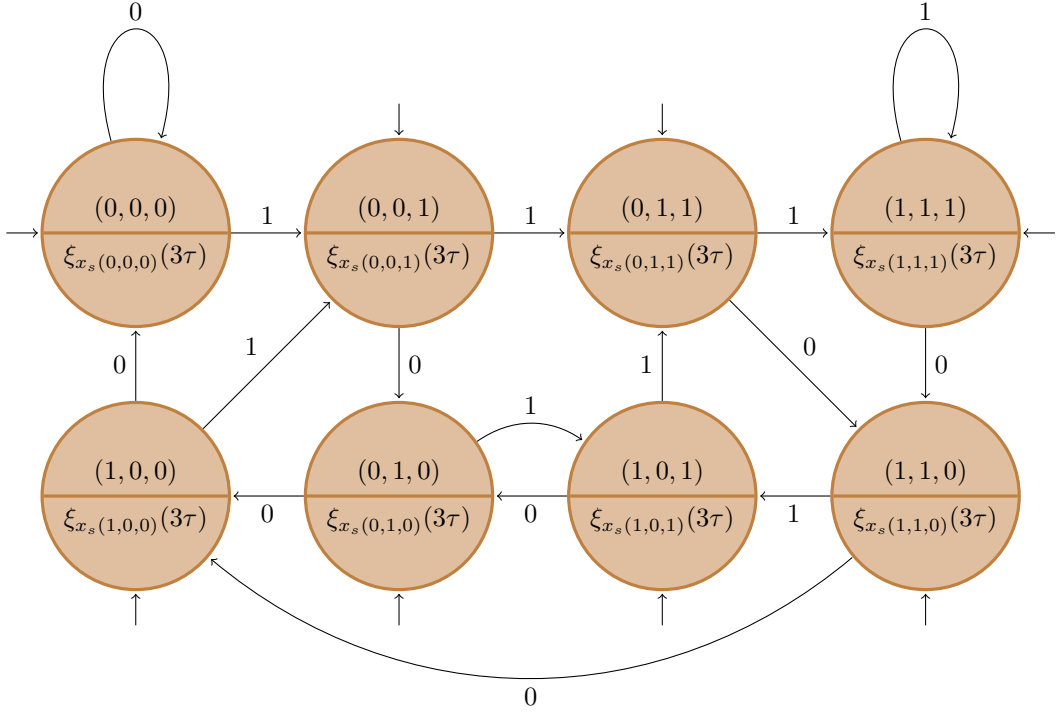

In order to obtain some of the main results of this work,
we raise an assumption on the $\delta$-ISS-M$_q$ Lyapunov function $V$ we will work with, as follows:
\begin{equation}\label{supplement}
\vert V(x,y)-V(x,z)\vert\leq\widehat\gamma(\Vert y-z\Vert),
\end{equation}
for any $x,y,z\in\R^n$, and for some $\mathcal{K}_\infty$ and concave function $\widehat\gamma$. As long as one is interested to work in a compact subset of $\R^n$, the function $\widehat\gamma$ in \eqref{supplement} can be readily computed. Indeed, for all $x,y,z\in \mathsf{D}$, where $\mathsf{D}\subset\R^n$ is compact, one can readily apply the mean value theorem to the function $y\rightarrow V(x,y)$ to get $$\left\vert V(x,y)-V(x,z)\right\vert\leq\widehat\gamma\left(\Vert y-z\Vert\right),~~\text{where}~~\widehat\gamma({r})=\left(\max_{x,y\in \mathsf{D}\backslash\Delta}\left\Vert\frac{\partial{V}(x,y)}{\partial{y}}\right\Vert\right)r.$$
In particular, for the $\delta$-ISS-M$_1$ Lyapunov function $V(x,x')\Let\sqrt{\left(x-x'\right)^TP\left(x-x'\right)}$, for some positive definite matrix $P\in\R^{n\times{n}}$ and for all $x,x'\in\R^n$,
one obtains $\widehat\gamma({r})= \frac{\lambda_{\max}\left(P\right)}{\sqrt{\lambda_{\min}\left(P\right)}} r$ \cite[Proposition 10.5]{paulo},
which satisfies \eqref{supplement} globally on $\R^n$. Note that for non-probabilistic control systems, the concavity assumption of $\widehat\gamma$ is not required. 

Before providing the main results of the paper, we need the following technical lemmas.

\begin{lemma}\label{lemma1}
Consider a stochastic control system $\Sigma$, admitting a $\delta$-ISS-M$_q$ Lyapunov function $V$, and consider its corresponding symbolic model $\ol{S}_\params(\Sigma)$. We have that 
\begin{align}\label{upper_bound}
\eta\leq&\left(\ul\alpha^{-1}\left(\mathsf{e}^{-\kappa N\tau}\max_{u_\params\in U_\params}V\left(\ol\xi_{x_su_\params}(\tau),x_s\right)\right)\right)^{1/q},
\end{align}
where
\begin{align}\label{eta}
\eta\Let\max_{\substack{u_\params\in U_\params,x_\params\in X_\params\\x'_\params\in\mathbf{Post}_{u_\params}(x_\params)}}\left\Vert\ol\xi_{\ol{H}_\params(x_\params)u_\params}(\tau)-\ol{H}_\params\left(x'_\params\right)\right\Vert.
\end{align}
\end{lemma}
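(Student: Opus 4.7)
The plan is to exploit the semigroup (flow) property of the ODE solutions together with the Lyapunov contraction along identical input curves, then apply the lower bound on $V$ from condition (ii) of Definition \ref{delta_PISS_Lya} to pass from $V$ to $\|\cdot\|^q$.

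First I would unpack the definitions. Write $x_\params = (u_1,\ldots,u_N)$ and let $x'_\params \in \mathbf{Post}_{u_\params}(x_\params)$, so by the shift structure of $\rTo_\params$ we have $x'_\params = (u_2,\ldots,u_N,u_\params)$. Recall $\ol H_\params(x_\params) = \ol\xi_{x_s x_\params}(N\tau)$, where $x_\params$ is read as the concatenated input curve of length $N\tau$. The quantity $\ol\xi_{\ol H_\params(x_\params)u_\params}(\tau)$ is then the state reached by applying the constant input $u_\params$ for time $\tau$ starting from $\ol H_\params(x_\params)$. By the semigroup property of ODE solutions, this equals $\ol\xi_{x_s\tilde x_\params}((N+1)\tau)$ with $\tilde x_\params = (u_1,u_2,\ldots,u_N,u_\params)$, and decomposing the first $\tau$-block gives
\begin{equation*}
\ol\xi_{\ol H_\params(x_\params)u_\params}(\tau) = \ol\xi_{\ol\xi_{x_s u_1}(\tau),\,x'_\params}(N\tau),
\end{equation*}
while $\ol H_\params(x'_\params) = \ol\xi_{x_s,x'_\params}(N\tau)$. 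Thus the two vectors inside the norm in \eqref{eta} are solutions of $\ol\Sigma$ driven by the \emph{same} input curve $x'_\params$ over the \emph{same} duration $N\tau$, differing only in their initial conditions $\ol\xi_{x_s u_1}(\tau)$ and $x_s$.

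Next I would apply the Lyapunov decrease. Since both trajectories use identical inputs, condition (iii) of Definition \ref{delta_PISS_Lya} evaluated at $u=u'$ (so $\rho(\|u-u'\|)=0$) implies, along noise-free trajectories of $\ol\Sigma$, the differential inequality $\diff V/\diff t \leq -\kappa V$. Standard comparison then yields
\begin{equation*}
V\bigl(\ol\xi_{\ol\xi_{x_s u_1}(\tau),\,x'_\params}(N\tau),\, \ol\xi_{x_s,\,x'_\params}(N\tau)\bigr) \leq \mathsf{e}^{-\kappa N\tau}\, V\bigl(\ol\xi_{x_s u_1}(\tau),\, x_s\bigr).
\end{equation*}
Combining this with condition (ii), namely $\ul\alpha(\|y-y'\|^q) \leq V(y,y')$, and applying $\ul\alpha^{-1}$ (which is well-defined and monotone since $\ul\alpha \in \mathcal{K}_\infty$), I obtain
\begin{equation*}
\bigl\|\ol\xi_{\ol H_\params(x_\params)u_\params}(\tau) - \ol H_\params(x'_\params)\bigr\|^q \leq \ul\alpha^{-1}\!\left(\mathsf{e}^{-\kappa N\tau}\, V\bigl(\ol\xi_{x_s u_1}(\tau),\,x_s\bigr)\right).
\end{equation*}

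Finally I would take the maximum. The right-hand side depends on $x_\params$ only through its first entry $u_1 \in [\mathsf{U}]_\mu = U_\params$, so maximizing over $u_\params \in U_\params$, $x_\params \in X_\params$ and $x'_\params \in \mathbf{Post}_{u_\params}(x_\params)$ simply amounts to maximizing $V(\ol\xi_{x_s u_\params}(\tau),x_s)$ over $u_\params \in U_\params$ (using monotonicity of $\ul\alpha^{-1}$ to pull the max inside). Taking $q$-th roots yields the claimed inequality \eqref{upper_bound}.

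The only subtle step is the semigroup rewriting that pairs up the two trajectories against the same input curve; after that, the bound is a direct application of Lyapunov contraction under identical inputs. No compactness or further regularity assumption on $V$ beyond what Definition \ref{delta_PISS_Lya} already provides is needed, since the diffusion has been discarded in $\ol\Sigma$ and the $\rho$-term vanishes when $u=u'$.
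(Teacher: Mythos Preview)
Your proof is correct and follows essentially the same approach as the paper's own proof: you unpack the shift structure of $x'_\params$, use the semigroup property to recognize both quantities as ODE solutions under the identical input curve $x'_\params$ over horizon $N\tau$ from initial conditions $\ol\xi_{x_s u_1}(\tau)$ and $x_s$, apply the Lyapunov contraction $V(t)\leq \mathsf{e}^{-\kappa N\tau}V(0)$ under identical inputs, and then invert $\ul\alpha$ and maximize. The only cosmetic difference is that you spell out the Gr\"onwall/comparison step for the Lyapunov decay explicitly, whereas the paper defers that step to the proof of Theorem~\ref{the_Lya} in \cite{majid8}.
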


The proof of Lemma \ref{lemma1} is provided in the Appendix.
The next lemma provides similar result as the one in Lemma \ref{lemma1}, but without explicitly using any Lyapunov function.

\begin{lemma}\label{lemma2}
Consider a $\delta$-ISS-M$_q$ stochastic control system $\Sigma$ and its corresponding symbolic model $\ol{S}_\params(\Sigma)$. We have:
\begin{align}\label{upper_bound1}
\eta\leq&\left(\beta\left(\max_{u_\params\in U_\params}\left\Vert\ol\xi_{x_su_\params}(\tau)-x_s\right\Vert^q,N\tau\right)\right)^{1/q},
\end{align}
where
$\eta$ is given in \eqref{eta} and $\beta$ is the $\mathcal{KL}$ function appearing in \eqref{delta_PISS}.\qed
\end{lemma}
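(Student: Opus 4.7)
\textbf{Proof plan for Lemma \ref{lemma2}.} The plan is to exploit the semigroup property of the noise-free flow $\ol{\xi}$ to rewrite the quantity inside $\eta$ as a pair of $\ol{\Sigma}$-trajectories that differ only in their initial condition, and then to invoke the $\delta$-ISS-M$_q$ inequality \eqref{delta_PISS}. A preliminary observation is that $\ol{\Sigma}$ itself enjoys the $\delta$-ISS-M$_q$ property with the same $\mathcal{KL}$ function $\beta$: any $\delta$-ISS-M$_q$ Lyapunov function $V$ for $\Sigma$ is also a $\delta$-ISS-M$_q$ Lyapunov function for $\ol{\Sigma}$, since setting the diffusion to zero only removes the nonpositive trace term in condition (iii) of Definition \ref{delta_PISS_Lya}, and Theorem \ref{the_Lya} then yields the property. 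Moreover, since solutions of $\ol{\Sigma}$ with deterministic initial data are deterministic, the expectations in \eqref{delta_PISS} collapse to the values themselves.

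Now fix arbitrary $u_\params \in U_\params$, $x_\params = (u_1,\ldots,u_N) \in X_\params$, and its (unique) successor $x'_\params = (u_2,\ldots,u_N, u_\params)$. By the semigroup property of the ODE flow,
\[
\ol{\xi}_{\ol{H}_\params(x_\params)\, u_\params}(\tau) \;=\; \ol{\xi}_{\ol{\xi}_{x_s u_1}(\tau),\, x'_\params}(N\tau),
\]
because the $(N{+}1)\tau$-long trajectory starting at $x_s$ under the concatenated input $(u_1,\ldots,u_N,u_\params)$ passes through $\ol{\xi}_{x_s u_1}(\tau)$ at time $\tau$ and then evolves for time $N\tau$ under the input curve $x'_\params$. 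On the other hand, directly from the definition of the output map,
\[
\ol{H}_\params(x'_\params) \;=\; \ol{\xi}_{x_s,\, x'_\params}(N\tau).
\]
Hence the two quantities inside the norm in \eqref{eta} are values at time $N\tau$ of two $\ol{\Sigma}$-trajectories driven by the \emph{same} input curve $x'_\params$, started respectively from $\ol{\xi}_{x_s u_1}(\tau)$ and from $x_s$.

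Applying the $\delta$-ISS-M$_q$ inequality to $\ol{\Sigma}$ with $a = \ol{\xi}_{x_s u_1}(\tau)$, $a' = x_s$, and $\upsilon = \upsilon' = x'_\params$ (so that the $\gamma$ term vanishes), and using the noted collapse of the expectations, gives
\[
\bigl\|\ol{\xi}_{\ol{\xi}_{x_s u_1}(\tau),\, x'_\params}(N\tau) - \ol{\xi}_{x_s,\, x'_\params}(N\tau)\bigr\|^{q} \;\leq\; \beta\bigl(\|\ol{\xi}_{x_s u_1}(\tau) - x_s\|^{q},\, N\tau\bigr).
\]
Since $\beta(\cdot, N\tau)$ is of class $\mathcal{K}$, the right-hand side is majorised by $\beta\bigl(\max_{u\in U_\params}\|\ol{\xi}_{x_s u}(\tau) - x_s\|^{q}, N\tau\bigr)$. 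Taking the $q$-th root (legitimate since $q\geq 1$) and then maximising the left-hand side over $u_\params \in U_\params$, $x_\params \in X_\params$, and $x'_\params \in \mathbf{Post}_{u_\params}(x_\params)$ produces exactly the bound \eqref{upper_bound1} on $\eta$.

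I do not foresee a genuine obstacle: the only slightly delicate point is asserting that the same $\beta$ serves both $\Sigma$ and $\ol{\Sigma}$, but this is handled transparently by the Lyapunov characterisation in Theorem \ref{the_Lya}. The remainder is a purely algebraic rearrangement via the flow's semigroup property together with a monotonicity argument on $\beta$.
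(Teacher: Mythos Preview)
Your argument is essentially identical to the paper's: both use the semigroup property of the noise-free flow to rewrite $\ol\xi_{\ol H_\params(x_\params)u_\params}(\tau)$ and $\ol H_\params(x'_\params)$ as two $\ol\Sigma$-trajectories driven by the same input curve $x'_\params=(u_2,\ldots,u_N,u_\params)$ from initial conditions $\ol\xi_{x_s u_1}(\tau)$ and $x_s$, apply the incremental stability estimate at time $N\tau$, and then maximise using the monotonicity of $\beta(\cdot,N\tau)$.

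The only difference is the preamble. The paper simply invokes inequality \eqref{delta_PISS} for the noise-free trajectories, implicitly relying on the remark after Definition~\ref{dISS} that a $\delta$-ISS-M$_q$ system is $\delta$-ISS in the absence of noise. You instead try to justify this via the Lyapunov characterisation, but there are two small slips: (i) for condition (iii) of Definition~\ref{delta_PISS_Lya} to persist when $\sigma$ is set to zero, the trace term that disappears must be \emph{nonnegative} (not nonpositive as you wrote), and that in turn requires the Hessian of $V$ to be positive semidefinite---an assumption made in Lemma~\ref{lemma3} but not here; (ii) Lemma~\ref{lemma2} assumes only that $\Sigma$ is $\delta$-ISS-M$_q$, not that a Lyapunov function exists. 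Neither point affects the core of your proof, which matches the paper's.
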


The proof of Lemma \ref{lemma2} is provided in the Appendix.
The next two lemmas provide similar results as Lemmas \ref{lemma1} and \ref{lemma2}, but by using the symbolic model $S_\params(\Sigma)$ with probabilistic output values rather than $\ol{S}_\params(\Sigma)$ with non-probabilistic output values.

\begin{lemma}\label{lemma4}
Consider a stochastic control system $\Sigma$, admitting a $\delta$-ISS-M$_q$ Lyapunov function $V$, and consider its corresponding symbolic model $S_\params(\Sigma)$. One has:
\begin{align}\label{upper_bound4}
\widehat\eta\leq&\left(\ul\alpha^{-1}\left(\mathsf{e}^{-\kappa N\tau}\max_{u_\params\in U_\params}\EE\left[V\left(\xi_{x_su_\params}(\tau),x_s\right)\right]\right)\right)^{1/q},
\end{align}
where
\begin{align}\label{eta1}
\widehat\eta\Let\max_{\substack{u_\params\in U_\params,x_\params\in X_\params\\x'_\params\in\mathbf{Post}_{u_\params}(x_\params)}}\EE\left[\left\Vert\xi_{H_\params(x_\params)u_\params}(\tau)-H_\params\left(x'_\params\right)\right\Vert\right].
\end{align}
\end{lemma}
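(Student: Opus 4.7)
\textbf{Proof plan for Lemma \ref{lemma4}.} The plan is to mimic the argument of Lemma \ref{lemma1}, which compares two trajectories driven by the same shifted input sequence but issued from different initial conditions, while adding the expectation machinery needed to handle the diffusion. Fix a triple $u_\params\in U_\params$, $x_\params=(u_1,\dots,u_N)\in X_\params$, and its successor $x'_\params=(u_2,\dots,u_N,u_\params)$ achieving the maximum defining $\widehat\eta$. The first key observation, which I would state explicitly, is the identity
\[
\xi_{H_\params(x_\params)\,u_\params}(\tau)=\xi_{x_s\,(u_1,u_2,\dots,u_N,u_\params)}((N+1)\tau),\qquad H_\params(x'_\params)=\xi_{x_s\,(u_2,\dots,u_N,u_\params)}(N\tau).
\]
Both processes apply the same $N$-block of inputs $(u_2,\dots,u_N,u_\params)$ for a duration $N\tau$; the only difference is their initial condition at the start of this common window, namely $\xi_{x_s\,u_1}(\tau)$ for the first and $x_s$ for the second. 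Writing the comparison in this form is the geometric crux of the proof.

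Next I would exploit the $\delta$-ISS-M$_q$ Lyapunov condition \textit{(iii)} with matched inputs $u=u'$, in which case the $\rho$-term disappears and $\mathcal{L}^{u,u}V(x,x')\leq -\kappa V(x,x')$. Applying Dynkin's formula (or Gr\"onwall's inequality applied to $t\mapsto \EE[V(\xi(t),\xi'(t))]$) to the pair of solution processes of the common $N$-block, piecewise on each of the $N$ sub-intervals of length $\tau$, yields
\[
\EE\bigl[V\bigl(\xi_{H_\params(x_\params)\,u_\params}(\tau),\,H_\params(x'_\params)\bigr)\bigr]\leq \mathsf{e}^{-\kappa N\tau}\,\EE\bigl[V\bigl(\xi_{x_s\,u_1}(\tau),\,x_s\bigr)\bigr].
\]
Bounding the right-hand side by $\mathsf{e}^{-\kappa N\tau}\max_{u_\params\in U_\params}\EE[V(\xi_{x_su_\params}(\tau),x_s)]$ is immediate, as $u_1$ ranges over $[\mathsf{U}]_\mu=U_\params$.

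Finally, I would pass from the Lyapunov estimate to a norm estimate by invoking convexity twice. First, by condition \textit{(i)} of Definition \ref{delta_PISS_Lya}, $\ul\alpha$ is convex, so Jensen's inequality combined with \textit{(ii)} gives
\[
\ul\alpha\bigl(\EE\bigl[\|\xi_{H_\params(x_\params)u_\params}(\tau)-H_\params(x'_\params)\|^q\bigr]\bigr)\leq \EE\bigl[\ul\alpha(\|\cdot\|^q)\bigr]\leq \EE\bigl[V(\cdot,\cdot)\bigr],
\]
so that applying $\ul\alpha^{-1}$ yields the bound on $\EE[\|\cdot\|^q]$. Second, since $q\geq 1$, the map $r\mapsto r^q$ is convex and another application of Jensen gives $\EE[\|\cdot\|]\leq (\EE[\|\cdot\|^q])^{1/q}$. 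Chaining these two inequalities with the Lyapunov estimate above and taking the maximum over $u_\params,x_\params,x'_\params$ yields \eqref{upper_bound4}.

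I expect the only real subtlety to be the exponential decay step: justifying $\EE[V(\xi(t),\xi'(t))]\leq \mathsf{e}^{-\kappa t}\EE[V(\xi(0),\xi'(0))]$ for matched inputs requires some care about integrability of the stopped processes so that Dynkin's formula applies rigorously, but this is standard under the Lipschitz assumptions of Definition \ref{Def_control_sys} and was already invoked in the analogous deterministic argument of Lemma \ref{lemma1}. Everything else is routine propagation of the Lyapunov bound through Jensen.
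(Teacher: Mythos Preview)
Your proposal is correct and follows essentially the same approach as the paper, whose proof is the one-line remark ``similar to the one of Lemma~\ref{lemma1} and can be shown by using convexity of $\ul\alpha$ and Jensen inequality.'' You have simply spelled this out: the shift identity reducing the comparison to two trajectories with matched inputs over an $N\tau$ window (exactly as in Lemma~\ref{lemma1}), the exponential decay of $\EE[V(\cdot,\cdot)]$ from condition~(iii) via Dynkin/Gr\"onwall, and the passage to the norm bound via Jensen applied to the convex $\ul\alpha$ and to $r\mapsto r^q$.
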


\begin{proof}
The proof is similar to the one of Lemma \ref{lemma1} and can be shown by using convexity of $\ul\alpha$ and Jensen inequality \cite{oksendal}. \qed
\end{proof}


\begin{lemma}
Consider a $\delta$-ISS-M$_q$ stochastic control system $\Sigma$ and its corresponding symbolic model $S_\params(\Sigma)$. We have:
\begin{align}\label{upper_bound5}
\widehat\eta\leq&\left(\beta\left(\max_{u_\params\in U_\params}\EE\left[\left\Vert\xi_{x_su_\params}(\tau)-x_s\right\Vert^q\right],N\tau\right)\right)^{1/q},
\end{align}
where
$\widehat\eta$ is given in \eqref{eta1} and $\beta$ is the $\mathcal{KL}$ function appearing in \eqref{delta_PISS}.\qed
\end{lemma}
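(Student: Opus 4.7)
The plan is to mirror the deterministic argument of Lemma~\ref{lemma2}, but with the $\delta$-ISS-M$_q$ estimate \eqref{delta_PISS} in place of its noise-free counterpart, and then to promote the resulting $q$th-moment bound to a bound on the first moment via Jensen's inequality. Fix $u_\params \in U_\params$, a state $x_\params = (u_1,\ldots,u_N) \in X_\params$, and its shift-successor $x'_\params = (u_2,\ldots,u_N,u_\params) \in \mathbf{Post}_{u_\params}(x_\params)$. Unrolling the definitions, $H_\params(x'_\params) = \xi_{x_s x'_\params}(N\tau)$, while $\xi_{H_\params(x_\params) u_\params}(\tau)$ is the value at time $(N+1)\tau$ of the solution starting from $x_s$ under the concatenated input $(u_1,u_2,\ldots,u_N,u_\params)$. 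By the time-homogeneity of the SDE coefficients, this latter quantity can be re-expressed, using the shifted Brownian motion $\widetilde W_s := W_{s+\tau}-W_\tau$, as the solution at time $N\tau$ of the same SDE driven by $\widetilde W$, started from the random variable $\xi_{x_s u_1}(\tau)$ and driven by the input curve $x'_\params$.

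With both endpoints now represented as solutions on $[0,N\tau]$ under the same input $x'_\params$ but from the two initial conditions $x_s$ and $\xi_{x_s u_1}(\tau)$, the $\delta$-ISS-M$_q$ bound \eqref{delta_PISS} applies and the $\gamma$-term drops out, giving
\[
\EE\!\left[\bigl\Vert \xi_{H_\params(x_\params) u_\params}(\tau) - H_\params(x'_\params)\bigr\Vert^q\right] \le \beta\!\left(\EE\!\left[\Vert \xi_{x_s u_1}(\tau) - x_s\Vert^q\right],\, N\tau\right).
\]
Applying Jensen's inequality to the convex map $r \mapsto r^q$ (for $q\geq 1$) upgrades this to
\[
\EE\!\left[\bigl\Vert \xi_{H_\params(x_\params) u_\params}(\tau) - H_\params(x'_\params)\bigr\Vert\right] \le \Bigl(\beta\!\left(\EE\!\left[\Vert \xi_{x_s u_1}(\tau) - x_s\Vert^q\right],\, N\tau\right)\Bigr)^{1/q}.
\]
Taking the maximum over all admissible $u_\params$, $x_\params$, and $x'_\params \in \mathbf{Post}_{u_\params}(x_\params)$, together with the monotonicity of $\beta$ in its first argument and the observation that $u_1$ ranges over all of $U_\params$ as $x_\params$ varies, yields the stated bound after relabelling $u_1$ as $u_\params$.

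The only real subtlety I anticipate is the measurability requirement in \eqref{delta_PISS}, which asks the two initial random variables to be $\sigalg_0$-measurable, whereas $\xi_{x_s u_1}(\tau)$ is naturally $\sigalg_\tau$-measurable. The Brownian-motion shift mentioned above is exactly what circumvents this: passing to $\widetilde W$ and the shifted filtration $\widetilde{\sigalg}_s := \sigalg_{s+\tau}$ makes $\xi_{x_s u_1}(\tau)$ measurable at ``time zero'' with respect to $\widetilde{\sigalg}_0$, and \eqref{delta_PISS} applies verbatim. This is the same device that underpins the deterministic Lemma~\ref{lemma2}, so no machinery beyond Definition~\ref{dISS} and Jensen's inequality is needed.
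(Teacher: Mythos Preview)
Your proposal is correct and follows the same route as the paper: the paper's own proof is the one-liner ``similar to the one of Lemma~\ref{lemma2} and can be shown by using Jensen inequality,'' and you have simply spelled this out, including the passage from the $q$th-moment bound to the first-moment bound via Jensen. Your explicit discussion of the Brownian shift needed to render $\xi_{x_s u_1}(\tau)$ measurable at the new ``time zero'' is more careful than what the paper writes, but it is exactly the mechanism implicit in the paper's appeal to Lemma~\ref{lemma2}.
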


\begin{proof}
The proof is similar to the one of Lemma \ref{lemma2} and can be shown by using Jensen inequality \cite{oksendal}. 
\end{proof}

\begin{remark}
It can be readily verified that by choosing $N$ sufficiently large, $\eta$ and $\widehat\eta$ can be made arbitrarily small. 
One can as well try to reduce the upper bound for $\eta$ (in \eqref{upper_bound} for example) by selecting the initial point $x_s$ as follows: 
\begin{align}\label{optimal}
x_s=\displaystyle\argmin_{x\in\R^n}\max_{u_\params\in U_\params}V\left(\ol\xi_{xu_\params}(\tau),x\right).
\end{align}
\end{remark}

We can now present the first main result of the paper,
which relates the existence of a $\delta$-ISS-M$_q$ Lyapunov function to the construction of an approximately bisimilar symbolic model.
\begin{theorem}\label{main_theorem}
Consider a stochastic control system $\Sigma$ with $f(0_n,0_m) = 0_n$ and $\sigma(0_n) = 0_{n\times{p}}$, admitting a $\delta$-ISS-M$_q$ Lyapunov function $V$, of the form of the one explained in Lemma \ref{lemma3}, such that \eqref{supplement} holds for some concave $\widehat\gamma\in\mathcal{K}_\infty$. Let $\eta$ be given by \eqref{eta}. For any $\varepsilon\in\R^+$ and any tuple $\mathsf{q}=\left(\tau,\mu,N,x_s\right)$ of parameters satisfying $\mu\leq\boxspan(\mathsf{U})$ and
\begin{align}
\label{bisim_cond}
\mathsf{e}^{-\kappa\tau}\underline\alpha\left(\varepsilon^q\right)+\frac{1}{\mathsf{e}\kappa}\rho(\mu)+\widehat\gamma\left(\left(h_{x_s}((N+1)\tau)\right)^{\frac{1}{q}}+\eta\right)&\leq\underline\alpha\left(\varepsilon^q\right),
\end{align}
the relation (cf. Definition \ref{APSR})$$R=\left\{(x_\tau,x_\params)\in X_\tau\times X_\params\,\,|\,\,\EE\left[V\left(x_\tau,\ol{H}_\params(x_\params)\right)\right]\leq\ul\alpha\left(\varepsilon^q\right)\right\}$$ is an $\varepsilon$-approximate bisimulation relation between $\ol{S}_{\mathsf{q}}(\Sigma)$ and $S_{\tau}(\Sigma)$.\qed
\end{theorem}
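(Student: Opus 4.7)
The plan is to verify the two defining conditions of an $\varepsilon$-approximate bisimulation relation (Definition \ref{APSR}) for the pair $\ol{S}_\params(\Sigma)$ and $S_\tau(\Sigma)$: the output-proximity condition and the matching-transition condition, the latter in both directions since we claim bisimilarity. The relation $R$ is specified through the averaged Lyapunov quantity $\EE[V(x_\tau,\ol{H}_\params(x_\params))]$, so every step of the argument will move between $V$-estimates and $q$\textsf{th}-moment estimates of the displacement.

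For the output condition, given $(x_\tau,x_\params)\in R$ one needs $\mathbf{d}(x_\tau,\ol{H}_\params(x_\params))=\bigl(\EE[\|x_\tau-\ol{H}_\params(x_\params)\|^q]\bigr)^{1/q}\le\varepsilon$. I would start from the lower bound $\ul\alpha(\|x-x'\|^q)\le V(x,x')$ of Definition \ref{delta_PISS_Lya}(ii), take expectation, and invoke Jensen's inequality via the convexity of $\ul\alpha$ (ensured by Definition \ref{delta_PISS_Lya}(i)) to pull $\ul\alpha$ outside: $\ul\alpha(\EE[\|x_\tau-\ol{H}_\params(x_\params)\|^q])\le\EE[V(x_\tau,\ol{H}_\params(x_\params))]\le\ul\alpha(\varepsilon^q)$, whence inverting $\ul\alpha$ gives the bound on the $q$\textsf{th} moment.

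For the forward simulation condition (from $S_\tau(\Sigma)$ to $\ol{S}_\params(\Sigma)$), suppose $(x_\tau,x_\params)\in R$ and let $x_\tau\rTo^{\upsilon_\tau}_\tau x_\tau'$ with $\upsilon_\tau\equiv u$ on $[0,\tau[$, so $x_\tau'=\xi_{x_\tau\upsilon_\tau}(\tau)$. I would pick $u_\params\in[\mathsf U]_\mu$ with $\|u-u_\params\|\le\mu$ (available by the covering property of $[\mathsf U]_\mu$) and take $x_\params'=(u_2,\ldots,u_N,u_\params)$, which is a legal $u_\params$-successor of $x_\params=(u_1,\ldots,u_N)$ under the shift operator. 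The task then reduces to showing $\EE[V(x_\tau',\ol{H}_\params(x_\params'))]\le\ul\alpha(\varepsilon^q)$. To this end, I would introduce the stochastic intermediate $\hat z(t)=\xi_{\ol{H}_\params(x_\params),u_\params}(t)$, evolved from the deterministic point $\ol{H}_\params(x_\params)$ under input $u_\params$, coupled to $\xi_{x_\tau u}$ by feeding both SDEs the same Brownian increments on $[0,\tau]$. Applying Dynkin's formula to $V(\xi_{x_\tau u},\hat z)$ together with Definition \ref{delta_PISS_Lya}(iii) and Gronwall yields a descent inequality of the form $\EE[V(x_\tau',\hat z(\tau))]\le \mathsf{e}^{-\kappa\tau}\EE[V(x_\tau,\ol{H}_\params(x_\params))]+\frac{1}{\mathsf{e}\kappa}\rho(\mu)\le\mathsf{e}^{-\kappa\tau}\ul\alpha(\varepsilon^q)+\frac{1}{\mathsf{e}\kappa}\rho(\mu)$. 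Next I would swap $\hat z(\tau)$ for $\ol{H}_\params(x_\params')$ through the regularity \eqref{supplement}: $V(x_\tau',\ol{H}_\params(x_\params'))\le V(x_\tau',\hat z(\tau))+\widehat\gamma(\|\hat z(\tau)-\ol{H}_\params(x_\params')\|)$. Inserting the deterministic intermediate $\ol\xi_{\ol{H}_\params(x_\params),u_\params}(\tau)$ inside $\widehat\gamma$ and invoking the triangle inequality splits the displacement into the stochastic-versus-deterministic gap (controlled in $L^q$ by Lemma \ref{lemma3}, applied along the $(N+1)\tau$-trajectory emanating from the source $x_s$, yielding exactly $h_{x_s}((N+1)\tau)$) and the input-window-shift discrepancy (bounded by $\eta$ via Lemma \ref{lemma1}). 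Concavity of $\widehat\gamma$ combined with Jensen's inequality and the power-mean bound $\EE[\|\cdot\|]\le(\EE[\|\cdot\|^q])^{1/q}$ collapses the two contributions into the single term $\widehat\gamma\bigl((h_{x_s}((N+1)\tau))^{1/q}+\eta\bigr)$, after which the assumed inequality \eqref{bisim_cond} closes the loop to $\ul\alpha(\varepsilon^q)$.

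For the backward simulation condition (from $\ol{S}_\params(\Sigma)$ to $S_\tau(\Sigma)$), given $x_\params\rTo^{u_\params}_\params x_\params'$ I would lift $u_\params$ to the constant curve $\upsilon_\tau\equiv u_\params$ and take $x_\tau'=\xi_{x_\tau u_\params}(\tau)$; the same estimate applies with $\|u-u_\params\|=0$ so the $\rho(\mu)$ contribution is vacuous. The main obstacle I anticipate is the careful Brownian coupling needed to apply Lemma \ref{lemma3} with $h_{x_s}((N+1)\tau)$ rather than a state-dependent quantity such as $h_{\ol{H}_\params(x_\params)}(\tau)$, and the bookkeeping that folds the stochastic-deterministic gap and the shift error $\eta$ into a single $\widehat\gamma(\cdot+\cdot)$ via concavity, rather than producing two separate $\widehat\gamma$-terms that would not fit inside \eqref{bisim_cond}.
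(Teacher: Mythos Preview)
Your proposal is correct and follows essentially the same route as the paper's own proof: verify condition (i) of Definition~\ref{APSR} via the convexity of $\ul\alpha$ and Jensen, then for condition (ii) introduce the stochastic intermediate $\xi_{\ol H_\params(x_\params)u_\params}(\tau)$, obtain the descent estimate $\mathsf e^{-\kappa\tau}\ul\alpha(\varepsilon^q)+\frac{1}{\mathsf e\kappa}\rho(\mu)$ from the Lyapunov dissipation inequality, swap to $\ol H_\params(x'_\params)$ via~\eqref{supplement}, and split the remaining displacement through the deterministic waypoint $\ol\xi_{\ol H_\params(x_\params)u_\params}(\tau)$ into the stochastic--deterministic gap plus the shift error, using concavity of $\widehat\gamma$ and Jensen to absorb the expectation. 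One small correction: the bound $\|\ol\xi_{\ol H_\params(x_\params)u_\params}(\tau)-\ol H_\params(x'_\params)\|\le\eta$ follows directly from the \emph{definition}~\eqref{eta} of $\eta$ as a maximum over such discrepancies, not from Lemma~\ref{lemma1} (which only gives an upper bound on $\eta$ itself); the paper likewise invokes~\eqref{eta} at this step. Your anticipated obstacle about the coupling needed to produce $h_{x_s}((N{+}1)\tau)$ rather than $h_{\ol H_\params(x_\params)}(\tau)$ is well spotted---the paper applies Lemma~\ref{lemma3} at this point without further comment, so your awareness of the subtlety is, if anything, more careful than the original.
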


  The proof can be found in the Appendix. 
By choosing $N$ sufficiently large and using the results in Lemmas \ref{lemma3} and \ref{lemma1}, one can enforce $h_{x_s}((N+1)\tau)$ and $\eta$ in \eqref{bisim_cond} to be sufficiently small. Hence, it can be readily seen that for a given precision $\varepsilon$,
there always exists a sufficiently small value of $\mu$ and a large value of $N$, such that the condition in (\ref{bisim_cond}) is satisfied. 
A result similar as that in Theorem \ref{main_theorem} can be recovered for a $\delta$-ISS non-probabilistic control system $\ol\Sigma$, as provided in the following corollary.

\begin{corollary}\label{corollary1}
Consider a non-probabilistic control system $\ol\Sigma$ admitting a $\delta$-ISS Lyapunov function $V$ such that \eqref{supplement} holds for some $\widehat\gamma\in\mathcal{K}_\infty$. Let $\eta$ be given by \eqref{eta}. For any $\varepsilon\in\R^+$ and any tuple $\mathsf{q}=\left(\tau,\mu,N,x_s\right)$ of parameters satisfying $\mu\leq\boxspan(\mathsf{U})$ and
\begin{align}
\label{bisim_cond11}
\mathsf{e}^{-\kappa\tau}\underline\alpha\left(\varepsilon\right)+\frac{1}{\mathsf{e}\kappa}\rho(\mu)+\widehat\gamma\left(\eta\right)&\leq\underline\alpha\left(\varepsilon\right),
\end{align}
the relation$$R=\left\{(x_\tau,x_\params)\in X_\tau\times X_\params\,\,|\,\,V\left(x_\tau,\ol{H}_\params(x_\params)\right)\leq\ul\alpha\left(\varepsilon\right)\right\}$$ is an $\varepsilon$-approximate bisimulation relation between $\ol{S}_{\mathsf{q}}(\ol\Sigma)$ and $S_{\tau}(\ol\Sigma)$.\qed
\end{corollary}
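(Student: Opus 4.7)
The strategy is to specialize the proof of Theorem \ref{main_theorem} to the noise-free case. Since $\ol\Sigma$ has $\sigma\equiv 0$, the mismatch bound $h_{x_s}$ from Lemma \ref{lemma3} vanishes identically, which accounts for the sole difference between \eqref{bisim_cond} and \eqref{bisim_cond11}. Working with the $\delta$-ISS Lyapunov function $V$, condition (ii) reads $\ul\alpha(\|x-x'\|)\leq V(x,x')\leq \ol\alpha(\|x-x'\|)$ and condition (iii) specializes along any pair of ODE trajectories of $\ol\Sigma$ to the deterministic differential inequality $\tfrac{\diff}{\diff t}V(\ol\xi,\ol\xi')\leq -\kappa V(\ol\xi,\ol\xi')+\rho(\|u-u'\|)$. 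I would then verify that $R$ and $R^{-1}$ each fulfil the two requirements of Definition \ref{APSR}.

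For condition (i), pick $(x_\tau,x_\params)\in R$: then $\ul\alpha(\|x_\tau-\ol H_\params(x_\params)\|)\leq V(x_\tau,\ol H_\params(x_\params))\leq \ul\alpha(\varepsilon)$, so $\mathbf{d}(H_\tau(x_\tau),\ol H_\params(x_\params))=\|x_\tau-\ol H_\params(x_\params)\|\leq \varepsilon$ in the Euclidean output metric. For the forward simulation from $S_\tau(\ol\Sigma)$ to $\ol S_\params(\ol\Sigma)$, given a transition $x_\tau\rTo_\tau^{\upsilon_\tau} x'_\tau=\ol\xi_{x_\tau\upsilon_\tau}(\tau)$, I would select $u_\params\in[\mathsf{U}]_\mu$ to be a nearest grid point to the constant value $\upsilon_\tau$, so that $\|\upsilon_\tau-u_\params\|_\infty\leq \mu$, and take $x'_\params$ to be the unique shift-successor of $x_\params$ under $u_\params$. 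Integrating the Lyapunov decrease between the two ODE trajectories starting at $x_\tau$ and $\ol H_\params(x_\params)$ with inputs $\upsilon_\tau$ and $u_\params$ respectively yields, via a Gronwall-type estimate,
\[
V\bigl(\ol\xi_{x_\tau\upsilon_\tau}(\tau),\ol\xi_{\ol H_\params(x_\params)u_\params}(\tau)\bigr)\leq \mathsf{e}^{-\kappa\tau}V(x_\tau,\ol H_\params(x_\params))+\tfrac{1}{\mathsf{e}\kappa}\rho(\mu)\leq \mathsf{e}^{-\kappa\tau}\ul\alpha(\varepsilon)+\tfrac{1}{\mathsf{e}\kappa}\rho(\mu).
\]
The key bridging step is \eqref{supplement}, which gives
\[
V(x'_\tau,\ol H_\params(x'_\params))\leq V\bigl(x'_\tau,\ol\xi_{\ol H_\params(x_\params)u_\params}(\tau)\bigr)+\widehat\gamma\bigl(\|\ol\xi_{\ol H_\params(x_\params)u_\params}(\tau)-\ol H_\params(x'_\params)\|\bigr),
\]
and whose second term is bounded by $\widehat\gamma(\eta)$ from the definition of $\eta$ in \eqref{eta}. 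Combining the two displays and invoking \eqref{bisim_cond11} gives $V(x'_\tau,\ol H_\params(x'_\params))\leq \ul\alpha(\varepsilon)$, i.e., $(x'_\tau,x'_\params)\in R$.

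The backward direction (that $R^{-1}$ is a simulation from $\ol S_\params(\ol\Sigma)$ to $S_\tau(\ol\Sigma)$) is strictly easier. Given $x_\params\rTo_\params^{u_\params} x'_\params$, I would pick the constant input $\upsilon_\tau\equiv u_\params$ in $S_\tau(\ol\Sigma)$, so $\|\upsilon_\tau-u_\params\|_\infty=0$ and the $\rho$ term drops; the same chain of inequalities yields $V(x'_\tau,\ol H_\params(x'_\params))\leq \mathsf{e}^{-\kappa\tau}\ul\alpha(\varepsilon)+\widehat\gamma(\eta)\leq \ul\alpha(\varepsilon)$, which is slacker than the forward case. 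The principal obstacle lies in the forward step: carefully producing the $\tfrac{1}{\mathsf{e}\kappa}\rho(\mu)$ constant from the Gronwall estimate (so the noise-free bookkeeping matches that of Theorem \ref{main_theorem}), and coordinating the three error contributions — exponential decay, input quantization $\mu$, and one-step output mismatch $\eta$ — so their sum stays within the budget $\ul\alpha(\varepsilon)$ prescribed by \eqref{bisim_cond11}.
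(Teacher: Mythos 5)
Your proposal is correct and follows exactly the route the paper intends: the paper states that Corollary \ref{corollary1} is proved ``similarly to Theorem \ref{main_theorem},'' and your argument is precisely that specialization, with the expectations and Jensen steps dropped, $h_{x_s}\equiv 0$ since $\sigma\equiv 0$ collapses the triangle-inequality decomposition to the single $\widehat\gamma(\eta)$ term, and the same choice of $u_\params$ within $\mu$ of $\upsilon_\tau$ (forward) and $\upsilon_\tau=u_\params$ (backward). No gaps.
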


The proof is similar to the one of Theorem \ref{main_theorem}. 
In order to mitigate the conservativeness that might result from using Lyapunov functions, the next theorem provides a result that is similar to the one of Theorem \ref{main_theorem},
which is however not obtained by explicit use of $\delta$-ISS-M$_q$ Lyapunov functions,
but by using functions $\beta$ and $\gamma$ as in (\ref{delta_PISS}).

\begin{theorem}\label{main_theorem2}
Consider a $\delta$-ISS-M$_q$ stochastic control system $\Sigma$, satisfying the result of Lemma \ref{lemma3}. Let $\eta$ be given by \eqref{eta}. For any $\varepsilon\in\R^+$, and any tuple $\mathsf{q}=\left(\tau,\mu,N,x_s\right)$ of parameters satisfying $\mu\leq\boxspan(\mathsf{U})$ and
\begin{align}\label{bisim_cond2}
\left(\beta\left(\varepsilon^q,\tau\right)+\gamma(\mu)\right)^{\frac{1}{q}}+\left(h_{x_s}((N+1)\tau)\right)^{\frac{1}{q}}+\eta\leq\varepsilon,
\end{align}
the relation$$R=\left\{(x_\tau,x_\params)\in X_\tau\times X_\params\,\,|\,\,{\left(\mathbb{E}\left[\left\Vert x_{\tau}-\ol{H}_\params(x_{\params})\right\Vert^q\right]\right)^{\frac{1}{q}}\leq\varepsilon} \right\}$$ is an $\varepsilon$-approximate bisimulation relation between $\ol{S}_{\mathsf{q}}(\Sigma)$ and $S_{\tau}(\Sigma)$.\qed
\end{theorem}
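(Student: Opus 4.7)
The plan is to verify the four conditions of Definition \ref{APSR} for $R$ to be an $\varepsilon$-approximate bisimulation between $\ol{S}_{\mathsf{q}}(\Sigma)$ and $S_{\tau}(\Sigma)$: output matching, initial state matching, forward simulation, and backward simulation.

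Output matching is immediate from the choice of metric. For any $(x_\tau,x_\params)\in R$, since $H_\tau=1_X$ and $\mathbf{d}(y,y')=\bigl(\EE\bigl[\|y-y'\|^q\bigr]\bigr)^{1/q}$, the required inequality $\mathbf{d}(H_\tau(x_\tau),\ol{H}_\params(x_\params))\leq\varepsilon$ reduces to the defining property of $R$. The initial-state condition follows analogously.

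The main work lies in establishing both simulation properties. In each direction, I construct the matching transition and then bound $\bigl(\EE\bigl[\|x'_\tau - \ol{H}_\params(x'_\params)\|^q\bigr]\bigr)^{1/q}$ by $\varepsilon$. For the backward direction, given $x_\tau\rTo_\tau^{\upsilon_\tau} x'_\tau$ with $\upsilon_\tau$ constant of value $u\in\mathsf{U}$, I pick $u_\params\in[\mathsf{U}]_\mu$ closest to $u$ (so $\|\upsilon_\tau-u_\params\|_\infty\leq\mu$) and let $x'_\params$ be the shift of $x_\params$ by $u_\params$; for the forward direction, given $x_\params\rTo_\params^{u_\params}x'_\params$, take $\upsilon_\tau\equiv u_\params$ and $x'_\tau=\xi_{x_\tau\upsilon_\tau}(\tau)$, so the input mismatch vanishes. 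The crucial estimate comes from the Minkowski inequality in $L^q$ applied with two intermediate terms: the stochastic trajectory $\xi_{x_s,(x_\params,u_\params)}((N+1)\tau)$ (the solution process of $\Sigma$ starting from $x_s$ under the concatenated input $(x_\params,u_\params)$), and its deterministic counterpart $\ol{\xi}_{x_s,(x_\params,u_\params)}((N+1)\tau)=\ol{\xi}_{\ol{H}_\params(x_\params),u_\params}(\tau)$.

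This decomposition produces three error contributions that match precisely the three summands in \eqref{bisim_cond2}: (i) the first, comparing $x'_\tau$ with the full stochastic reference trajectory, is bounded by $(\beta(\varepsilon^q,\tau)+\gamma(\mu))^{1/q}$ using \eqref{delta_PISS} after invoking the strong Markov property and time-homogeneity of the SDE to realign both processes onto a common time frame, so that the initial-condition discrepancy reduces to $\EE[\|x_\tau-\ol{H}_\params(x_\params)\|^q]\leq\varepsilon^q$ and the input discrepancy to $\|\upsilon_\tau-u_\params\|_\infty\leq\mu$; (ii) the second is bounded by $h_{x_s}((N+1)\tau)^{1/q}$ by direct application of Lemma \ref{lemma3} with $x=x_s$, input $(x_\params,u_\params)$, and $t=(N+1)\tau$; (iii) the third is a purely deterministic gap bounded by $\eta$ via \eqref{eta}, since $x'_\params\in\mathbf{Post}_{u_\params}(x_\params)$. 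Summing the three bounds and invoking \eqref{bisim_cond2} yields $(x'_\tau,x'_\params)\in R$, as required. The forward direction follows the same template with the $\gamma(\mu)$ term replaced by $\gamma(0)=0$, so the weaker hypothesis \eqref{bisim_cond2} still suffices.

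The principal technical obstacle is step (i): the two processes to be compared live naturally on different time intervals (the transition in $S_\tau(\Sigma)$ advances by $\tau$ from an $\sigalg_{k\tau}$-measurable initial condition, while the reference trajectory runs for $(N+1)\tau$ from the deterministic point $x_s$). A careful coupling that exploits the strong Markov property and time-homogeneity of the underlying SDE is required so that the $\delta$-ISS-M$_q$ bound \eqref{delta_PISS} can be invoked against the tight initial-condition discrepancy $\varepsilon^q$ rather than a looser quantity such as $\varepsilon^q + h_{x_s}(N\tau)^{2/q}$.
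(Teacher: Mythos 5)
Your overall architecture --- two simulation directions, the choice of $u_\params\in[\mathsf{U}]_\mu$ with $\|\upsilon_\tau-u_\params\|_\infty\leq\mu$, a three-term Minkowski decomposition whose summands are meant to match the three terms of \eqref{bisim_cond2}, and the bound $\eta$ for the purely deterministic gap --- is the same as the paper's. However, your choice of intermediate reference point creates a genuine gap in step (i), which you flag as ``the principal technical obstacle'' but do not actually close. You route the comparison through $\xi_{x_s(x_\params,u_\params)}((N+1)\tau)$, the solution process carrying noise over the entire horizon $[0,(N+1)\tau]$. At time $N\tau$ this process sits at the \emph{random} variable $\xi_{x_sx_\params}(N\tau)$, whereas the relation $R$ only controls the distance from $x_\tau$ to the \emph{deterministic} point $\ol{H}_\params(x_\params)=\ol\xi_{x_sx_\params}(N\tau)$. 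Hence, after realigning the two processes onto the window $[N\tau,(N+1)\tau]$, the initial-condition discrepancy entering \eqref{delta_PISS} is $\EE\left[\left\Vert x_\tau-\xi_{x_sx_\params}(N\tau)\right\Vert^q\right]$, which is only bounded (via Minkowski and Lemma \ref{lemma3}) by $\left(\varepsilon+h_{x_s}(N\tau)^{1/q}\right)^q$, not by $\varepsilon^q$. No coupling or appeal to the strong Markov property removes this: the deviation of $\xi_{x_sx_\params}(N\tau)$ from $\ol\xi_{x_sx_\params}(N\tau)$ is a genuine extra error that your decomposition forces into term (i), so the claimed bound $\left(\beta(\varepsilon^q,\tau)+\gamma(\mu)\right)^{1/q}$ does not follow and condition \eqref{bisim_cond2} as stated would no longer suffice.

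The remedy is to pick the intermediate stochastic point as the paper does: use $\xi_{\ol{H}_\params(x_\params)u_\params}(\tau)$, the solution process \emph{restarted from the deterministic point} $\ol{H}_\params(x_\params)$ and carrying noise only over a window of length $\tau$. Then \eqref{delta_PISS} applies directly with initial discrepancy $\EE\left[\left\Vert x_\tau-\ol{H}_\params(x_\params)\right\Vert^q\right]\leq\varepsilon^q$, giving term (i) exactly $\left(\beta(\varepsilon^q,\tau)+\gamma(\mu)\right)^{1/q}$, while term (ii) becomes $\EE\left[\left\Vert\xi_{\ol{H}_\params(x_\params)u_\params}(\tau)-\ol\xi_{\ol{H}_\params(x_\params)u_\params}(\tau)\right\Vert^q\right]$, which Lemma \ref{lemma3} controls and which the paper bounds by $h_{x_s}((N+1)\tau)$; term (iii) is unchanged. (Your version of term (ii) is, taken in isolation, the cleaner invocation of Lemma \ref{lemma3}, but it is incompatible with obtaining the tight $\varepsilon^q$ in term (i).) The remaining pieces of your argument --- condition (i) of Definition \ref{APSR} from the definition of $R$, the reverse direction with $\upsilon_\tau=u_\params$ so that the $\gamma(\mu)$ term disappears, and the bound $\eta$ from \eqref{eta} --- match the paper. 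Note also that initial-state matching is not part of this theorem; it is handled separately in Theorem \ref{main_theorem6}.
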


The proof can be found in the Appendix.
By choosing $N$ sufficiently large and using the results in Lemmas \ref{lemma3} and \ref{lemma2}, one can force $h_{x_s}((N+1)\tau)$ and $\eta$ in \eqref{bisim_cond2} to be sufficiently small. Hence, it can be readily seen that for a given precision $\varepsilon$,
there always exist a sufficiently large value of $\tau$ and $N$ and a small enough value of $\mu$ such that the condition in (\ref{bisim_cond2}) is satisfied.
However, unlike the result in Theorem \ref{main_theorem},
notice that here for a given fixed sampling time $\tau$,
one may not find any values of $N$ and $\mu$ satisfying (\ref{bisim_cond2}) because the quantity $\left(\beta\left(\varepsilon^q,\tau\right)\right)^{\frac{1}{q}}$ may be larger than $\varepsilon$.
The symbolic model $\ol{S}_\params(\Sigma)$, computed using the parameter $\mathsf{q}$ provided in Theorem \ref{main_theorem2} (whenever existing), is likely to have fewer states than the model computed using the parameter $\mathsf{q}$ provided in Theorem \ref{main_theorem} -- a similar fact has been experienced in the first example in \cite{majid8}. A result similar to the one in Theorem \ref{main_theorem2} can be fully recovered for a $\delta$-ISS non-probabilistic control system $\ol\Sigma$, as provided in the following corollary. 

\begin{corollary}\label{corollary2}
Consider a $\delta$-ISS non-probabilistic control system $\ol\Sigma$. Let $\eta$ be given by \eqref{eta}. For any $\varepsilon\in\R^+$, and any tuple $\mathsf{q}=\left(\tau,\mu,N,x_s\right)$ of parameters satisfying $\mu\leq\boxspan(\mathsf{U})$ and\footnote{Here, $\beta$ and $\gamma$ are the $\mathcal{KL}$ and $\mathcal{K}_\infty$ functions, respectively, appearing in \eqref{delta_ISS}.}
\begin{align}\label{bisim_cond111}
\beta\left(\varepsilon,\tau\right)+\gamma(\mu)+\eta\leq\varepsilon,
\end{align}
the relation$$R=\left\{(x_\tau,x_\params)\in X_\tau\times X_\params\,\,|\,\,\left\Vert x_{\tau}-\ol{H}_\params(x_{\params})\right\Vert\leq\varepsilon \right\}$$ is an $\varepsilon$-approximate bisimulation relation between $\ol{S}_{\mathsf{q}}(\ol\Sigma)$ and $S_{\tau}(\ol\Sigma)$.\qed
\end{corollary}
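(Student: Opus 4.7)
The plan is to adapt the proof of Theorem \ref{main_theorem2} to the deterministic setting, which is strictly simpler: with diffusion set to zero, $h_{x_s}((N+1)\tau)=0$ and disappears from the condition, the metric collapses from the $q$th-moment form to the Euclidean norm, and the $\delta$-ISS estimate \eqref{delta_ISS} replaces \eqref{delta_PISS} with no expectations. Condition (i) of Definition \ref{APSR}, namely $\mathbf{d}(H_\tau(x_\tau),\ol H_\params(x_\params))\le\varepsilon$, follows immediately from the definition of $R$.

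For the simulation $R$ from $S_\tau(\ol\Sigma)$ to $\ol S_\params(\ol\Sigma)$, I would take $(x_\tau,x_\params)\in R$ together with a transition $x_\tau\rTo^{\upsilon_\tau}_\tau x'_\tau$ and quantize the input: choose $u_\params\in [\mathsf{U}]_\mu$ with $\|u_\params-\upsilon_\tau(0)\|\le\mu$, which is possible since $\mu\le\boxspan(\mathsf{U})$. Because $\upsilon_\tau$ is constant on $[0,\tau[$, this yields $\|\upsilon_\tau-u_\params\|_\infty\le\mu$. The shift rule defines $x'_\params$ with $x_\params\rTo^{u_\params}_\params x'_\params$ in $\ol S_\params(\ol\Sigma)$. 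A triangle-inequality decomposition
\[
\|x'_\tau-\ol H_\params(x'_\params)\| \;\le\; \|\ol\xi_{x_\tau\upsilon_\tau}(\tau)-\ol\xi_{\ol H_\params(x_\params)u_\params}(\tau)\| \;+\; \|\ol\xi_{\ol H_\params(x_\params)u_\params}(\tau)-\ol H_\params(x'_\params)\|
\]
bounds the first summand by $\beta(\varepsilon,\tau)+\gamma(\mu)$ via \eqref{delta_ISS} applied with the initial mismatch $\|x_\tau-\ol H_\params(x_\params)\|\le\varepsilon$ and the input mismatch $\|\upsilon_\tau-u_\params\|_\infty\le\mu$, and bounds the second summand by $\eta$ directly from \eqref{eta}. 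Hypothesis \eqref{bisim_cond111} then delivers $(x'_\tau,x'_\params)\in R$.

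For the reverse simulation via $R^{-1}$, given $(x_\tau,x_\params)\in R$ and $x_\params\rTo^{u_\params}_\params x'_\params$ in $\ol S_\params(\ol\Sigma)$, I would set $\upsilon_\tau\in\mathcal U_\tau$ to the constant input $u_\params$ on $[0,\tau[$ and let $x'_\tau=\ol\xi_{x_\tau\upsilon_\tau}(\tau)$. The same decomposition applies, now with $\|\upsilon_\tau-u_\params\|_\infty=0$, giving the sharper bound $\beta(\varepsilon,\tau)+\eta$, which is a fortiori majorised by the right-hand side of \eqref{bisim_cond111}. The initial-state clauses of Definition \ref{APSR} are handled exactly as in Theorem \ref{main_theorem2}, by pairing any $x_{\params 0}\in X_{\params 0}$ with $x_{\tau 0}=\ol H_\params(x_{\params 0})\in X_{\tau 0}$ (and conversely), which trivially lies in $R$.

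No step here poses a real obstacle; all the hard analytic content was already packaged into Lemma \ref{lemma2} (controlling $\eta$) and into the $\delta$-ISS inequality itself. The only point requiring care is recognising that the shift-induced gap $\|\ol\xi_{\ol H_\params(x_\params)u_\params}(\tau)-\ol H_\params(x'_\params)\|$ is precisely the quantity that the definition \eqref{eta} of $\eta$ is designed to bound, which follows by inspection once one notes that applying $u_\params$ after the length-$N$ trajectory from $x_s$ differs from the length-$N$ trajectory from $x_s$ under the shifted input sequence $x'_\params$ only through the horizon truncation captured by $\eta$.
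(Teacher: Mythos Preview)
Your proof is correct and follows essentially the same route as the paper, which simply says ``the proof is similar to the one of Theorem~\ref{main_theorem2}'': you specialise the triangle-inequality decomposition of that proof to the noise-free case, so the middle term $h_{x_s}((N+1)\tau)$ drops out and the $q$th-moment metric collapses to the Euclidean norm. One small remark: the initial-state clauses you mention at the end are superfluous here, since the corollary only asserts that $R$ is an $\varepsilon$-approximate bisimulation \emph{relation} (conditions (i) and (ii) of Definition~\ref{APSR}), not full $\varepsilon$-approximate bisimilarity $\cong^\varepsilon_{\mathcal S}$; the latter is treated separately in the corollaries following Theorem~\ref{main_theorem6}.
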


The proof is similar to the one of Theorem \ref{main_theorem2}. 
The next theorems provide results that are similar to those of Theorems \ref{main_theorem} and \ref{main_theorem2}, but by using the symbolic model $S_{\params}(\Sigma)$ with probabilistic output values rather than $\ol{S}_{\params}(\Sigma)$ with non-probabilistic output values.

\begin{theorem}\label{main_theorem3}
Consider a stochastic control system $\Sigma$, admitting a $\delta$-ISS-M$_q$ Lyapunov function $V$ such that \eqref{supplement} holds for some concave $\widehat\gamma\in\mathcal{K}_\infty$. Let $\widehat\eta$ be given by \eqref{eta1}. For any $\varepsilon\in\R^+$ and any tuple $\mathsf{q}=\left(\tau,\mu,N,x_s\right)$ of parameters satisfying $\mu\leq\boxspan(\mathsf{U})$ and
\begin{align}
\label{bisim_cond3}
\mathsf{e}^{-\kappa\tau}\underline\alpha\left(\varepsilon^q\right)+\frac{1}{\mathsf{e}\kappa}\rho(\mu)+\widehat\gamma\left(\widehat\eta\right)&\leq\underline\alpha\left(\varepsilon^q\right),
\end{align}
the relation$$R=\left\{(x_\tau,x_\params)\in X_\tau\times X_\params\,\,|\,\,\EE\left[V(x_\tau,H_\params(x_\params))\right]\leq\ul\alpha\left(\varepsilon^q\right)\right\}$$ is an $\varepsilon$-approximate bisimulation relation between ${S}_{\mathsf{q}}(\Sigma)$ and $S_{\tau}(\Sigma)$.\qed
\end{theorem}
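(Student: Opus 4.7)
The argument parallels that of Theorem \ref{main_theorem}, with one structural simplification: because $H_\params$ maps directly to the stochastic solution $\xi_{x_s x_\params}(N\tau)$ rather than to the noise-free counterpart $\ol\xi_{x_s x_\params}(N\tau)$, there is no noise-versus-mean mismatch to absorb via Lemma \ref{lemma3}, which is why the term $h_{x_s}((N+1)\tau)$ is absent from \eqref{bisim_cond3}. I would first verify the output-distance condition (i) of Definition \ref{APSR}, and then the two simulation requirements: the direction $S_\params(\Sigma)\to S_\tau(\Sigma)$ (where the same input $u_\params$ is applied on both sides, so no input-quantization cost arises), and the direction $S_\tau(\Sigma)\to S_\params(\Sigma)$ (where the continuous input is approximated by a symbol in $[\mathsf U]_\mu$, producing the $\rho(\mu)$ contribution).

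\textbf{Output condition.} For $(x_\tau,x_\params)\in R$, convexity of $\ul\alpha$, Jensen's inequality, and item (ii) of Definition \ref{delta_PISS_Lya} give
$$\ul\alpha\!\left(\EE\!\left[\|x_\tau-H_\params(x_\params)\|^q\right]\right) \leq \EE\!\left[\ul\alpha(\|x_\tau-H_\params(x_\params)\|^q)\right] \leq \EE[V(x_\tau,H_\params(x_\params))] \leq \ul\alpha(\ve^q),$$
so $\mathbf{d}(x_\tau,H_\params(x_\params)) = (\EE[\|x_\tau-H_\params(x_\params)\|^q])^{1/q} \leq \ve$.

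\textbf{Simulation in both directions.} For the direction $S_\params(\Sigma)\to S_\tau(\Sigma)$, given a transition $x_\params\rTo^{u_\params}_\params x'_\params$, I take $\upsilon_\tau\equiv u_\params$ on $[0,\tau[$ and set $x'_\tau=\xi_{x_\tau u_\params}(\tau)$. Writing $\zeta=\xi_{H_\params(x_\params) u_\params}(\tau)$, Dynkin's formula applied to $V(\xi_{x_\tau u_\params}(t),\xi_{H_\params(x_\params) u_\params}(t))$ together with item (iii) of Definition \ref{delta_PISS_Lya} (with $\rho(0)=0$ since the inputs coincide) yields the contraction $\EE[V(x'_\tau,\zeta)] \leq \mathsf{e}^{-\kappa\tau}\EE[V(x_\tau,H_\params(x_\params))] \leq \mathsf{e}^{-\kappa\tau}\ul\alpha(\ve^q)$. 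Assumption \eqref{supplement} then gives $V(x'_\tau,H_\params(x'_\params)) \leq V(x'_\tau,\zeta) + \widehat\gamma(\|\zeta - H_\params(x'_\params)\|)$; taking expectations, Jensen's inequality (using concavity of $\widehat\gamma$) together with the definition \eqref{eta1} of $\widehat\eta$ yields $\EE[V(x'_\tau,H_\params(x'_\params))] \leq \mathsf{e}^{-\kappa\tau}\ul\alpha(\ve^q) + \widehat\gamma(\widehat\eta) \leq \ul\alpha(\ve^q)$, so $(x'_\tau,x'_\params)\in R$. For the direction $S_\tau(\Sigma)\to S_\params(\Sigma)$, given $x_\tau\rTo^{\upsilon_\tau}_\tau x'_\tau$ with $\upsilon_\tau\equiv u\in\mathsf U$ on $[0,\tau[$, I choose $u_\params\in[\mathsf U]_\mu$ with $\|u-u_\params\|\leq \mu$ (by the covering property of $[\mathsf U]_\mu$) and invoke the induced symbolic transition $x_\params\rTo^{u_\params}_\params x'_\params$. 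The same computation now carries an additional integrated contribution $\int_0^\tau \mathsf{e}^{-\kappa(\tau-s)}\rho(\mu)\,\mathrm{d}s$, bounded exactly as in the proof of Theorem \ref{main_theorem} by $\frac{1}{\mathsf{e}\kappa}\rho(\mu)$, and the resulting upper bound on $\EE[V(x'_\tau,H_\params(x'_\params))]$ matches precisely the right-hand side of \eqref{bisim_cond3}.

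\textbf{Main obstacle.} The delicate step is the rigorous application of Dynkin's formula to derive the exponential-decay bound $\EE[V(\xi(\tau),\xi'(\tau))] \leq \mathsf{e}^{-\kappa\tau}\EE[V(\xi(0),\xi'(0))] + \int_0^\tau \mathsf{e}^{-\kappa(\tau-s)}\rho(\|u-u'\|)\,\mathrm{d}s$, since $V$ is only twice continuously differentiable off the diagonal $\Delta$ and since the initial conditions $x_\tau$ and $H_\params(x_\params)$ are themselves random variables (requiring conditioning on $\sigalg_0$ and a standard stopping-time/localisation argument to handle the singularity of the Hessian on $\Delta$). This estimate is inherited verbatim from the proof of Theorem \ref{main_theorem}, after which all remaining manipulations are routine applications of Jensen's inequality and the triangle-type bound \eqref{supplement}.
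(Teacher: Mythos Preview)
Your proposal is correct and follows essentially the same approach as the paper, which merely states that the proof is similar to that of Theorem~\ref{main_theorem}. You have correctly identified the key simplification (no splitting via Lemma~\ref{lemma3}, hence no $h_{x_s}$ term, because $\xi_{H_\params(x_\params)u_\params}(\tau)-H_\params(x'_\params)$ is directly controlled by $\widehat\eta$) and reproduced the remaining steps---Jensen for condition~(i), the Lyapunov contraction estimate (equation (3.3) in \cite{majid8}), the bound \eqref{supplement}, and concavity of $\widehat\gamma$---exactly as in the paper's argument.
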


The proof is similar to the one of Theorem \ref{main_theorem}.

\begin{theorem}\label{main_theorem4}
Consider a $\delta$-ISS-M$_q$ stochastic control system $\Sigma$. Let $\widehat\eta$ be given by \eqref{eta1}. For any $\varepsilon\in\R^+$, and any tuple $\mathsf{q}=\left(\tau,\mu,N,x_s\right)$ of parameters satisfying $\mu\leq\boxspan(\mathsf{U})$ and
\begin{align}\label{bisim_cond4}
\left(\beta\left(\varepsilon^q,\tau\right)+\gamma(\mu)\right)^{\frac{1}{q}}+\widehat\eta\leq\varepsilon,
\end{align}
the relation$$R=\left\{(x_\tau,x_\params)\in X_\tau\times X_\params\,\,|\,\,{\left(\mathbb{E}\left[\left\Vert x_{\tau}-H_\params(x_{\params})\right\Vert^q\right]\right)^{\frac{1}{q}}\leq\varepsilon} \right\}$$ is an $\varepsilon$-approximate bisimulation relation between ${S}_{\mathsf{q}}(\Sigma)$ and $S_{\tau}(\Sigma)$.\qed
\end{theorem}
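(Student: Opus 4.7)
The plan is to follow the same structure as the proof of Theorem \ref{main_theorem2}, with one key simplification afforded by the probabilistic outputs. Since $H_\params(x_\params) = \xi_{x_s x_\params}(N\tau)$ is itself a solution process of the SDE \eqref{eq0}, both sides of every comparison live in the same stochastic framework; the SDE-versus-ODE correction $(h_{x_s}((N+1)\tau))^{1/q}$ from Lemma \ref{lemma3} (which was required in Theorem \ref{main_theorem2} because $\ol H_\params$ was the noise-free image) is no longer needed. The proof will therefore rest on a single application of the $\delta$-ISS-M$_q$ inequality \eqref{delta_PISS} between two stochastic trajectories, together with the $\widehat\eta$ bound from \eqref{eta1}.

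I would first dispose of condition (i) of Definition \ref{APSR}: the identity $\mathbf{d}(H_\tau(x_\tau), H_\params(x_\params)) = (\EE[\|x_\tau - H_\params(x_\params)\|^q])^{1/q}$ together with the definition of $R$ gives the output proximity at once. For condition (ii), both simulation directions need to be checked. In the forward direction ($S_\params(\Sigma) \to S_\tau(\Sigma)$ via $R$), given a shift transition $x_\params \rTo^{u_\params}_\params x'_\params$, I would match it in $S_\tau(\Sigma)$ by the constant input $\upsilon_\tau \equiv u_\params$ on $[0,\tau)$, producing $x'_\tau = \xi_{x_\tau\upsilon_\tau}(\tau)$. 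In the reverse direction ($S_\tau(\Sigma) \to S_\params(\Sigma)$ via $R^{-1}$), given an arbitrary $\upsilon_\tau \in \mathcal{U}_\tau$ with constant value $u \in \mathsf{U}$, I would select a quantized $u_\params \in [\mathsf{U}]_\mu$ with $\|u - u_\params\| \leq \mu$ --- such a $u_\params$ exists because $\mathsf{U}$ is a finite union of boxes and $\mu \leq \boxspan(\mathsf{U})$ --- and let $x'_\params$ be the corresponding shift-successor of $x_\params$.

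The core estimate, uniform across both directions, would use the decomposition
\[
x'_\tau - H_\params(x'_\params) = \bigl[\xi_{x_\tau\upsilon_\tau}(\tau) - \xi_{H_\params(x_\params)u_\params}(\tau)\bigr] + \bigl[\xi_{H_\params(x_\params)u_\params}(\tau) - H_\params(x'_\params)\bigr]
\]
followed by the Minkowski ($L^q$-triangle) inequality. The first bracket is controlled by one invocation of \eqref{delta_PISS} with initial conditions $x_\tau,\,H_\params(x_\params)$ and inputs $\upsilon_\tau,\,u_\params$, yielding $\beta(\varepsilon^q,\tau) + \gamma(\|\upsilon_\tau - u_\params\|_\infty) \leq \beta(\varepsilon^q,\tau) + \gamma(\mu)$ (the input mismatch is $0$ in the forward direction and at most $\mu$ in the reverse one, and the initial-condition bound comes from $(x_\tau,x_\params)\in R$). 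The second bracket is bounded by $\widehat\eta$ via \eqref{eta1}. Combining these with \eqref{bisim_cond4} gives $(\EE[\|x'_\tau - H_\params(x'_\params)\|^q])^{1/q} \leq \varepsilon$, so $(x'_\tau, x'_\params) \in R$. The hardest part will be only the bookkeeping around the two-initial-condition, two-input form of \eqref{delta_PISS} and the matching/quantization step for $u$; no new analytical idea is required beyond what already appears in the proof of Theorem \ref{main_theorem2}, and the probabilistic-output setting actually streamlines the argument by removing the $h_{x_s}$ term.
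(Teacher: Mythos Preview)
Your outline is exactly the argument the paper has in mind (it simply points to the proof of Theorem~\ref{main_theorem2}), and the simplification you identify --- dropping the $h_{x_s}$ correction because $H_\params$ already produces a stochastic output --- is precisely the point of passing from $\ol S_\params(\Sigma)$ to $S_\params(\Sigma)$. There is, however, a genuine gap in the step ``the second bracket is bounded by $\widehat\eta$ via \eqref{eta1}.'' After the $L^q$-Minkowski inequality, the quantity you must control is
\[
\bigl(\EE\bigl[\|\xi_{H_\params(x_\params)u_\params}(\tau) - H_\params(x'_\params)\|^q\bigr]\bigr)^{1/q},
\]
whereas $\widehat\eta$ in \eqref{eta1} is defined as the maximum of the \emph{first} moment $\EE[\|\cdot\|]$. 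For $q>1$ Jensen's inequality only gives $\EE[\|\cdot\|]\le(\EE[\|\cdot\|^q])^{1/q}$, which is the wrong direction, so your bound goes through as written only for $q=1$. In Theorem~\ref{main_theorem2} this issue does not arise because the corresponding term $\ol\xi_{\ol H_\params(x_\params)u_\params}(\tau)-\ol H_\params(x'_\params)$ is deterministic and all $L^q$-norms collapse to the plain norm bounded by $\eta$. The natural fix is to read $\widehat\eta$ as $\max\bigl(\EE[\|\cdot\|^q]\bigr)^{1/q}$, which is what the parallel with $\eta$ in \eqref{eta} and the shape of condition~\eqref{bisim_cond4} strongly suggest; you should state this explicitly rather than invoke \eqref{eta1} as it stands.

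A minor bookkeeping point: since $R\subseteq X_\tau\times X_\params$, matching a transition of $S_\params(\Sigma)$ in $S_\tau(\Sigma)$ verifies that $R^{-1}$ is a simulation relation, not $R$; your two directions have the right content but swapped labels.
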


The proof is similar to the one of Theorem \ref{main_theorem2}.

\begin{remark}\label{remark1}
The symbolic model $S_\params(\Sigma)$, computed using the parameter $\mathsf{q}$ provided in Theorem \ref{main_theorem3} (resp. Theorem \ref{main_theorem4}), has fewer (or at most equal number of) states than the symbolic model $\ol{S}_\params(\Sigma)$, computed by using the parameter $\params$ provided in Theorem \ref{main_theorem} (resp. Theorem \ref{main_theorem2}) while having the same precision. However, the symbolic model $S_\params(\Sigma)$ has states with probabilistic output values, rather than non-probabilistic ones, which is likely to require more involved control synthesis procedures (cf. Subsection \ref{ssec:ltl.control}). 
\qed 
\end{remark}

\begin{remark}\label{remark5}
Although we assume that the set $\mathsf{U}$ is infinite,
Theorems \ref{main_theorem}, \ref{main_theorem2}, \ref{main_theorem3}, and \ref{main_theorem4} and Corollaries \ref{corollary1} and \ref{corollary2} still hold when the set $\mathsf{U}$ is finite,
with the following modifications.
First, the systems $\Sigma$ and $\ol\Sigma$ are required to satisfy the properties (\ref{delta_PISS}) and \eqref{delta_ISS}, respectively, for $\upsilon=\upsilon'$.
Second, take $U_\params=\mathsf{U}$
in the definitions of $\ol{S}_\params(\Sigma)$ (resp. ${S}_{\params}(\Sigma)$) and $\ol{S}_\params(\ol\Sigma)$.
Finally, in the conditions (\ref{bisim_cond}), \eqref{bisim_cond11}, (\ref{bisim_cond2}), \eqref{bisim_cond111}, \eqref{bisim_cond3}, and \eqref{bisim_cond4} set $\mu=0$. \qed
\end{remark}

Finally, we establish the results on the existence of symbolic model $\ol{S}_\params(\Sigma)$ (resp. $S_\params(\Sigma)$) such that \mbox{$\ol{S}_\params(\Sigma)\cong_{\mathcal{S}}^{\varepsilon}S_\tau(\Sigma)$} (resp. \mbox{$S_\params(\Sigma)\cong_{\mathcal{S}}^{\varepsilon}S_\tau(\Sigma)$}) and \mbox{$\ol{S}_\params(\ol\Sigma)\cong_{\mathcal{S}}^{\varepsilon}S_\tau(\ol\Sigma)$}.

\begin{theorem}\label{main_theorem5}
Consider the results in Theorem \ref{main_theorem}. If we select $$X_{\tau0}=\left\{x\in\R^n|\left\Vert x-\ol{H}_\params(x_{\params0})\right\Vert\leq\left(\ol\alpha^{-1}\left(\ul\alpha\left(\varepsilon^q\right)\right)\right)^{\frac{1}{q}},\exists x_{\params0}\in X_{\params0}\right\},$$then we have \mbox{$\ol{S}_\params(\Sigma)\cong_{\mathcal{S}}^{\varepsilon}S_\tau(\Sigma)$}.\qed
\end{theorem}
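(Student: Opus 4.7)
The plan is to lean on Theorem \ref{main_theorem}, which already yields that the relation $R=\{(x_\tau,x_\params)\in X_\tau\times X_\params\mid \EE[V(x_\tau,\ol{H}_\params(x_\params))]\leq\ul\alpha(\varepsilon^q)\}$ satisfies the two conditions (i)--(ii) of Definition \ref{APSR}. To upgrade this to $\ol{S}_{\params}(\Sigma)\cong_{\mathcal{S}}^{\varepsilon} S_{\tau}(\Sigma)$, I only need to verify the two additional initial-state requirements in the definition of $\cong_{\mathcal{S}}^{\varepsilon}$: every $x_{\params0}\in X_{\params0}$ is $R$-related to some $x_{\tau0}\in X_{\tau0}$, and conversely every $x_{\tau0}\in X_{\tau0}$ is $R$-related to some $x_{\params0}\in X_{\params0}$. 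The choice of $X_{\tau0}$ in the statement is tailored precisely so that both of these hold.

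For the first requirement, given any $x_{\params0}\in X_{\params0}=X_{\params}$, I would pick $x_{\tau0}:=\ol{H}_\params(x_{\params0})$, regarded as the degenerate $\R^n$-valued random variable concentrated at $\ol\xi_{x_s x_{\params0}}(N\tau)$. Since $\|x_{\tau0}-\ol{H}_\params(x_{\params0})\|=0\leq(\ol\alpha^{-1}(\ul\alpha(\varepsilon^q)))^{1/q}$, this candidate lies in $X_{\tau0}$; and since condition (ii) of Definition \ref{delta_PISS_Lya} combined with $\ul\alpha,\ol\alpha\in\mathcal{K}_\infty$ forces $V(y,y)=0$ for every $y\in\R^n$, we have $\EE[V(x_{\tau0},\ol{H}_\params(x_{\params0}))]=0\leq\ul\alpha(\varepsilon^q)$, hence $(x_{\tau0},x_{\params0})\in R$.

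For the second requirement, pick any $x_{\tau0}\in X_{\tau0}$. By the very definition of $X_{\tau0}$ there exists $x_{\params0}\in X_{\params0}$ with $\|x_{\tau0}-\ol{H}_\params(x_{\params0})\|\leq(\ol\alpha^{-1}(\ul\alpha(\varepsilon^q)))^{1/q}$. The upper Lyapunov bound from Definition \ref{delta_PISS_Lya}(ii) together with monotonicity of $\ol\alpha$ then gives $V(x_{\tau0},\ol{H}_\params(x_{\params0}))\leq\ol\alpha(\ol\alpha^{-1}(\ul\alpha(\varepsilon^q)))=\ul\alpha(\varepsilon^q)$, and taking expectation causes no issue because $x_{\tau0}$ is a deterministic point of $\R^n$. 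Thus $(x_{\tau0},x_{\params0})\in R$, completing the verification.

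The argument is almost mechanical once Theorem \ref{main_theorem} is invoked; the only point that requires some care is the identification of points in $\R^n$ with degenerate random variables in $Y$, which is exactly the embedding $\R^n\subset Y$ already discussed in Subsection \ref{ssec:fin.abstr}. The threshold $(\ol\alpha^{-1}(\ul\alpha(\varepsilon^q)))^{1/q}$ appearing in the definition of $X_{\tau0}$ is chosen precisely so that $\ol\alpha$ applied to it returns exactly the $R$-membership threshold $\ul\alpha(\varepsilon^q)$; no new Lyapunov estimates beyond those used in Theorem \ref{main_theorem} are required.
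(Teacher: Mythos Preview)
Your proposal is correct and follows essentially the same approach as the paper's own proof: both invoke Theorem \ref{main_theorem} for the bisimulation relation $R$ and then verify the two initial-state coverage conditions using the upper bound $V(x,x')\leq\ol\alpha(\|x-x'\|^q)$. The only difference is cosmetic: the paper spells out the direction $x_{\tau0}\mapsto x_{\params0}$ and dismisses the converse with ``in a similar way,'' whereas you make both directions explicit (in particular, choosing $x_{\tau0}=\ol{H}_\params(x_{\params0})$ and using $V(y,y)=0$ for the converse).
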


\begin{proof}
We start by proving that \mbox{$S_{\tau}(\Sigma)\preceq^{\varepsilon}_\mathcal{S}\ol{S}_{\params}(\Sigma)$}. For every $x_{\tau 0}\in{X_{\tau 0}}$ there always exists \mbox{$x_{\params 0}\in{X}_{\params 0}$} such that $\Vert{x_{\tau0}}-\ol{H}_\params(x_{\params0})\Vert\leq\left(\ol\alpha^{-1}\left(\ul\alpha\left(\varepsilon^q\right)\right)\right)^{\frac{1}{q}}$. Then,
\begin{align}\nonumber
\mathbb{E}\left[V\left({x_{\tau0}},\ol{H}_\params(x_{\params0})\right)\right]&=V\left({x_{\tau0}},\ol{H}_\params(x_{\params0})\right)\leq\overline\alpha(\Vert x_{\tau0}-\ol{H}_\params(x_{\params0})\Vert^q)\leq\underline\alpha\left(\varepsilon^q\right),
\end{align}
since $\overline\alpha$ is a $\mathcal{K}_\infty$ function.
Hence, \mbox{$\left(x_{\tau0},x_{\params0}\right)\in{R}$} implying that \mbox{$S_{\tau}(\Sigma)\preceq^{\varepsilon}_\mathcal{S}\ol{S}_{\params}(\Sigma)$}. In a similar way, we can show that \mbox{$\ol{S}_{\params}(\Sigma)\preceq^{\varepsilon}_{\mathcal{S}}S_{\tau}(\Sigma)$} which completes the proof. 
\end{proof}

The next theorem provides a similar result in line with the one of previous theorem, but by using a different relation.

\begin{theorem}\label{main_theorem6}
Consider the results in Theorem \ref{main_theorem2}. If we select $$X_{\tau0}=\left\{x\in\R^n\,\,|\,\,\left\Vert x-\ol{H}_\params(x_{\params0})\right\Vert\leq\varepsilon,~\exists x_{\params0}\in X_{\params0}\right\},$$then we have \mbox{$\ol{S}_\params(\Sigma)\cong_{\mathcal{S}}^{\varepsilon}S_\tau(\Sigma)$}.\qed
\end{theorem}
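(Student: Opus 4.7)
The plan is to proceed exactly as in the proof of Theorem \ref{main_theorem5}, but using the relation $R$ supplied by Theorem \ref{main_theorem2} in place of the Lyapunov-based one. Theorem \ref{main_theorem2} already establishes that
$$R=\left\{(x_\tau,x_\params)\in X_\tau\times X_\params\,\,\Big|\,\,\left(\EE\left[\left\Vert x_\tau-\ol{H}_\params(x_\params)\right\Vert^q\right]\right)^{1/q}\le\varepsilon\right\}$$
is an $\varepsilon$-approximate bisimulation relation between $\ol{S}_\params(\Sigma)$ and $S_\tau(\Sigma)$, so the remaining task is purely to verify the two initial-state coverage conditions in Definition \ref{APSR} that promote the bisimulation to the relation $\cong_{\mathcal{S}}^{\varepsilon}$.

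For the forward direction $S_\tau(\Sigma)\preceq^{\varepsilon}_{\mathcal{S}}\ol{S}_\params(\Sigma)$, I would take an arbitrary $x_{\tau 0}\in X_{\tau 0}$ and use the very definition of $X_{\tau 0}$ to extract some $x_{\params 0}\in X_{\params 0}$ with $\Vert x_{\tau 0}-\ol{H}_\params(x_{\params 0})\Vert\le\varepsilon$. The key observation is that $x_{\tau 0}$ belongs to $\R^n\subset Y$, i.e.\ it is a Dirac-distributed (degenerate) random variable, as emphasized in Subsection \ref{ssec:fin.abstr}. Consequently
$$\left(\EE\left[\left\Vert x_{\tau 0}-\ol{H}_\params(x_{\params 0})\right\Vert^q\right]\right)^{1/q}=\left\Vert x_{\tau 0}-\ol{H}_\params(x_{\params 0})\right\Vert\le\varepsilon,$$
which shows $(x_{\tau 0},x_{\params 0})\in R$.

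For the reverse direction $\ol{S}_\params(\Sigma)\preceq^{\varepsilon}_{\mathcal{S}}S_\tau(\Sigma)$, given any $x_{\params 0}\in X_{\params 0}$, I would simply pick $x_{\tau 0}:=\ol{H}_\params(x_{\params 0})$. Since $\Vert x_{\tau 0}-\ol{H}_\params(x_{\params 0})\Vert=0\le\varepsilon$, this point lies in $X_{\tau 0}$, and the same identity makes the $q$-th moment distance vanish, so $(x_{\tau 0},x_{\params 0})\in R$. Combining both directions with the $\varepsilon$-bisimilarity granted by Theorem \ref{main_theorem2} yields $\ol{S}_\params(\Sigma)\cong_{\mathcal{S}}^{\varepsilon}S_\tau(\Sigma)$.

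Honestly, there is no serious obstacle: the theorem is a checking-the-hypotheses result rather than an analytic one. The only mild subtlety worth flagging in the write-up is the identification of elements of $X_{\tau 0}\subset\R^n$ with degenerate random variables in $X_\tau$, which is what allows the pointwise norm bound in the definition of $X_{\tau 0}$ to translate directly into the $q$-th moment bound required by $R$.
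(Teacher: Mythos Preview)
Your proposal is correct and follows essentially the same approach as the paper: the paper's proof likewise checks only the initial-state coverage conditions, observing that any $x_{\tau0}\in X_{\tau0}$ (being a non-probabilistic point) satisfies $(\EE[\Vert x_{\tau0}-\ol{H}_\params(x_{\params0})\Vert^q])^{1/q}=\Vert x_{\tau0}-\ol{H}_\params(x_{\params0})\Vert\le\varepsilon$, and then dispatches the reverse direction with ``in a similar way.'' Your write-up is in fact slightly more explicit than the paper's, particularly in spelling out the witness $x_{\tau0}:=\ol{H}_\params(x_{\params0})$ for the reverse direction.
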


\begin{proof}
We start by proving that \mbox{$S_{\tau}(\Sigma)\preceq^{\varepsilon}_\mathcal{S}\ol{S}_{\params}(\Sigma)$}. For every $x_{\tau 0}\in{X_{\tau 0}}$ there always exists \mbox{$x_{\params 0}\in{X}_{\params 0}$} such that $\Vert{x_{\tau0}}-\ol{H}_\params(x_{\params0})\Vert\leq\varepsilon$ and $\left(\EE\left[\left\Vert x_{\tau0}-\ol{H}_\params(x_{\params0})\right\Vert^q\right]\right)^{\frac{1}{q}}\leq\varepsilon$.
Hence, \mbox{$\left(x_{\tau0},x_{\params0}\right)\in{R}$} implying that \mbox{$S_{\tau}(\Sigma)\preceq^{\varepsilon}_\mathcal{S}\ol{S}_{\params}(\Sigma)$}. In a similar way, we can show that \mbox{$\ol{S}_{\params}(\Sigma)\preceq^{\varepsilon}_{\mathcal{S}}S_{\tau}(\Sigma)$} which completes the proof. 
\end{proof}

The next two corollaries provide similar results as the ones of Theorems \ref{main_theorem5} and \ref{main_theorem6}, but for non-probabilistic control systems $\ol\Sigma$.

\begin{corollary}
Consider the results in Corollary \ref{corollary1}. If we select $$X_{\tau0}=\left\{x\in\R^n|\left\Vert x-\ol{H}_\params(x_{\params0})\right\Vert\leq\left(\ol\alpha^{-1}\left(\ul\alpha\left(\varepsilon\right)\right)\right),\exists x_{\params0}\in X_{\params0}\right\},$$then we have \mbox{$\ol{S}_\params(\ol\Sigma)\cong_{\mathcal{S}}^{\varepsilon}S_\tau(\ol\Sigma)$}.\qed
\end{corollary}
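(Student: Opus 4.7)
The plan is to follow the template set by the proof of Theorem \ref{main_theorem5} (and equivalently Theorem \ref{main_theorem6}), since the statement to establish is the non-probabilistic analogue of the former. Corollary \ref{corollary1} already guarantees that the relation $R=\{(x_\tau,x_\params)\mid V(x_\tau,\ol{H}_\params(x_\params))\leq\ul\alpha(\varepsilon)\}$ is an $\varepsilon$-approximate bisimulation relation between $\ol{S}_\params(\ol\Sigma)$ and $S_\tau(\ol\Sigma)$, so all that remains to upgrade $\cong_\mathcal{S}^\varepsilon$ from $\preceq_\mathcal{S}^\varepsilon$ is the pairing of initial states in both directions (cf.\ Definition \ref{APSR}).

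First, to show \mbox{$S_\tau(\ol\Sigma)\preceq_\mathcal{S}^\varepsilon\ol{S}_\params(\ol\Sigma)$}, I would pick any $x_{\tau 0}\in X_{\tau 0}$, use the definition of $X_{\tau 0}$ to obtain some $x_{\params 0}\in X_{\params 0}$ with $\Vert x_{\tau 0}-\ol{H}_\params(x_{\params 0})\Vert\leq\ol\alpha^{-1}(\ul\alpha(\varepsilon))$, and then invoke condition (ii) of the $\delta$-ISS Lyapunov function (the non-probabilistic version, without the $q$\textsf{th} power) to conclude
\begin{equation*}
V\!\left(x_{\tau 0},\ol{H}_\params(x_{\params 0})\right)\leq\ol\alpha\!\left(\Vert x_{\tau 0}-\ol{H}_\params(x_{\params 0})\Vert\right)\leq\ol\alpha\!\left(\ol\alpha^{-1}(\ul\alpha(\varepsilon))\right)=\ul\alpha(\varepsilon),
\end{equation*}
so that $(x_{\tau 0},x_{\params 0})\in R$ as required.

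For the reverse direction \mbox{$\ol{S}_\params(\ol\Sigma)\preceq_\mathcal{S}^\varepsilon S_\tau(\ol\Sigma)$}, I would exploit the freedom in selecting $x_{\tau 0}$: given any $x_{\params 0}\in X_{\params 0}$, simply choose $x_{\tau 0}=\ol{H}_\params(x_{\params 0})$. By construction $\Vert x_{\tau 0}-\ol{H}_\params(x_{\params 0})\Vert=0\leq\ol\alpha^{-1}(\ul\alpha(\varepsilon))$, so $x_{\tau 0}\in X_{\tau 0}$; and $V(x_{\tau 0},\ol{H}_\params(x_{\params 0}))=V(x_{\tau 0},x_{\tau 0})=0\leq\ul\alpha(\varepsilon)$, placing the pair in $R$.

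There is no genuine obstacle here: the only subtlety is the bookkeeping to ensure that $\ol\alpha^{-1}(\ul\alpha(\varepsilon))$ makes sense (both $\ul\alpha,\ol\alpha\in\mathcal{K}_\infty$ and $\ul\alpha\leq\ol\alpha$ pointwise by condition (ii), so $\ol\alpha^{-1}\circ\ul\alpha$ is well defined), and the reminder that in the non-probabilistic setting the Lyapunov bound uses $\Vert\cdot\Vert$ rather than $\Vert\cdot\Vert^q$ so no Jensen-type inequality enters. Combining the two inclusions with the bisimulation property inherited from Corollary \ref{corollary1} completes the argument.
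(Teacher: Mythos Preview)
Your proposal is correct and follows essentially the same approach as the paper, which simply remarks that the proof is similar to that of Theorem \ref{main_theorem5}. You even spell out the reverse direction (choosing $x_{\tau 0}=\ol{H}_\params(x_{\params 0})$) that the paper leaves implicit with ``in a similar way.''
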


The proof is similar to the one of Theorem \ref{main_theorem5}.

\begin{corollary}
Consider the results in Corollary \ref{corollary2}. If we select $$X_{\tau0}=\left\{x\in\R^n\,\,|\,\,\left\Vert x-\ol{H}_\params(x_{\params0})\right\Vert\leq\varepsilon,~\exists x_{\params0}\in X_{\params0}\right\},$$then we have \mbox{$\ol{S}_\params(\ol\Sigma)\cong_{\mathcal{S}}^{\varepsilon}S_\tau(\ol\Sigma)$}.\qed
\end{corollary}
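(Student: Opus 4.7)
The plan is to argue exactly as in the proof of Theorem \ref{main_theorem6}, simply replacing the $q$\textsf{th} moment metric with the Euclidean metric on $\R^n$. First I would appeal to Corollary \ref{corollary2}: under the hypotheses in force, the relation
$$R=\left\{(x_\tau,x_\params)\in X_\tau\times X_\params\,\,|\,\,\left\Vert x_{\tau}-\ol{H}_\params(x_{\params})\right\Vert\leq\varepsilon \right\}$$
is already known to be an $\varepsilon$-approximate bisimulation relation between $\ol{S}_\params(\ol\Sigma)$ and $S_\tau(\ol\Sigma)$ in the sense of conditions (i) and (ii) of Definition \ref{APSR}. The only remaining task is therefore to verify the two initial-state-matching requirements in Definition \ref{APSR} that upgrade a bisimulation relation to bisimilarity.

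The forward direction $S_\tau(\ol\Sigma)\preceq^\varepsilon_\mathcal{S}\ol{S}_\params(\ol\Sigma)$ is built directly into the choice of $X_{\tau 0}$: for any $x_{\tau 0}\in X_{\tau 0}$ the set is defined so that there exists $x_{\params 0}\in X_{\params 0}$ with $\|x_{\tau 0}-\ol{H}_\params(x_{\params 0})\|\leq\varepsilon$, which is exactly the condition for $(x_{\tau 0},x_{\params 0})\in R$. For the converse direction $\ol{S}_\params(\ol\Sigma)\preceq^\varepsilon_\mathcal{S} S_\tau(\ol\Sigma)$, given any $x_{\params 0}\in X_{\params 0}$ I would simply pick $x_{\tau 0}:=\ol{H}_\params(x_{\params 0})$; then $\|x_{\tau 0}-\ol{H}_\params(x_{\params 0})\|=0\leq\varepsilon$, which at once places $x_{\tau 0}$ into $X_{\tau 0}$ and the pair $(x_{\tau 0},x_{\params 0})$ into $R$. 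Combined with Corollary \ref{corollary2}, this yields $\ol{S}_\params(\ol\Sigma)\cong_\mathcal{S}^\varepsilon S_\tau(\ol\Sigma)$.

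There is no real technical obstacle here: once Corollary \ref{corollary2} is invoked, the definition of $X_{\tau 0}$ has been tailored so that both initial-state-matching clauses of Definition \ref{APSR} reduce to one-line checks. The only subtlety worth noting is that $\ol{H}_\params(x_{\params 0})$ used in the backward direction is a genuine non-probabilistic point of $\R^n$, since we are working with $\ol\Sigma$ rather than $\Sigma$, and hence is a legitimate element of $X_\tau=\R^n$.
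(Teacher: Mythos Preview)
Your proposal is correct and follows essentially the same approach as the paper, which simply states that the proof is similar to that of Theorem~\ref{main_theorem6}. Your argument is a direct specialization of that proof to the non-probabilistic setting, with the $q$\textsf{th} moment metric replaced by the Euclidean norm, and it verifies the two initial-state-matching conditions exactly as required.
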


The proof is similar to the one of Theorem \ref{main_theorem6}.
The next two theorems provide similar results as the ones of Theorems \ref{main_theorem5} and \ref{main_theorem6}, but by using the symbolic model $S_\params(\Sigma)$.

\begin{theorem}
Consider the results in Theorem \ref{main_theorem3}. Let $\mathcal{A}$ denote the set of all $\R^n$-valued random variables, measurable over $\sigalg_0$. If we select
\begin{align}\notag
&X_{\tau0}=\left\{a\in \mathcal{A}|\left(\EE\left[\left\Vert a-H_\params(x_{\params0})\right\Vert^q\right]\right)^{\frac{1}{q}}\leq\left(\ol\alpha^{-1}\left(\ul\alpha\left(\varepsilon^q\right)\right)\right)^{\frac{1}{q}},\exists x_{\params0}\in X_{\params0}\right\},
\end{align}
then we have \mbox{$S_\params(\Sigma)\cong_{\mathcal{S}}^{\varepsilon}S_\tau(\Sigma)$}.\qed
\end{theorem}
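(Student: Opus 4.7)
The plan is to adapt the proof strategy of Theorem~\ref{main_theorem5} to the probabilistic-output symbolic model $S_\params(\Sigma)$ by replacing pointwise bounds with their expected-value analogs and invoking Jensen's inequality. The concavity of $\ol\alpha$ (condition (i) of Definition~\ref{delta_PISS_Lya}) is precisely what makes this transition go through cleanly and is the reason the threshold in the definition of $X_{\tau 0}$ is chosen to be $\left(\ol\alpha^{-1}(\ul\alpha(\varepsilon^q))\right)^{1/q}$.

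For the forward simulation $S_\tau(\Sigma)\preceq_{\mathcal{S}}^{\varepsilon} S_\params(\Sigma)$, I would take any $x_{\tau 0}\in X_{\tau 0}$ and use the defining property of $X_{\tau 0}$ to obtain some $x_{\params 0}\in X_{\params 0}$ with $\left(\EE\left[\left\Vert x_{\tau 0}-H_\params(x_{\params 0})\right\Vert^q\right]\right)^{1/q}\leq\left(\ol\alpha^{-1}(\ul\alpha(\varepsilon^q))\right)^{1/q}$. Applying condition (ii) of Definition~\ref{delta_PISS_Lya} pointwise, taking expectations, then using Jensen's inequality on the concave $\ol\alpha$, and finally monotonicity would give
\begin{align*}
\EE\left[V(x_{\tau 0},H_\params(x_{\params 0}))\right]
&\leq \EE\left[\ol\alpha\left(\left\Vert x_{\tau 0}-H_\params(x_{\params 0})\right\Vert^q\right)\right]\\
&\leq \ol\alpha\left(\EE\left[\left\Vert x_{\tau 0}-H_\params(x_{\params 0})\right\Vert^q\right]\right)\\
&\leq \ul\alpha(\varepsilon^q),
\end{align*}
which places $(x_{\tau 0},x_{\params 0})\in R$, as required by Definition~\ref{APSR}.

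For the reverse simulation $S_\params(\Sigma)\preceq_{\mathcal{S}}^{\varepsilon} S_\tau(\Sigma)$, for each $x_{\params 0}\in X_{\params 0}$ I need to exhibit some $x_{\tau 0}\in X_{\tau 0}$ with $(x_{\tau 0},x_{\params 0})\in R$. The very definition of $X_{\tau 0}$ admits any $a\in\mathcal{A}$ within the specified $q$th-moment distance of $H_\params(x_{\params 0})$; such an element can be produced, for instance, by choosing a deterministic point $c\in\R^n$ (trivially $\sigalg_0$-measurable) sufficiently close to $H_\params(x_{\params 0})$ in $q$th moment, assuming the probability space is rich enough to permit such an approximation. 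Once chosen, the identical Jensen-based chain from the forward direction yields $\EE[V(x_{\tau 0},H_\params(x_{\params 0}))]\leq\ul\alpha(\varepsilon^q)$, placing $(x_{\tau 0},x_{\params 0})\in R$.

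The step I expect to be the main obstacle is this non-emptiness issue in the reverse direction: guaranteeing that $X_{\tau 0}$ contains a $\sigalg_0$-measurable element related to each $x_{\params 0}\in X_{\params 0}$ depends on both the richness of $(\Omega,\sigalg,\PP)$ and the concentration properties of $H_\params(x_{\params 0})=\xi_{x_s x_{\params 0}}(N\tau)$; a deterministic constant approximation works exactly when the $q$th central moment of $H_\params(x_{\params 0})$ is no larger than $\ol\alpha^{-1}(\ul\alpha(\varepsilon^q))$. Once that hurdle is settled, the remaining work is a mechanical expectation-and-Jensen adaptation of the argument underlying Theorem~\ref{main_theorem5}.
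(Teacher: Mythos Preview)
Your proposal is correct and follows exactly the route the paper takes: the paper's own proof consists solely of the sentence ``The proof is similar to the one of Theorem~\ref{main_theorem5},'' and the adaptation it has in mind is precisely your Jensen-plus-concavity-of-$\ol\alpha$ argument for passing from the pointwise bound $V\leq\ol\alpha(\|\cdot\|^q)$ to the expected bound.

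The non-emptiness concern you raise for the reverse direction is legitimate and is \emph{not} resolved by the paper. In Theorem~\ref{main_theorem5} one can simply take $x_{\tau 0}=\ol H_\params(x_{\params 0})\in\R^n$, which is trivially $\sigalg_0$-measurable; here the analogous choice $H_\params(x_{\params 0})=\xi_{x_s x_{\params 0}}(N\tau)$ is only $\sigalg_{N\tau}$-measurable, so one is forced to approximate it by an $\sigalg_0$-measurable random variable within the stated $q$th-moment threshold, and whether this is possible depends on the dispersion of $\xi_{x_s x_{\params 0}}(N\tau)$ relative to $\ol\alpha^{-1}(\ul\alpha(\varepsilon^q))$. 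The paper is silent on this point, so your proposal is at least as complete as the paper's own argument.
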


The proof is similar to the one of Theorem \ref{main_theorem5}.

\begin{theorem}
Consider the results in Theorem \ref{main_theorem4}. Let $\mathcal{A}$ denote the set of all $\R^n$-valued random variables, measurable over $\sigalg_0$. If we select
\begin{align}\notag
X_{\tau0}=\left\{a\in \mathcal{A}\,\,|\,\,\left(\EE\left[\left\Vert a-H_\params(x_{\params0})\right\Vert^q\right]\right)^{\frac{1}{q}}\leq\varepsilon,~\exists x_{\params0}\in X_{\params0}\right\},
\end{align}
then we have \mbox{$S_\params(\Sigma)\cong_{\mathcal{S}}^{\varepsilon}S_\tau(\Sigma)$}.\qed
\end{theorem}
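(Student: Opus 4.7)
The plan is to mirror the structure of Theorem \ref{main_theorem6} for the probabilistic-output symbolic model $S_\params(\Sigma)$. Theorem \ref{main_theorem4} already establishes that the relation
$$R=\left\{(x_\tau,x_\params)\in X_\tau\times X_\params\,\,\big|\,\,\left(\mathbb{E}\left[\left\Vert x_{\tau}-H_\params(x_{\params})\right\Vert^q\right]\right)^{1/q}\leq\varepsilon\right\}$$
satisfies the transition-level bisimulation conditions (output closeness and the two matching successor conditions in Definition \ref{APSR}). Hence only the two initial-state coverage conditions remain to upgrade the bisimulation relation into bisimilarity $S_\params(\Sigma)\cong_{\mathcal{S}}^{\varepsilon}S_\tau(\Sigma)$.

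For the forward direction $S_\tau(\Sigma)\preceq_{\mathcal{S}}^{\varepsilon}S_\params(\Sigma)$, I would argue directly from the proposed definition of $X_{\tau 0}$. Any $x_{\tau 0}\in X_{\tau 0}$ is, by construction, an $\sigalg_0$-measurable $\R^n$-valued random variable for which some $x_{\params 0}\in X_{\params 0}$ exists with $\left(\EE\left[\left\Vert x_{\tau 0}-H_\params(x_{\params 0})\right\Vert^q\right]\right)^{1/q}\leq\varepsilon$. This is exactly the defining inequality of $R$, so $(x_{\tau 0},x_{\params 0})\in R$. Combined with Theorem \ref{main_theorem4}, this yields the first simulation direction.

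For the reverse direction $S_\params(\Sigma)\preceq_{\mathcal{S}}^{\varepsilon}S_\tau(\Sigma)$, I would need to exhibit, for every $x_{\params 0}\in X_{\params 0}$, an initial state $x_{\tau 0}\in X_{\tau 0}$ with $(x_{\tau 0},x_{\params 0})\in R$. The natural candidate is an $\sigalg_0$-measurable representative of the random variable $H_\params(x_{\params 0})=\xi_{x_s x_{\params 0}}(N\tau)$. Picking $x_{\tau 0}$ with the same distribution as $H_\params(x_{\params 0})$ and coupled to it so that $\left(\EE\left[\left\Vert x_{\tau 0}-H_\params(x_{\params 0})\right\Vert^q\right]\right)^{1/q}=0\leq\varepsilon$ yields both $x_{\tau 0}\in X_{\tau 0}$ and $(x_{\tau 0},x_{\params 0})\in R$.

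The main obstacle is precisely this second step: the output $H_\params(x_{\params 0})$ is $\sigalg_{N\tau}$-measurable, whereas $X_{\tau 0}\subseteq\mathcal{A}$ demands $\sigalg_0$-measurability. The resolution relies on the standing assumption that the underlying probability space $(\Omega,\sigalg,\PP)$ is rich enough that $\sigalg_0$ admits a random variable with the same distribution as $H_\params(x_{\params 0})$, which is a standard convention in stochastic control (one may, if needed, enlarge the probability space by a product construction at time zero without affecting the filtration's usual conditions). With this in place, both directions are established, initial-state coverage holds symmetrically, and the bisimilarity conclusion follows.
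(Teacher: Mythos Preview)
Your overall plan matches the paper exactly: the paper simply says the proof is ``similar to the one of Theorem \ref{main_theorem6},'' meaning one uses the relation $R$ from Theorem \ref{main_theorem4} and checks the two initial-state coverage conditions. Your forward direction is precisely what the paper intends.

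There is, however, a genuine gap in your reverse direction. You propose taking $x_{\tau 0}$ to be an $\sigalg_0$-measurable random variable with the same distribution as $H_\params(x_{\params 0})=\xi_{x_s x_{\params 0}}(N\tau)$, and you assert that this can be ``coupled'' so that
\[
\left(\EE\left[\left\Vert x_{\tau 0}-H_\params(x_{\params 0})\right\Vert^q\right]\right)^{1/q}=0.
\]
This is not achievable. The random variable $H_\params(x_{\params 0})$ is a genuine functional of the Brownian increments on $[0,N\tau]$, so any $\sigalg_0$-measurable $x_{\tau 0}$ is independent of it (whether $\sigalg_0$ is trivial or has been enlarged by an independent product factor). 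Independence together with equality in distribution does \emph{not} give $L^q$-distance zero; in fact it forces a strictly positive distance whenever $H_\params(x_{\params 0})$ is non-degenerate. Enlarging the probability space cannot help here: making $x_{\tau 0}=H_\params(x_{\params 0})$ $\PP$-a.s.\ would require $H_\params(x_{\params 0})$ itself to be $\sigalg_0$-measurable, which it is not.

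You are right that the paper does not spell out how this measurability obstacle is handled; it inherits the same short ``in a similar way'' argument from Theorem \ref{main_theorem6}, where the issue does not arise because $\ol H_\params(x_{\params 0})$ is deterministic. So you have correctly identified a subtlety that the paper glosses over, but your proposed resolution does not close it. A correct argument for the reverse direction must produce an $\sigalg_0$-measurable $x_{\tau 0}$ whose $q$th-moment distance to $H_\params(x_{\params 0})$ is at most $\varepsilon$ without claiming it is zero; this requires an additional estimate (for instance on the $q$th central moment of $\xi_{x_s x_{\params 0}}(N\tau)$), not merely a richness assumption on $\sigalg_0$.
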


The proof is similar to the one of Theorem \ref{main_theorem6}.

\subsection{Control synthesis over $S_\params(\Sigma)$}
\label{ssec:ltl.control}
Note that both $\ol{S}_\q(\Sigma)$ and $S_\params(\Sigma)$ are finite systems.
The only difference is that the outputs of the former system are always non-probabilistic points,
whereas those of the latter can be non-degenerate random variables.
Let us describe the control synthesis for these systems over quantitative specifications, 
and for example over the safety formula $\square \varphi_A$, for $A\subset \R^n\subset Y$ 
(as already been used in Subsection \ref{ssec:fin.abstr}).
Clearly, since the original system $S_\tau(\Sigma)$ is stochastic in the sense that its outputs are non-degenerate random variables similarly to $S_\params(\Sigma)$,
it would be too conservative to require that it satisfies the formula exactly.
Thus, we are rather interested in an input policy that makes $S_\tau(\Sigma)$ satisfy $\square \varphi_{A_\ve}$ with some $\ve>0$:
recall from Subsection \ref{ssec:fin.abstr} that the latter LTL formula can be satisfied by non-degenerate random variables,
in contrast to $\square \varphi_A$.
Let us recap how to use abstractions for this task,
and let us start with $\ol{S}_\q(\Sigma)$ belonging to a more familiar type of systems whose outputs are non-probabilistic.

We label a state $x_\q$ of $\ol{S}_\q(\Sigma)$ with $A$ if $\ol{H}_\q(x_\q)\in A$ and,
say, with $B$ otherwise.
As a result, we obtain a transition system with labels over the states and can synthesize a control strategy by solving a safety game \cite{paulo} that makes an output run of $\ol{S}_\params(\Sigma)$ satisfy $\square \varphi_A$.
After that,
we can exploit $\ve$-approximate bisimilarity to guarantee that the refined input policy makes the corresponding output run of the original system satisfy $\square \varphi_{A_\ve}$.

The main subtlety in the case of $S_\params(\Sigma)$ is how to label its states.
We cannot do this
as for $\ol{S}_\q(\Sigma)$,
since $H_\q(x_\q)$ may never be an element of $A$ for any $x_\q\in X_\q$:
indeed, the latter is a set of non-probabilistic points,
whereas all the outputs of $S_\params(\Sigma)$ can happen to be non-degenerate random variables.
In order to cope with this issue,
we propose to relax the original problem and at the same time to strengthen the quality of the abstraction. Namely,
we can consider a relaxed problem $\square \varphi_{A_{\delta}}$ over the abstraction $ S_\params(\Sigma)$, for some $\delta\in]0~\varepsilon[$,
where the latter is now required to be $(\varepsilon-\delta)$-approximate (rather than just $\ve$-approximate) bisimilar to the original system.
Clearly $\left(A_{\delta}\right)_{\varepsilon-\delta}\subseteq A_\ve$,
so that whenever the control policy for $\square \varphi_{A_{\delta}}$ is synthesized over $ S_\params(\Sigma)$,
its refined version is guaranteed to enforce $\square \varphi_{A_\ve}$ over the original system.
Thanks to the fact that $A_{\delta}$ contains non-degenerate random variables,
we eliminate the conservativeness presented before in the sense that it is likely that there are now points $x_\q\in X_\q$ in $ S_\params(\Sigma)$ such that $H_\q(x_\q)\in A_{\delta}$.
The only remaining question is how to check whether $H_\q(x_\q)\in A_{\delta}$.
To answer this question,
we check that the distance
\begin{equation}\label{eq:dist.rv.set}
  \mathbf{d}\left(H_\q(x_\q),A\right) = \inf_{a\in A}\left(\EE\|\xi_{x_sx_\q}(N\tau) - a\|^q\right)^{1/q}
\end{equation}
is smaller than $\delta$, which involves both computing the expectation over the solution of the SDE,
and optimizing the value of this expectation.
Clearly, such a computation in general cannot be done analytically,
and the evaluation of the expectation itself is a highly non-trivial task unless the SDE has a very special form.

We propose a Monte Carlo approach to compute an approximation of the quantity in \eqref{eq:dist.rv.set} by means of empirical expectations.
Using such an approach,
we can estimate $\mathbf{d}\left(H_\q(x_\q),A\right)$ only up to some precision,
say $\theta$.
If the estimated distance is less than $\delta - \theta$,
we are safe to label $x_\q$ with $A$,
whereas all other states are labeled by $B$.
Furthermore,
since this result is based on a Monte Carlo method,
it holds true only with a certain confidence level $1-\pi$ where $\pi \in [0~1]$.
The benefit of our approach is that it is not only valid asymptotically (as the number of samples grows to infinity),
but we are also able to provide a number of simulations that is sufficient to estimate $\mathbf{d}\left(H_\q(x_\q),A\right)$ with any given precision $\theta$ and with any given confidence $1-\pi$.
This can be considered as an extension of the well-known Hoeffding's inequality \cite{hoeffding} to the case when one has to deal with an optimization problem.
Note that regardless of the specification of interest,
the main task over $S_\params(\Sigma)$ is always to compute some distance as in \eqref{eq:dist.rv.set} for any set that appears in the specification,
so the method below applies not only to the safety formula $\square \varphi_A$,
but also to more general formulae,
which are left as object of the future research.

Suppose that $A$ as in \eqref{eq:dist.rv.set} is a compact subset of $\R^n$,
and let $A^r$ be the smallest subset of $\left[\R^n\right]_r$ such that $A\subseteq\bigcup_{p\in A^r}\mathcal{B}_{\frac{r}{2}}({p})$.
Let $M$ be the number of samples and let
\begin{equation*}
  \mathbf{d}^r_M := \min_{a\in A^r}\left(\frac1M\sum_{i=1}^M \left\|\xi^i_{x_sx_\q}(N\tau)-a\right\|^q\right)^{\frac1q},
\end{equation*}
where the superscript $i$ denotes the index of samples. Now, we have the following theorem.

\begin{theorem}
\label{thm:number.samples}
  For any stochastic control system $\Sigma$ one has $|\mathbf{d}\left(H_\q(x_\q),A\right) - \mathbf{d}^r_M|\leq \theta$ with confidence of at least $1-\pi$, 
  given that $r < 2\theta$ and that
  \begin{equation*}
    M \geq \frac{|A^r|b(a^*, 2q)}{\pi(\theta - r/2)^{2q}},
  \end{equation*}
  where $b(a, p) := (1 + |x_s - a|^p)\mathrm e^{p(p+1)\max\{L_x, Z\}N\tau}$ and $a^* \in \argmax_{a'\in A^r}\|x_s - a'\|$.
\end{theorem}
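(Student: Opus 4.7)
The plan is to decompose the total error into a deterministic discretization part, coming from replacing the infimum of $D(a) \Let (\EE\|\xi_{x_s x_\q}(N\tau) - a\|^q)^{1/q}$ over $A$ by its minimum over the finite cover $A^r$, and a stochastic Monte Carlo part, coming from replacing $D$ by its empirical analogue $D_M(a) \Let (\tfrac{1}{M}\sum_{i=1}^M \|\xi^i_{x_s x_\q}(N\tau) - a\|^q)^{1/q}$. For the discretization, I would pick for any $a \in A$ its nearest neighbour $\tilde a \in A^r$ (with $\|\tilde a - a\| \leq r/2$); Minkowski's inequality in $L^q(\Omega)$ then gives $D(\tilde a) \leq D(a) + \|\tilde a - a\| \leq D(a) + r/2$, so $0 \leq \min_{a \in A^r} D(a) - \mathbf{d}(H_\q(x_\q), A) \leq r/2$.

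For the stochastic part, for each fixed $a \in A^r$ I would apply Chebyshev's inequality to the sample mean of the i.i.d.\ nonnegative random variables $Y^i \Let \|\xi^i_{x_s x_\q}(N\tau) - a\|^q$, using $\mathrm{Var}(Y^1) \leq \EE (Y^1)^2 = \EE\|\xi_{x_s x_\q}(N\tau) - a\|^{2q}$, to obtain
\[
  \PP\!\left[\left|\tfrac{1}{M}\sum_{i=1}^M Y^i - \EE Y^1\right| \geq s \right] \leq \frac{\EE\|\xi_{x_s x_\q}(N\tau) - a\|^{2q}}{M s^2}.
\]
To translate this into a deviation bound on $D_M(a) = (\tfrac{1}{M}\sum_i Y^i)^{1/q}$, I would invoke the elementary inequality $|u^{1/q} - v^{1/q}| \leq |u - v|^{1/q}$ for $u, v \geq 0$ and $q \geq 1$, which is a direct consequence of the subadditivity of the concave map $x \mapsto x^{1/q}$; with the choice $s = (\theta - r/2)^q$ this converts the above into $\PP[|D_M(a) - D(a)| > \theta - r/2] \leq \EE\|\xi_{x_s x_\q}(N\tau) - a\|^{2q}/(M(\theta - r/2)^{2q})$. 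A union bound over the finite set $A^r$, together with the moment estimate sketched below and the monotonicity of $b(\cdot, 2q)$ in $\|x_s - \cdot\|$ (which makes $a^*$ the worst case over $A^r$), yields overall failure probability at most $|A^r|\,b(a^*, 2q)/(M(\theta - r/2)^{2q})$; requiring this to be $\leq \pi$ produces the stated lower bound on $M$.

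The delicate ingredient is the a priori moment bound $\EE\|\xi_{x_s x_\q}(N\tau) - a\|^{2q} \leq b(a, 2q) = (1 + \|x_s - a\|^{2q})\mathrm e^{2q(2q+1)\max\{L_x, Z\} N\tau}$. I would derive it by applying It\^o's formula to the test function $\phi_a(x) \Let 1 + \|x - a\|^{2q}$ and bounding its infinitesimal generator through the Lipschitz hypotheses on $f$ and $\sigma$ (Definition \ref{Def_control_sys}) together with the boundedness of $\mathsf U$, so as to obtain a pointwise differential inequality of the form $\mathcal L^u \phi_a(x) \leq 2q(2q+1)\max\{L_x, Z\}\,\phi_a(x)$; Dynkin's formula and Gronwall's lemma on $[0, N\tau]$ then give $\EE\phi_a(\xi_{x_s x_\q}(N\tau)) \leq \phi_a(x_s)\,\mathrm e^{2q(2q+1)\max\{L_x, Z\} N\tau} = b(a, 2q)$. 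Once the three ingredients are assembled, the triangle inequality combines the discretization and the concentration bounds to yield $\theta$-accuracy with confidence at least $1 - \pi$. The main technical obstacle is the careful constant tracking needed to recover exactly the factor $p(p+1)$ in the exponent of $b(a,p)$; all other steps amount to fairly standard concentration plus $\varepsilon$-net arguments.
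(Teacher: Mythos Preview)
Your proposal is correct and mirrors the paper's proof: Chebyshev on the $q$th-power sample mean, the inequality $|u^{1/q}-v^{1/q}|\le|u-v|^{1/q}$ to descend to the $q$th root, a union bound over $A^r$, and an $r/2$ discretization error combined by the triangle inequality; the paper simply quotes \cite[Theorem 4.5.4]{kloeden} for the moment bound you sketch via Dynkin--Gronwall. One small slip: your lower bound $\min_{A^r}D\ge\inf_A D$ need not hold since grid points of $A^r$ are not required to lie in $A$, but the two-sided estimate $|\min_{A^r}D-\inf_A D|\le r/2$ is valid (each $a'\in A^r$ has a point of $A$ within $r/2$ by minimality of the cover), and that is all your final triangle-inequality step actually uses.
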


The proof can be found in the Appendix.
Let us make some comments on Theorem \ref{thm:number.samples}.
First of all,
no matter how many distances one has to evaluate,
one can always use the same samples $\xi^i$ and there is no need to generate new samples.
Second, to the best of our knowledge,
logarithmic bounds on $M$ (as per \cite{kloeden}) are not available in this general case due to the fact that we deal with an unbounded state space.

\subsection{Relationship with existing results in the literature}
Note that given any precision $\varepsilon$ and sampling time $\tau$, one can always use the results in Theorem \ref{main_theorem5} to construct a symbolic model $\ol{S}_\params(\Sigma)$ that is $\varepsilon$-approximate bisimilar to $S_\tau(\Sigma)$ without any state set discretization. Note that the results in Theorem 5.1 in \cite{majid8} also provide symbolic models that are $\varepsilon$-approximate bisimilar to $S_\tau(\Sigma)$. However, the results in \cite{majid8} require state set discretization and cannot be applied for any sampling time $\tau$ if the precision $\varepsilon$ is lower than the thresholds introduced in inequality (5.5) in \cite{majid8}. Furthermore, while the results in \cite{majid8} only provide symbolic models with non-probabilistic output values, the ones in this work provide symbolic models with probabilistic output values as well, which can result in less conservative symbolic models (cf. Remark \ref{remark1}).

One can compare the results provided in Theorems \ref{main_theorem} (corr. \ref{main_theorem5}) and \ref{main_theorem2} (corr. \ref{main_theorem6}) with the results provided in Theorems 5.1 and 5.3 in \cite{majid8} in terms of the size of the generated symbolic models. One can readily verify that the precisions of the symbolic models proposed here and the ones proposed in \cite{majid8} are approximately the same as long as both use the same input set quantization parameter $\mu$ and the state space quantization parameter, called $\nu$, in \cite{majid8} is equal to the parameter $\eta$ in \eqref{eta}, i.e. $\nu\leq\left(\ul\alpha^{-1}\left(\mathsf{e}^{-\kappa N\tau}\eta_0\right)\right)^{1/q}$, where $\eta_0=\max_{u_\params\in U_\params}V\left(\ol\xi_{x_su_\params}(\tau),x_s\right)$. The reason their precisions are approximately (rather than exactly) the same is because we use $h_{x_s}\left(\sigma,(N+1)\tau\right)$ in conditions \eqref{bisim_cond} and \eqref{bisim_cond2} in this paper rather than $h(\tau)=\sup_{x\in\mathsf{D}}h_x(\tau)$ that is being used in conditions 5.4 and 5.14 in \cite{majid8} for a compact set $\mathsf{D}\subset\R^n$. By assuming that $h_{x_s}\left(\sigma,(N+1)\tau\right)^{\frac{1}{q}}$ and $h(\tau)^{\frac{1}{q}}$ are much smaller than $\eta$ and $\nu$, respectively, or $h_{x_s}\left(\sigma,(N+1)\tau\right)\approx h(\tau)$, one should expect to obtain the same precisions for the symbolic models provided here and those provided in \cite{majid8} under the aforementioned conditions.

The number of states of the proposed symbolic model in this paper is $\left\vert\left[\mathsf{U}\right]_\mu\right\vert^N$. Assume that we are interested in the dynamics of $\Sigma$ on a compact set $\mathsf{D}\subset\R^n$. Since the set of states of the proposed symbolic model in \cite{majid8} is $\left[\mathsf{D}\right]_{\nu}$, its size is $\left\vert\left[\mathsf{D}\right]_{\nu}\right\vert=\frac{K}{\nu^n}$, where $K$ is a positive constant proportional to the volume of $\mathsf{D}$. Hence, it is more convenient to use the proposed symbolic model here rather than the one proposed in \cite{majid8} as long as:
\begin{align}\nonumber
\left\vert\left[\mathsf{U}\right]_\mu\right\vert^N\leq\frac{K}{\left(\ul\alpha^{-1}\left(\mathsf{e}^{-\kappa N\tau}\eta_0\right)\right)^{n/q}}.
\end{align}
Without loss of generality, one can assume that $\ul\alpha({r})=r$ for any $r\in\R_0^+$. Hence, for sufficiently large value of $N$, it is more convenient to use the proposed symbolic model here in comparison with the one proposed in \cite{majid8} as long as:
\begin{align}\label{criterion}
\left\vert\left[\mathsf{U}\right]_\mu\right\vert\mathsf{e}^{\frac{-\kappa\tau n}{q}}\leq1.
\end{align}

Note that the methodology proposed in this paper allows us to construct less conservative symbolic models with probabilistic output values while the proposed one in \cite{majid8} only provides conservative symbolic models with non-probabilistic output values.

\section{Example}
We show the effectiveness of the results presented in this work by constructing a bisimilar symbolic model for the model of a road network, 
which is divided in 5 cells of 250 meters with 2 entries and 2 ways out, as depicted schematically in Figure \ref{traffic}. 
The model is borrowed from \cite{corronc}, however it is now affected by noise and newly described in continuous time.

\begin{figure}[h]
\begin{center}
\includegraphics[width=14cm]{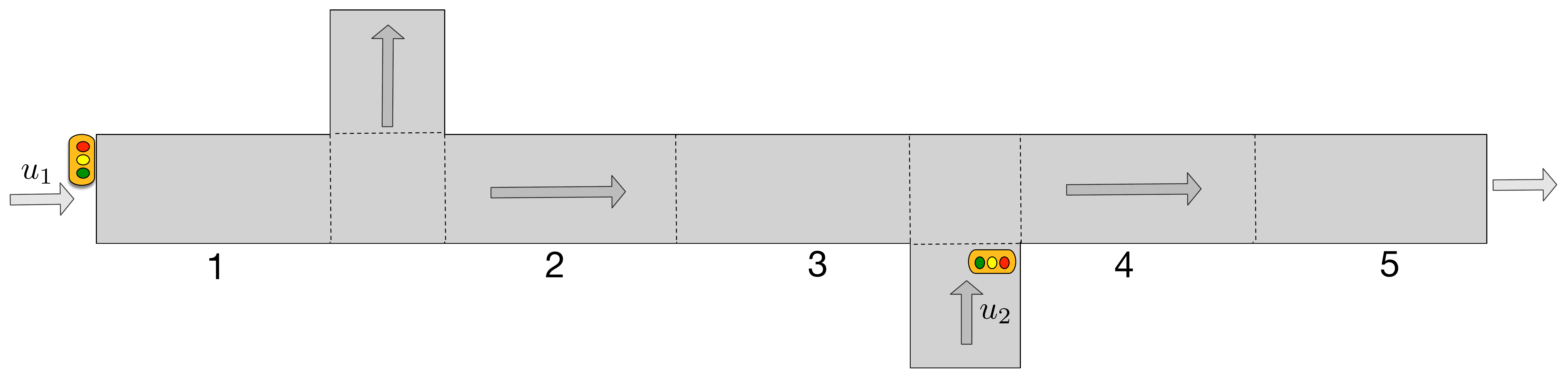}
\end{center}
\caption{Model of a road divided in 5 cells with 2 entries and 2 ways out.}
\label{traffic}
\end{figure}

The two entries are controlled by
traffic lights, denoted by $u_1$ and $u_2$, that enable (green light) or
not (red light) the vehicles to pass. In this model the length of a cell is in kilometres (0.25 km), and the flow
speed of the vehicles is 70 kilometres per hour (km/h). Moreover, during the sampling time interval $\tau$, it is assumed that 6 vehicles pass the entry controlled by the light
$u_1$, 8 vehicles pass the entry controlled by the light $u_2$, and
one quarter of vehicles that leave cell 1 goes out on the first
exit. We assume that both lights cannot be red at the same time. The model of $\Sigma$ is described by:
\begin{align}
\Sigma:\left\{\diff\xi=\left(A\xi+B\upsilon\right)\diff{t}+\xi\diff{W}_t,\right.
\end{align}
where
\begin{align}\nonumber
&A=10^4\times\left[ {\begin{array}{ccccc}
-0.0541  &  0  & 0  &  0  &  0 \\
0.3224 & -0.1370  & 0  & 0  &  0 \\
-0.7636  & 0.3224 & -0.0541  &  0 &   0\\
2.1122   & -0.7636 &   0.1260 & -0.0541  & 0\\
-6.2132  &  2.1122 &  -0.2205  &  0.1260  & -0.0541\\
 \end{array}}\right],\\\nonumber
 &B=10^4\times\left[ {\begin{array}{ccccc} 0.0696  & 0 &   0 &   0  &  0\\
 -0.2743  &  0.1402  & 0 &  0 &  0\\
 0.7075 &  -0.2743  &  0.0696  & 0  & 0\\
 -2.0081  &  0.7075 &  -0.0924  &  0.0696  &  0\\
 5.9802  & -2.0081 &   0.1911  & -0.0924  &  0.0696\\
 \end{array}}\right],
\end{align}
$\mathsf{U}=\{\mathsf{u}_0,\mathsf{u}_1,\mathsf{u}_2\}=\{[6~0~8~0~0]^T,[6~0~0~0~0]^T,[0~0~8~0~0]^T\}$, and $\xi_i$ is the number of vehicles in cell $i$ of the road. Note that $\mathcal{U}_\tau$ contains curves taking values in $\mathsf{U}$. Since $\mathsf{U}$ is finite, as explained in Remark \ref{remark5}, $\mu=0$ is to be used in (\ref{bisim_cond}), (\ref{bisim_cond2}), \eqref{bisim_cond3}, and \eqref{bisim_cond4}. One can readily verify that the function $V(x,x')=(x-x')^TP(x-x')$, for any $x,x'\in\R^5$, where 
\begin{align}\nonumber
&P=10^4\times\left[ {\begin{array}{ccccc}
76763.4393 & -2101.1583 &  3790.9182 & -155.6576 & -125.9871 \\
-2101.1583  & 10676.9437 &  1237.3552 & -86.6855 & 100.5718 \\
3790.9182 &  1237.3552 &  1823.02431 & 171.1549 & -71.1162\\
-155.6576 & -86.6855 &  171.1549 & 229.2134 & -5.5649\\
-125.9871 &  100.5718 & -71.1162 & -5.5649  & 33.3977\\
 \end{array}}\right],
\end{align}
satisfies conditions (i)-(iii) in Definition \ref{delta_PISS_Lya} with $q=2$, $\kappa=300$, $\ul\alpha({r})=1/2\lambda_{\min}(P)r$, $\ol\alpha({r})=1/2\lambda_{\max}(P)r$, $\rho({r})=5\Vert{B}\Vert^2\Vert P\Vert/(2\kappa) r^2$, $\forall r\in\R_0^+$. Hence, $\Sigma$ is $\delta$-ISS-M$_2$, equipped with the $\delta$-ISS-M$_2$ Lyapunov function $V$. Using the results of Theorem \ref{the_Lya}, provided in \cite{majid8}, one gets that functions $\beta(r,s)=\ul\alpha^{-1}\left(\ol\alpha(r)\mathsf{e}^{-\kappa{s}}\right)$ and $\gamma({r})=\ul\alpha^{-1}\left(\frac{1}{\mathsf{e}\kappa}\rho(r)\right)$ satisfy property \eqref{delta_PISS} for $\Sigma$. We choose the source state as the one proposed in \cite{corronc}, i.e. $x_s=[3.8570~~3.3750~~3.3750~~8.5177~~8.5177]^T$.

For a given precision $\varepsilon=0.5$ and fixed sampling time $\tau=0.00277$ h (10 sec), the parameter $N$ for $\ol{S}_\params(\Sigma)$, based on inequality (\ref{bisim_cond}) in Theorem \ref{main_theorem}, is obtained as 14. Therefore, the resulting cardinality of the set of states for $\ol{S}_\params(\Sigma)$ is $\left\vert\mathsf{U}\right\vert^{14}=3^{14}=4782969$. Using the aforementioned parameters, one gets $\eta\leq6.0776\times10^{-6}$, where $\eta$ is given in \eqref{eta}. Note that the results in Theorems \ref{main_theorem2} and \ref{main_theorem4} cannot be applied here because $(\beta(\varepsilon^q,\tau))^{\frac{1}{q}}>\varepsilon$. Using criterion (\ref{criterion}), one has $\left\vert\mathsf{U}\right\vert\mathsf{e}^{\frac{-\kappa\tau n}{q}}=0.37$, implying that the approach proposed in this paper is more appropriate in terms of the size of the abstraction than the one proposed in \cite{majid8}. We elaborate more on this at the end of the section.

\begin{remark}
By considering the non-probabilistic control system $\ol\Sigma$ and using the results in Corollary \ref{corollary1} and the same parameters $\params$ as the ones in $\ol{S}_\params(\Sigma)$, one obtains $\varepsilon=0.01$ in \eqref{bisim_cond11}. Therefore, as expected, $\ol S_\params(\ol\Sigma)$ (i.e. symbolic model for the non-probabilistic control system $\ol\Sigma$) provides much smaller precision than $\ol{S}_\params(\Sigma)$ (i.e. symbolic model for the stochastic control system $\Sigma$) while having the same size as $\ol{S}_\params(\Sigma)$.
\end{remark}

Now the objective, as inspired by the one suggested in \cite{corronc}, is to design a schedule for the coordination of
traffic lights enforcing $\Sigma$ to satisfy a \emph{safety} and a \emph{fairness} property. The safety part is to keep the density of traffic lower
than 16 vehicles per cell which can be encoded via the LTL specification\footnote{Note that the semantics of LTL are defined over the output behaviors of $S_{\params}(\Sigma)$.} $\Box \varphi_W$, where $W=[0~16]^5$. The fairness
part requires to alternate the accesses between the two traffic lights and to allow only 3 identical consecutive modes
of red light ensuring fairness between two traffic lights. Starting from the initial condition $x_0=[1.417  ~~ 4.993 ~~  10.962  ~~  9.791  ~~ 14.734]^T$, we obtain a periodic schedule $\upsilon=(\mathsf{u}_0\mathsf{u}_0\mathsf{u}_0\mathsf{u}_2\mathsf{u}_1\mathsf{u}_0\mathsf{u}_0\mathsf{u}_2\mathsf{u}_1\mathsf{u}_0\mathsf{u}_0\mathsf{u}_2\mathsf{u}_1\mathsf{u}_2)^\omega$ keeping $\mathsf{u}_0$ as much as possible in each period in order to maximize number of vehicles accessing the
road.  

Figure \ref{fig1} displays a few realizations of the closed-loop solution process $\xi_{x_0\upsilon}$. In Figure \ref{fig1} bottom right, we show the average value (over
100000 experiments) of the distance (in the 2nd moment metric) in time of the solution process $\xi_{x_0\upsilon}$ to the set $W$, namely $\left\Vert\xi_{x_0\upsilon}(t)\right\Vert_{W}$, where the point-to-set distance is defined as $\Vert x\Vert_W=\inf_{w\in W}\Vert x-w\Vert$. Notice that the empirical average distance is as expected lower
than the precision $\varepsilon=0.5$.

\begin{figure}[h]
\hspace{-1.4cm}
\includegraphics[width=16cm]{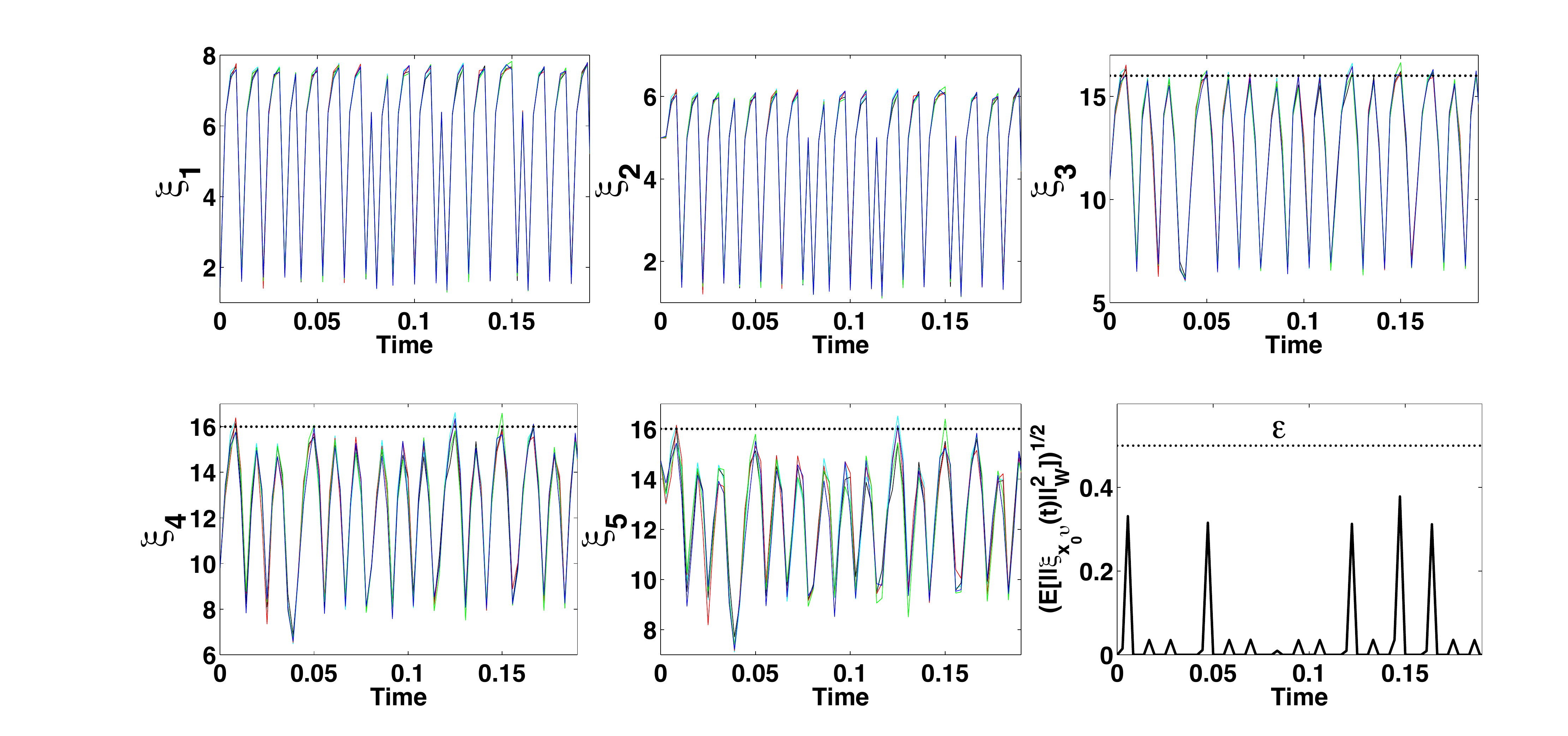}
\caption{A few realizations of the closed-loop solution process $\xi_{x_0\upsilon}$ (top panel and the first two figures from the left in the bottom panel) and the average values (over 100000 experiments) of the distance of the solution process $\xi_{x_0\upsilon}$ to the set $W$ (bottom right panel).}
\label{fig1}
\end{figure}


To compute exactly the size of the symbolic model, proposed in Theorem 5.1 in \cite{majid8}, we consider the dynamics of $\Sigma$ over the subset $\mathsf{D}=[0~16]^5$ of $\R^5$. Note that Theorem 5.3 in \cite{majid8} cannot be applied here because $(\beta(\varepsilon^q,\tau))^{\frac{1}{q}}>\varepsilon$. Using the same precision $\varepsilon=0.5$ and sampling time $\tau=0.00277$ as the ones here, and the inequalities (5.3) and (5.4) in \cite{majid8}, we obtain the state space quantization parameter as $\nu\leq6.0776\times10^{-6}$. Therefore, if one uses $\nu=6.0776\times10^{-6}$, the cardinality of the state set of the symbolic model, provided by the results in Theorem 5.1 in \cite{majid8}, is equal to $\left(\frac{16}{\nu}\right)^5=1.2645\times10^{32}$ which is much higher than the one proposed here, amounting instead to $4782969$ states.

\section{Conclusions}
In this paper we have proposed a symbolic abstraction technique for incrementally stable stochastic control systems (and corresponding non-probabilistic model), 
which features only the discretization of the input set.  
The proposed approach is potentially more scalable than the one proposed in \cite{majid8} for higher dimensional stochastic control systems. 

Future work will concentrate on efficient implementations of the symbolic models proposed in this work using Binary Decision Diagrams, 
on top of the recently developed synthesis toolbox \textsf{SCOTS} \cite{scots}, 
as well as on more efficient controller synthesis techniques.


\bibliographystyle{alpha}
\bibliography{reference}

\section*{Appendix}

\begin{proof}{of Lemma \ref{lemma1}:}
Let $x_\params\in X_\params$, where $x_\params=\left(u_1,u_2,\ldots,u_N\right)$, and $u_\params\in U_\params$. Using the definition of $\ol{S}_\params(\Sigma)$, one obtains $x'_\params=\left(u_2,\ldots,u_N,u_\params\right)\in\mathbf{Post}_{u_\params}(x_\params)$. Since $V$ is a $\delta$-ISS-M$_q$ Lyapunov function for $\Sigma$, we have:
\begin{align}\notag
\ul\alpha&\left(\left\Vert\ol\xi_{\ol{H}_\params(x_\params)u_\params}(\tau)-\ol{H}_\params\left(x'_\params\right)\right\Vert^q\right)\leq V(\ol\xi_{\ol{H}_\params(x_\params)u_\params}(\tau),\ol{H}_\params\left(x'_\params\right))\\\notag&=V(\ol\xi_{\ol\xi_{x_sx_\params}(N\tau)u_\params}(\tau),\ol\xi_{x_sx'_\params}(N\tau))=V(\ol\xi_{\ol\xi_{x_su_1}(\tau)(u_2,\ldots,u_N,u_\params)}(N\tau),\ol\xi_{x_s(u_2,\ldots,u_N,u_\params)}(N\tau))\\\label{gronwall}
&\leq\mathsf{e}^{-\kappa N\tau}V(\ol\xi_{x_su_1}(\tau),x_s).
\end{align}
We refer the interested readers to the proof of Theorem \ref{the_Lya} in \cite{majid8} to see how we derived the inequality \eqref{gronwall}.
Hence, one gets
\begin{align}\label{upper_bound2}
\Vert\ol\xi_{\ol{H}_\params(x_\params)u_\params}(\tau)-\ol{H}_\params\left(x'_\params\right)\Vert\leq(\ul\alpha^{-1}(\mathsf{e}^{-\kappa N\tau}V(\ol\xi_{x_su_1}(\tau),x_s)))^{1/q},
\end{align}
because of $\ul\alpha\in\mathcal{K}_\infty$. Since the inequality \eqref{upper_bound2} holds for all $x_\params\in X_\params$ and $u_\params\in U_\params$, and $\ul\alpha\in\mathcal{K}_\infty$, inequality \eqref{upper_bound} holds.
\end{proof}

\begin{proof}{of Lemma \ref{lemma2}:}
Let $x_\params\in X_\params$, where $x_\params=\left(u_1,u_2,\ldots,u_N\right)$, and $u_\params\in U_\params$. Using the definition of $\ol{S}_\params(\Sigma)$, one obtains $x'_\params=\left(u_2,\ldots,u_N,u_\params\right)\in\mathbf{Post}_{u_\params}(x_\params)$. Since $\Sigma$ is $\delta$-ISS-M$_q$ and using inequality \eqref{delta_PISS}, we have:
\begin{align}\notag
\Vert\ol\xi_{\ol{H}_\params(x_\params)u_\params}(\tau)-\ol{H}_\params\left(x'_\params\right)\Vert^q&=\Vert\ol\xi_{\ol\xi_{x_sx_\params}(N\tau)u_\params}(\tau)-\ol\xi_{x_sx'_\params}(N\tau)\Vert^q\\\notag
&=\Vert\ol\xi_{\ol\xi_{x_su_1}(\tau)(u_2,\ldots,u_N,u_\params)}(N\tau)-\ol\xi_{x_s(u_2,\ldots,u_N,u_\params)}(N\tau)\Vert^q\leq\beta(\Vert\ol\xi_{x_su_1}(\tau)-x_s\Vert^q,N\tau).
\end{align}
Hence, one gets
\begin{align}\label{upper_bound3}
\Vert\ol\xi_{\ol{H}_\params(x_\params)u_\params}(\tau)-\ol{H}_\params(x'_\params)\Vert\leq(\beta(\Vert\ol\xi_{x_su_1}(\tau)-x_s\Vert^q,N\tau))^{1/q}.
\end{align}
Since the inequality \eqref{upper_bound3} holds for all $x_\params\in X_\params$ and all $u_\params\in U_\params$, and $\beta$ is a $\mathcal{K}_\infty$ function with respect to its first argument when the second one is fixed, inequality \eqref{upper_bound1} holds.
\end{proof}

\begin{proof}{of Theorem \ref{main_theorem}:}
We start by proving that $R$ is an $\varepsilon$-approximate simulation relation from $S_{\tau}(\Sigma)$ to $\ol{S}_{\params}(\Sigma)$. Consider any \mbox{$\left(x_{\tau},x_{\params}\right)\in R$}. Condition (i) in Definition \ref{APSR} is satisfied because
\begin{equation}
\label{convexity}
(\mathbb{E}[\Vert x_{\tau}-\ol{H}_\params(x_{\params})\Vert^q])^{\frac{1}{q}}\leq(\underline\alpha^{-1}(\mathbb{E}[V(x_{\tau},\ol{H}_\params(x_{\params}))]))^{\frac{1}{q}}\leq\varepsilon.
\end{equation}
We used the convexity assumption of $\underline\alpha$ and the Jensen inequality \cite{oksendal} to show the inequalities in (\ref{convexity}). Let us now show that condition (ii) in Definition
\ref{APSR} holds. Consider any \mbox{$\upsilon_{\tau}\in {U}_{\tau}$}. Choose an input \mbox{$u_{\params}\in U_{\params}$} satisfying
\begin{equation}
\Vert \upsilon_{\tau}-u_{\params}\Vert_{\infty}=\Vert \upsilon_{\tau}(0)-u_{\params}(0)\Vert\leq\mu.\label{b01}
\end{equation}
Note that the existence of such $u_\params$ is guaranteed by $\mathsf{U}$ being a finite union of boxes and by
the inequality $\mu\leq\boxspan(\mathsf{U})$ which guarantees that $\mathsf{U}\subseteq\bigcup_{p\in[\mathsf{U}]_{\mu}}\mathcal{B}_{{\mu}}(p)$. Consider the transition \mbox{$x_{\tau}\rTo^{\upsilon_{\tau}}_{\tau} x'_{\tau}=\xi_{x_{\tau}\upsilon_{\tau}}(\tau)$} $\PP$-a.s. in $S_{\tau}(\Sigma)$. Since $V$ is a \mbox{$\delta$-ISS-M$_q$} Lyapunov function for $\Sigma$ and using inequality \eqref{b01}, we have (cf. equation (3.3) in \cite{majid8})
\begin{align}\label{b02}
\mathbb{E}[V(x'_{\tau},\xi_{\ol{H}_\params(x_{\params})u_{\params}}(\tau))] \leq \EE[V(x_\tau,\ol{H}_\params(x_q))] \mathsf{e}^{-\kappa\tau}+ \frac{1}{\mathsf{e}\kappa} \rho(\|\upsilon_{\tau}-u_{\params}\|_\infty)\leq \underline\alpha\left(\varepsilon^q\right) \mathsf{e}^{-\kappa\tau} + \frac{1}{\mathsf{e}\kappa}\rho(\mu).
\end{align}
Note that existence of $u_\params$, by the definition of $\ol{S}_\params(\Sigma)$, implies the existence of $x_{\params}\rTo^{u_{\params}}_{\params}x'_{\params}$ in $\ol{S}_{\params}(\Sigma)$.
Using Lemma \ref{lemma3}, the concavity of $\widehat\gamma$, the Jensen inequality \cite{oksendal}, equation \eqref{eta}, the inequalities (\ref{supplement}), (\ref{bisim_cond}), (\ref{b02}), and triangle inequality, we obtain
\begin{align*}
\mathbb{E}[V(x'_{\tau},\ol{H}_\params(x'_{\params}))]&=\mathbb{E}[V(x'_\tau,\xi_{\ol{H}_\params(x_{\params})u_{\params}}(\tau))+V(x'_{\tau},\ol{H}_\params(x'_\params))-V(x'_\tau,\xi_{\ol{H}_\params(x_{\params})u_{\params}}(\tau))]\\ \notag
&=  \mathbb{E}[V(x'_{\tau},\xi_{\ol{H}_\params(x_{\params})u_{\params}}(\tau))]+\mathbb{E}[V(x'_{\tau},\ol{H}_\params(x'_\params))-V(x'_\tau,\xi_{\ol{H}_\params(x_{\params})u_{\params}}(\tau))]\\\notag&\leq\underline\alpha(\varepsilon^q)\mathsf{e}^{-\kappa\tau}+\frac{1}{\mathsf{e}\kappa}\rho(\mu)+\mathbb{E}[\widehat\gamma(\Vert\xi_{\ol{H}_\params(x_{\params})u_{\params}}(\tau)-\ol{H}_\params(x'_{\params})\Vert)]\\\notag
&\leq\underline\alpha(\varepsilon^q)\mathsf{e}^{-\kappa\tau}+\frac{1}{\mathsf{e}\kappa}\rho(\mu)+\widehat\gamma(\mathbb{E}[\Vert\xi_{\ol{H}_\params(x_{\params})u_{\params}}(\tau)-\ol{\xi}_{\ol{H}_\params(x_{\params})u_{\params}}(\tau)+\ol{\xi}_{\ol{H}_\params(x_{\params})u_{\params}}(\tau)-\ol{H}_\params(x'_{\params})\Vert])\\\notag
&\leq\underline\alpha(\varepsilon^q)\mathsf{e}^{-\kappa\tau}+\frac{1}{\mathsf{e}\kappa}\rho(\mu)+\widehat\gamma(\mathbb{E}[\Vert\xi_{\ol{H}_\params(x_{\params})u_{\params}}(\tau)-\ol{\xi}_{\ol{H}_\params(x_{\params})u_{\params}}(\tau)\Vert]+\Vert\ol{\xi}_{\ol{H}_\params(x_{\params})u_{\params}}(\tau)-\ol{H}_\params(x'_{\params})\Vert)\\\notag
&\leq\underline\alpha(\varepsilon^q)\mathsf{e}^{-\kappa\tau}+\frac{1}{\mathsf{e}\kappa}\rho(\mu)+\widehat\gamma((h_{x_s}((N+1)\tau))^{\frac{1}{q}}+\eta)\leq\underline\alpha(\varepsilon^q).
\end{align*}
Therefore, we conclude that \mbox{$\left(x'_{\tau},x'_{\params}\right)\in{R}$} and that condition (ii) in Definition \ref{APSR} holds.

Now we prove that $R^{-1}$ is an
$\varepsilon$-approximate simulation relation from $\ol{S}_{\params}(\Sigma)$ to $S_{\tau}(\Sigma)$.
Consider any \mbox{$\left(x_{\tau},x_{\params}\right)\in R$} (or equivalently \mbox{$\left(x_{\params},x_{\tau}\right)\in R^{-1}$}). As showed in the first part of the proof, condition (i) in Definition \ref{APSR} is satisfied.
Let us now show that condition (ii) in Definition \ref{APSR} holds.
Consider any \mbox{$u_{\params}\in U_{\params}$}. Choose the input \mbox{$\upsilon_{\tau}=u_\params$} and consider \mbox{$x'_{\tau}=\xi_{x_{\tau}\upsilon_{\tau}}(\tau)$ $\PP$-a.s. in $S_{\tau}(\Sigma)$}.
Since $V$ is a \mbox{$\delta$-ISS-M$_q$} Lyapunov function for $\Sigma$, one obtains (cf. equation 3.3 in \cite{majid8}):
\begin{equation}
\mathbb{E}[V(x'_{\tau},\xi_{\ol{H}_\params(x_{\params})u_{\params}}(\tau))]\leq \mathsf{e}^{-\kappa\tau}\EE[V(x_{\tau},\ol{H}_\params(x_\params))]\leq\mathsf{e}^{-\kappa\tau}\underline\alpha\left(\varepsilon^q\right).\label{b03}%
\end{equation}
Using Lemma \ref{lemma3}, the definition of $\ol{S}_\params(\Sigma)$, the concavity of $\widehat\gamma$, the Jensen inequality \cite{oksendal}, equation \eqref{eta}, the inequalities (\ref{supplement}), (\ref{bisim_cond}), (\ref{b03}), and triangle inequality, we obtain
\begin{align*}\nonumber
\mathbb{E}[V(x'_{\tau},\ol{H}_\params(x'_{\params}))]&=\mathbb{E}[V(x'_{\tau},\xi_{\ol{H}_\params(x_{\params})u_{\params}}(\tau))+V(x'_{\tau},\ol{H}_\params(x'_{\params}))-V(x'_\tau,\xi_{\ol{H}_\params(x_{\params})u_{\params}}(\tau))]\\\notag &= \mathbb{E}[V(x'_{\tau},\xi_{\ol{H}_\params(x_{\params})u_{\params}}(\tau))]+\mathbb{E}[V(x'_{\tau},\ol{H}_\params(x'_{\params}))-V(x'_\tau,\xi_{\ol{H}_\params(x_{\params})u_{\params}}(\tau))]\\\notag&\leq\mathsf{e}^{-\kappa\tau}\underline\alpha(\varepsilon^q)+\mathbb{E}[\widehat\gamma(\Vert\xi_{\ol{H}_\params(x_{\params})u_{\params}}(\tau)-\ol{H}_\params(x'_{\params})\Vert)]\\\notag&\leq\mathsf{e}^{-\kappa\tau}\underline\alpha(\varepsilon^q)+\widehat\gamma(\mathbb{E}[\Vert\xi_{\ol{H}_\params(x_{\params})u_{\params}}(\tau)-\ol{\xi}_{\ol{H}_\params(x_{\params})u_{\params}}(\tau)+\ol{\xi}_{\ol{H}_\params(x_{\params})u_{\params}}(\tau)-\ol{H}_\params(x'_{\params})\Vert])\\\notag&\leq\mathsf{e}^{-\kappa\tau}\underline\alpha(\varepsilon^q)+\widehat\gamma(\mathbb{E}[\Vert\xi_{\ol{H}_\params(x_{\params})u_{\params}}(\tau)-\ol{\xi}_{\ol{H}_\params(x_{\params})u_{\params}}(\tau)\Vert]+\Vert\ol{\xi}_{\ol{H}_\params(x_{\params})u_{\params}}(\tau)-\ol{H}_\params(x'_{\params})\Vert)\\\notag&\leq\mathsf{e}^{-\kappa\tau}\underline\alpha(\varepsilon^q)+\widehat\gamma((h_{x_s}((N+1)\tau))^{\frac{1}{q}}+\eta)\leq\underline\alpha(\varepsilon^q).
\end{align*}
Therefore, we conclude that \mbox{$(x'_{\tau},x'_{\params})\in{R}$} (or equivalently \mbox{$\left(x'_{\params},x'_{\tau}\right)\in R^{-1}$}) and condition (ii) in Definition \ref{APSR} holds.
\end{proof}

\begin{proof}{of Theorem \ref{main_theorem2}:}
We start by proving that $R$ is an $\varepsilon$-approximate simulation relation from $S_{\tau}(\Sigma)$ to $\ol{S}_{\params}(\Sigma)$. Consider any \mbox{$\left(x_{\tau},x_{\params}\right)\in R$}. Condition (i) in Definition \ref{APSR} is satisfied by the definition of $R$. Let us now show that condition (ii) in Definition
\ref{APSR} holds. Consider any \mbox{$\upsilon_{\tau}\in {U}_{\tau}$}. Choose an input \mbox{$u_{\params}\in U_{\params}$} satisfying
\begin{equation}
\Vert \upsilon_{\tau}-u_{\params}\Vert_{\infty}=\Vert \upsilon_{\tau}(0)-u_{\params}(0)\Vert\leq\mu.\label{b10}
\end{equation}
Note that the existence of such $u_\params$ is guaranteed by $\mathsf{U}$ being a finite union of boxes and by
the inequality $\mu\leq\boxspan(\mathsf{U})$ which guarantees that $\mathsf{U}\subseteq\bigcup_{p\in[\mathsf{U}]_{\mu}}\mathcal{B}_{{\mu}}(p)$. Consider the transition \mbox{$x_{\tau}\rTo^{\upsilon_{\tau}}_{\tau} x'_{\tau}=\xi_{x_{\tau}\upsilon_{\tau}}(\tau)$} $\PP$-a.s. in $S_{\tau}(\Sigma)$. It follows from the $\delta$-ISS-M$_q$ assumption on $\Sigma$ and (\ref{b10}) that:
\begin{align}\label{b20}
\mathbb{E}[\Vert x'_{\tau}-\xi_{\ol{H}_\params(x_{\params})u_{\params}}(\tau)\Vert^q] \leq \beta(\EE[\Vert x_\tau-\ol{H}_\params(x_q)\Vert^q],\tau)+  \gamma(\|\upsilon_{\tau}-u_{\params}\|_\infty)\leq \beta(\varepsilon^q,\tau) + \gamma(\mu).
\end{align}
Existence of $u_\params$, by the definition of $\ol{S}_\params(\Sigma)$, implies the existence of $x_{\params}\rTo^{u_{\params}}_{\params}x'_{\params}$ in $\ol{S}_{\params}(\Sigma)$.
Using equation \eqref{eta}, the inequalities (\ref{mismatch1}), (\ref{bisim_cond2}), (\ref{b20}), and triangle inequality, we obtain
\begin{align*}
(\mathbb{E}[\Vert x'_{\tau}-\ol{H}_\params(x'_{\params})\Vert^q])^{\frac{1}{q}}&=(\mathbb{E}[\Vert x'_\tau-\xi_{\ol{H}_\params(x_{\params})u_{\params}}(\tau)+\xi_{\ol{H}_\params(x_{\params})u_{\params}}(\tau)-\ol{\xi}_{\ol{H}_\params(x_{\params})u_{\params}}(\tau)+\ol{\xi}_{\ol{H}_\params(x_{\params})u_{\params}}(\tau)-\ol{H}_\params(x'_\params)\Vert^q])^{\frac{1}{q}}\\ \notag
&\leq  (\mathbb{E}[\Vert x'_{\tau}-\xi_{\ol{H}_\params(x_{\params})u_{\params}}(\tau)\Vert^q])^{\frac{1}{q}}+(\mathbb{E}[\Vert\xi_{\ol{H}_\params(x_{\params})u_{\params}}(\tau)-\ol{\xi}_{\ol{H}_\params(x_{\params})u_{\params}}(\tau)\Vert^q])^{\frac{1}{q}}\\\notag&\hspace{1.1cm}+(\mathbb{E}[\Vert\ol{\xi}_{\ol{H}_\params(x_{\params})u_{\params}}(\tau)-\ol{H}_\params(x'_\params)\Vert^q])^{\frac{1}{q}}\\\notag&\leq(\beta(\varepsilon^q,\tau) + \gamma(\mu))^{\frac{1}{q}}+(h_{x_s}((N+1)\tau))^{\frac{1}{q}}+\eta\leq\varepsilon.
\end{align*}
Therefore, we conclude that \mbox{$\left(x'_{\tau},x'_{\params}\right)\in{R}$} and that condition (ii) in Definition \ref{APSR} holds.

Now we prove that $R^{-1}$ is an
$\varepsilon$-approximate simulation relation from $\ol{S}_{\params}(\Sigma)$ to $S_{\tau}(\Sigma)$.
Consider any \mbox{$\left(x_{\tau},x_{\params}\right)\in R$} (or equivalently \mbox{$\left(x_{\params},x_{\tau}\right)\in R^{-1}$}). Condition (i) in Definition \ref{APSR} is satisfied by the definition of $R$. Let us now show that condition (ii) in Definition \ref{APSR} holds.
Consider any \mbox{$u_{\params}\in U_{\params}$}. Choose the input \mbox{$\upsilon_{\tau}=u_\params$} and consider \mbox{$x'_{\tau}=\xi_{x_{\tau}\upsilon_{\tau}}(\tau)$ $\PP$-a.s. in $S_{\tau}(\Sigma)$}.
Since $\Sigma$ is \mbox{$\delta$-ISS-M$_q$}, one obtains:
\begin{equation}
\mathbb{E}[\Vert x'_{\tau}-\xi_{\ol{H}_\params(x_{\params})u_{\params}}(\tau)\Vert^q]\leq \beta(\EE[\Vert x_{\tau}-\ol{H}_\params(x_\params)\Vert^q],\tau)\leq\beta(\varepsilon^q,\tau).\label{b30}%
\end{equation}
Using definition of $\ol{S}_\params(\Sigma)$, equation \eqref{eta}, the inequalities \eqref{mismatch1}, (\ref{bisim_cond2}), (\ref{b30}), and the triangle inequality, we obtain
\begin{align*}
(\mathbb{E}[\Vert x'_{\tau}-\ol{H}_\params(x'_{\params})\Vert^q])^{\frac{1}{q}}&=(\mathbb{E}[\Vert x'_\tau-\xi_{\ol{H}_\params(x_{\params})u_{\params}}(\tau)+\xi_{\ol{H}_\params(x_{\params})u_{\params}}(\tau)-\ol{\xi}_{\ol{H}_\params(x_{\params})u_{\params}}(\tau)+\ol{\xi}_{\ol{H}_\params(x_{\params})u_{\params}}(\tau)-\ol{H}_\params(x'_\params)\Vert^q])^{\frac{1}{q}}\\ \notag
&\leq  (\mathbb{E}[\Vert x'_{\tau}-\xi_{\ol{H}_\params(x_{\params})u_{\params}}(\tau)\Vert^q])^{\frac{1}{q}}+(\mathbb{E}[\Vert\xi_{\ol{H}_\params(x_{\params})u_{\params}}(\tau)-\ol{\xi}_{\ol{H}_\params(x_{\params})u_{\params}}(\tau)\Vert^q])^{\frac{1}{q}}\\\notag&\hspace{1.1cm}+(\mathbb{E}[\Vert\ol{\xi}_{\ol{H}_\params(x_{\params})u_{\params}}(\tau)-\ol{H}_\params(x'_\params)\Vert^q])^{\frac{1}{q}}\\\notag&\leq(\beta(\varepsilon^q,\tau))^{\frac{1}{q}}+(h_{x_s}((N+1)\tau))^{\frac{1}{q}}+\eta\leq\varepsilon.
\end{align*}
Therefore, we conclude that \mbox{$(x'_{\tau},x'_{\params})\in{R}$} (or equivalently \mbox{$\left(x'_{\params},x'_{\tau}\right)\in R^{-1}$}) and condition (ii) in Definition \ref{APSR} holds.
\end{proof}

\begin{proof}{of Theorem \ref{thm:number.samples}:}
  Denote $\hat\theta := \theta - r/2>0$,
  and $\mathbf{d}_M(a):=\left(\frac1M \sum\limits_{i=1}^M \|\xi^i_{x_s x_\q}-a\|^q\right)^{\frac1q}$ for all $a\in \R^n$.
  It follows from \cite[Theorem 4.5.4]{kloeden} that for all $p\geq 1$ and $a\in \R^n$
  \begin{equation*}
    \EE\left[\|\xi_{x_sx_\q}(N\tau) - a\|^p\right] \leq b(a, p).
  \end{equation*}
  Hence, by Chernoff's inequality for any $a'\in A^r$ we obtain:
  \begin{equation*}
    \PP\left(\left|\left(\mathbf{d}(H_\q(x_\q),a')\right)^q - (\mathbf{d}_M(a'))^q\right|\geq \hat\theta\right)
    \leq \frac{b(a', 2q)}{M\hat\theta^2}.
  \end{equation*}
  Furthermore, since $x\mapsto x^q$ is H\"older continuous with power $q$,  
  \begin{equation*}
    \PP\left(\left|\mathbf{d}(H_\q(x_\q),a') - \mathbf{d}_M(a')\right|\geq \hat\theta\right)
    \leq \frac{b(a', 2q)}{M\hat\theta^{2q}}.
  \end{equation*}
  Thus, for the union of such events over $a'\in A^r$, we have
  \begin{align}\label{eq:proof.1}
    &\PP\left(\exists a'\in A^r \text{ s.t. }\left|\mathbf{d}(H_\q(x_\q),a') - \mathbf{d}_M(a')\right|\geq \hat\theta\right)\leq \frac{|A^r|b(a^*, 2q)}{M\hat\theta^{2q}},
  \end{align}
  due to the fact that the probability of a union is dominated by the sum of probabilities.
  Let $[\cdot]:A\to A^r$ be any surjective map such that $\left\Vert a-[a]\right\Vert\leq r/2$ for all $a\in A$, i.e. $[\cdot]$ chooses an $r/2$-close point in the grid $A^r$.
  Using this map,
  we can extrapolate the inequality \eqref{eq:proof.1} to the whole set $A$ since
  \begin{align*}
    \left\vert\mathbf{d}(H_\q(x_\q),a) - \mathbf{d}_M([a])\right\vert &\leq \left\vert\mathbf{d}(H_\q(x_\q),a) - \mathbf{d}(H_\q(x_\q),[a])\right\vert + \left\vert\mathbf{d}(H_\q(x_\q),[a]) - \mathbf{d}_M([a])\right\vert
    \\
    &\leq r/2 +\left\vert\mathbf{d}(H_\q(x_\q),[a]) - \mathbf{d}_M([a])\right\vert,
  \end{align*}
  where we used the fact that $|\mathbf{d}(H_\q(x_\q),a) - \mathbf{d}(H_\q(x_\q),[a])|\leq\|a - [a]\|$ by the triangle inequality.
  As a result, the following inequality holds:
  \begin{equation}\label{eq:proof.2}
  \begin{split}
    \PP\left(\exists a\in A \text{ s.t. }\left\vert\mathbf{d}(H_\q(x_\q),a) - \mathbf{d}_M([a])\right\vert\geq \theta\right)
    \leq \PP\left(\exists a'\in A^r\text{ s.t. }\left\vert\mathbf{d}\left(H_\q(x_\q),a'\right) - \mathbf{d}_M\left(a'\right)\right\vert\geq \hat\theta\right).
  \end{split}
  \end{equation}
  On the other hand,
  since for any two functions $f,g:A\to \R$ it holds that
  \begin{equation*}
    \left\vert\inf_{a\in A}f(a) - \inf_{a\in A}g(a)\right\vert\leq\sup_{a\in A}|f(a) - g(a)|,
  \end{equation*}
  we obtain that
  \begin{align*}
    &\PP\left(\left\vert\mathbf{d}(H_\q(x_\q),A) - \mathbf{d}^r_M\right\vert\geq \theta\right)\leq \PP\left(\exists a\in A \text{ s.t. }\left\vert\mathbf{d}(H_\q(x_\q),a) - \mathbf{d}_M([a])\right\vert\geq \theta\right).
  \end{align*}
  Combining the latter inequality with \eqref{eq:proof.1} and \eqref{eq:proof.2} yields:
  \begin{equation*}
    \PP\left(\left\vert\mathbf{d}(H_\q(x_\q),A) - \mathbf{d}^r_M\right\vert\geq \theta\right) \leq \frac{|A^r|b(a^*, 2q)}{M\hat\theta^{2q}},
  \end{equation*}
  and in case $M$ satisfies the assumption of the theorem,
  the right-hand side is bounded above by $\pi$ as desired.
\end{proof}

\end{document}